\newif\ifshowcomments
\definecolor{orange}{RGB}{255,127,0}
\newcolumntype{H}{>{\setbox0=\hbox\bgroup}c<{\egroup}@{}}
\newcommand{\norm}[1]{\lVert #1 \rVert_2}
\newcommand{\RE}{\mathbb{R}}
\newcommand{\dom}{\mathcal{D}}
\DeclareMathOperator*{\rsp}{rowsp}
\DeclareMathOperator{\dev}{dev}
\newif\ifshowproofs
\let\oldbibliography\bibliography% Store \bibliography in \oldbibliography
\renewcommand{\bibliography}[1]{{%
		\let\chapter\section% Copy \section over \chapter
		\oldbibliography{#1}}}% Old \bibliography
\begin{document}

\title{Approximate Constrained Lumping}

\institute{
	Aalborg University, Denmark
	\and
    Universidad Politécnica de Madrid, Spain
	\and
	IMT School for Advanced Studies Lucca, Italy
	\and
	Sant'Anna School of Advanced Studies, Pisa, Italy
	\and
	DTU Technical University of Denmark
}

\author{
Alexander Leguizamon-Robayo \inst{1} \and
Antonio Jim\'{e}nez-Pastor \inst{2} \and
Micro Tribastone\inst{3}
Max Tschaikowski \inst{1} \and
Andrea Vandin\inst{4,5}
}

\title{Approximate Constrained Lumping of \\ Chemical Reaction Networks}

\allowdisplaybreaks[0]

\maketitle

\begin{abstract}
Gaining insights from realistic dynamical models of biochemical systems can be challenging given their large number of state variables. 
Model reduction techniques can mitigate this by decreasing complexity by mapping the model onto a lower-dimensional state space. Exact constrained lumping identifies reductions as linear combinations of the original state variables in systems of nonlinear ordinary differential equations, preserving specific user-defined output variables without error.
However, exact reductions can be too stringent in practice, as model parameters are often uncertain or imprecise---a particularly relevant problem for biochemical systems.

We propose approximate constrained lumping. It allows for a relaxation of exactness within a given tolerance parameter $\varepsilon$, while still working in polynomial time.
We prove that the accuracy, i.e., the difference between the output variables in the original and reduced model, is in the order of $\varepsilon$. 
Furthermore, we provide a heuristic algorithm to find the smallest $\varepsilon$ for a given maximum allowable size of the lumped system. 
Our method is applied to several models from the literature, resulting in coarser aggregations than exact lumping while still capturing 
the dynamics of the original system accurately.
\end{abstract}

\keywords{Approximate reduction $\cdot$ Dynamical systems $\cdot$ Biological models $\cdot$ Constrained lumping}

\section{Introduction}\label{sec_intro}

%Dynamical systems are a formalism that describes the time evolution of a system of more interdependent variables.
%They can model, e.g., biochemical systems~\cite{pnas17}, performance
%models~\cite{10.1007/978-3-540-88479-8_13,tribastone2010performance,DBLP:journals/tcs/TschaikowskiT14,gast2019size}, electrical circuits~\cite{DBLP:journals/nc/CardelliTT20}, and even neural networks~\cite{hasani2021closed}.
Biochemical models based on dynamical systems help discover mechanistic principles in living organisms and predict their behavior under unforeseen circumstances.
Realistic and accurate models, however, often require considerable detail that inevitably leads to large state spaces.
This hinders both human intelligibility and numerical/computational analysis.
For example, even the relatively primary mechanism of protein phosphorylation may yield a state space that grows combinatorial with the number of phosphorylation sites~\cite{FEBS:FEBS7027}.

As a general way to cope with large state spaces, model reduction aims to provide a lower-dimensional representation of the system under study that retains some dynamic properties of interest to the modeler.
In applications to computational systems biology, it is beneficial that the reduced state space keeps physical interpretability, mainly when the model is used to validate mechanistic hypotheses~\cite{10.1093/bioinformatics/btn035,21899762,Apri201216}.
There is a variety of approaches in this context, such as those exploiting time-scale separation properties~\cite{okino1998,DBLP:journals/tac/WhitbyCKLTT22}, quasi-steady-state approximation~\cite{SegelS89,radulescu2012reduction}, heuristic fitness functions~\cite{10.1145/3071178.3071265}, spatial regularity~\cite{DBLP:journals/pe/TschaikowskiT17}, sensitivity analysis~\cite{Snowden:2017aa}, and conservation analysis, which detects linear combinations of variables that are constant at all times~\cite{10.1093/bioinformatics/bti800}.

By \emph{lumping} one generally refers to a reduction method that provides a self-consistent system of dynamical equations comprised of a set of macro-variables, each given as a combination of the original ones~\cite{aokiOptimizationStochasticSystems1967, kemenyFiniteMarkovChains1960,okino1998,Snowden:2017aa}.
In linear lumping, this reduction is expressed as a linear transformation of the original state variables.
Since this can destroy physical intelligibility in general, \emph{constrained lumping} allows restricting to only part of the state space by defining linear combinations of state variables that ought to be preserved in the reduction~\cite{LI199195}.
Lumping techniques of Markov chains date back to the early nineties~\cite{Larsen19911}.
These were later extended to stochastic process algebra~\cite{10.1007/978-0-387-09680-3_18,10.1007/978-3-540-88479-8_13,10.1093/comjnl/bxr094,DBLP:conf/splc/Tribastone14,TSCHAIKOWSKI2014140}, and have been expanded more recently to efficient algorithmic approaches for stochastic chemical reaction networks~\cite{DBLP:journals/bioinformatics/CardelliPTTVW21} and deterministic models of biological systems based on ordinary differential equations (ODEs) with polynomial right-hand sides~\cite{CARDELLI2019132,pnas17,DBLP:conf/lics/CardelliTTV16}.

In these cases, the reduction is a specific kind of linear mapping induced by a partition (i.e., an equivalence relation) of state variables; in the aggregated system, each macro-variable represents the sum of the original variables of a partition block. 
For polynomial ODE systems, CLUE has been presented as an algorithm that computes constrained linear lumping efficiently, i.e., in polynomial time~\cite{ovchinnikov_clue_2021}, by avoiding the symbolic computation of the eigenvalues of a non-constant matrix that would have required in prior approaches~\cite{LiRabitz,LI199195}.
% \begin{wrapfigure}[9]{r}{0.38\textwidth}

	% \vspace{-0.75cm}
	% 	\hspace{-0.4cm}
 %    \includegraphics[width=0.4\textwidth]{}
	% \vspace{-0.9cm}	
	% \caption{Observable in original, reduced and approximated models.}\label{fig:introPlot}
% \end{wrapfigure}
In particular, CLUE can compute \emph{the smallest} linear dimensional reduction that preserves the dynamics of arbitrary linear combinations of original state variables given by the user. 
A complete step-by-step example of CLUE reduction on a synthetic 3-variables model is provided later (Exampe~\ref{ex:firstode}). 
To give a concrete example of the usefulness of CLUE applied to a real model, 
we consider a model of FceRI-like network of a cell-surface receptor~\cite{borisov_domain-oriented_2008}.
%\m{I don't understand this example. Later we say that the aforementioned approaches are `exact', but here we show some kind of approximate reduction. I am not sure this example clarifies anything because a) I don't see the model; b) I don't grasp how the reduction is performed.} For example, 
% Figure~\ref{fig:introPlot:exact}. 
% The blue line shows the time evolution of a quantity of interest (an \emph{observable})   
The \emph{observable} or quantity of interest is the amount of \emph{Phosphorylation at site Y2 (S2P)}, which
% The observable, \emph{Phosphorylation at Y2 (S2P)}, 
is the sum of 10 out of the 24 variables in the model.
Figure~\ref{fig:introPlot:exact}, displays the original simulation in blue and the orange line, instead, shows the solution of such observable on a model containing 19 variables obtained by reducing the former by CLUE. 
This plot shows that one can study the evolution of \emph{S2P} using either the original model or one reduced by constrained lumping with only 19 variables without introducing any error.
\begin{figure}[H]
    \centering
    \begin{subfigure}[t]{0.45\textwidth}
		\centering
    \includegraphics[width=\linewidth]{./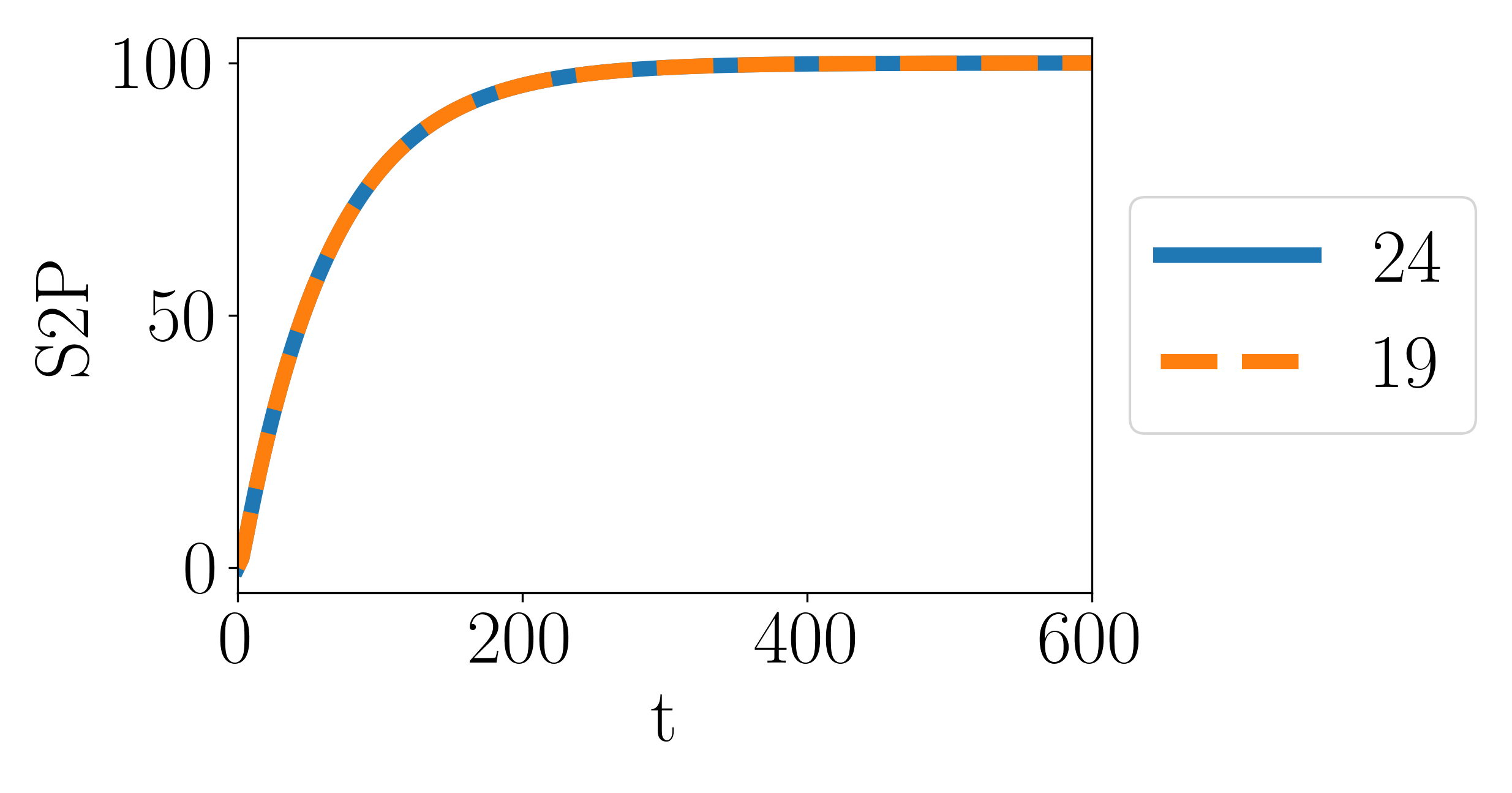}
		\caption{ Exactly reduced model by CLUE with 19 variables (orange).}\label{fig:introPlot:exact}
	\end{subfigure}
	\hspace{1cm}
 \begin{subfigure}[t]{0.45\textwidth}
		\centering
    \includegraphics[width=\linewidth]{./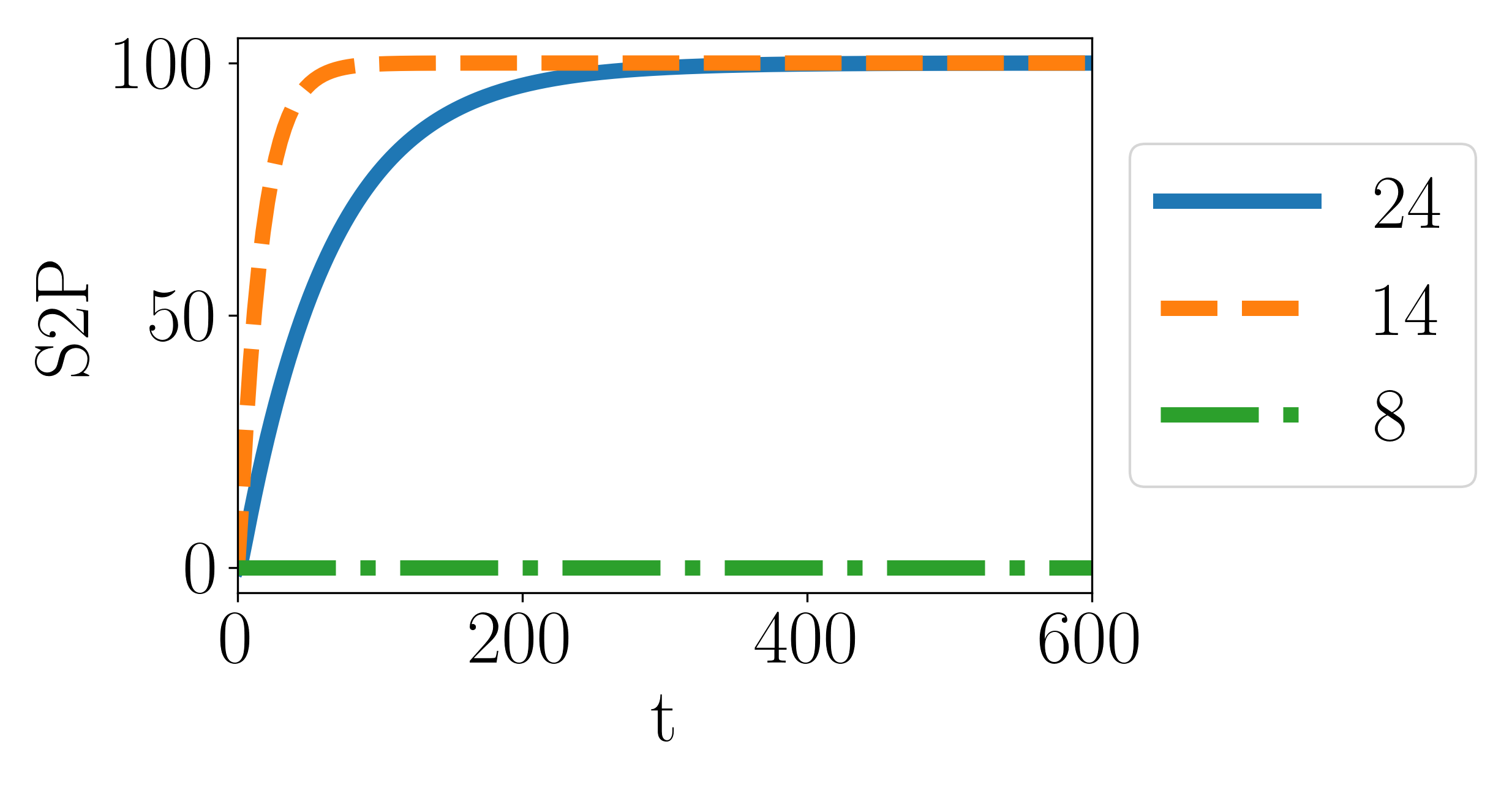}
		\caption{Approximately reduced model with 14 (orange) and 8 (green) variables.}
  \label{fig:introPlot:app}
	\end{subfigure}
	\caption{Evolution of observable S2P in a FceRI-like network of a cell-surface receptor~\cite{borisov_domain-oriented_2008} using different models.
    The simulation using the original model with 24 variables is displayed in blue.
 }  
  \label{fig:introPlot}
\end{figure}

All the aforementioned lumping approaches are  \emph{exact}, i.e., the reduced model does not incur any approximation error (but only loss of information because, in general, the aggregation map is not invertible).
Approximate lumping is a natural extension that has been studied for a long time (e.g.,~\cite{LI1990977}).
Indeed, although exact reduction methods have been experimentally proved successful in a large variety of biological systems (e.g.,~\cite{SBML}), approximate reductions can be more robust to parametric uncertainty---which notoriously affects systems biology models (e.g.,~\cite{doi:10.1098/rsif.2017.0237,DBLP:conf/cmsb/BarnatBBDHPS17})---and offer a flexible trade-off between the aggressiveness of the reduction and its precision.
This has been explored for partition-based lumping algorithms:
In~\cite{DBLP:conf/qest/CardelliTTV18,cardelliFormalLumpingPolynomial2023}, the authors present an algorithm for approximate aggregation parameterized by a tolerance $\varepsilon$, which, informally, relaxes an underlying criterion of equality for two variables to be exactly lumped into the same partition block. 
%\m{We could be asked to compare this algorithm and their algorithm on polynomial systems. Can we anticipate this issue showing that we can reduce more and more accurately?}
These approaches present notions of approximate reduction extending the above-mentioned exact reduction techniques for chemical reaction networks and ODEs. 
Similarly, in this paper, we present an extension of CLUE~\cite{ovchinnikov_clue_2021} that performs approximate constrained linear lumping of ODE systems. 
It relaxes the conditions for (exact) CLUE by introducing a \emph{lumping tolerance} parameter that roughly relates to how close a lumping matrix is to an exact lumping. 
We consider ODE systems with analytical derivatives.
These include \textit{rational} and \textit{polynomial} derivatives thus covering kinetic models for biochemical systems with Hill~\cite{Voit:2013aa}, Michaelis-Menten~\cite{Voit:2013aa} and mass-action kinetics (e.g.,~\cite{Voit:2013aa,DBLP:journals/biosystems/TrojakSPB23}). 
%This makes the approach applicable to a variety of systems including chemical reaction networks, pharmacokinetic models, and population dynamics in epidemiology~\cite{pnas17}.

To obtain an approximate reduction for a given analytical ODE system, we start by finding a finite representation of the dynamics, either symbolically or using automatic differentiation~\cite{jimenez_clue_2022}.
Given a user-defined lumping tolerance, we can compute a reduced system that approximates the original one up to an error proportional to the lumping tolerance. 
For example, the orange line in Figure~\ref{fig:introPlot:app} shows that we can study the evolution of $S2P$ using an approximate reduction with 14 variables that yield the same steady-state value as in the original model. However, a too-permissive lumping tolerance may destroy the original dynamics, as shown by the approximate reduction with 8 variables (bottom green line).
% (bottom green line for an approximate reduction with 8 variables).
%\m{Either provide some more insights into the method, or kill the sentences above?}

Our proposed approach works in polynomial time. 
To find the lumping tolerance, we propose a heuristic approach based on the expected size of the reduced model.
%\m{See comment above} 
% 
Using a prototype implementation, we evaluate the aggregation power of our approach on five polynomial models and three rational models representative of the literature.
We also evaluate the scalability of our approach on a multisite phosphorylation model~\cite{sneddon2011efficient}.
Overall, the numerical results show that our method can lead to substantially smaller reduced models while introducing limited errors in the dynamics.

This paper extends the conference paper~\cite{leguizamon-robayo_approximate_2023} by expanding the theory to include analytical drifts; including models with rational drifts; adding a scalability analysis; and proposing a new approach to find the lumping tolerance.
A short tool demonstration paper discussing implementation details of our approach has been recently presented in~\cite{cmsb2024_clue}.

\smallskip

\noindent\emph{Further related work.} 
%\paragraph{\textbf{Further related work.}} 
We are complementary to classic works~\cite{LI1990977,LI199195} which do not address the efficient algorithmic computation of lumping. Moreover, we are more general than~\cite{iacobelli_lumpability_2013,tac15,DBLP:conf/qest/CardelliTTV18} which consider so-called ``proper'' lumping (i.e.,~\cite{Snowden:2017aa}) where each state variable appears in exactly one aggregated variable. %The algorithm takes as input a numerical tolerance and computes an approximate constrained lumping whose error is guaranteed to be proportional to the lumping tolerance. %The estimation rests upon Gronwall's inequality that has been used in~\cite{}.
Likewise, we are complementary to rule-based reduction techniques~\cite{Feret21042009} which are independent of kinetic parameters~\cite{concur15}. 
As less closely related abstraction techniques, we mention (bisimulation) distance approaches for the approximate reduction of Markov chains~\cite{DBLP:conf/tacas/BacciBLM13,DBLP:conf/concur/DacaHKP16} and proper orthogonal projection~\cite{antoulas}, which, however, apply to linear systems. Abstraction of chemical reaction networks through learning~\cite{DBLP:journals/iandc/RepinP21,DBLP:conf/cmsb/CairoliCB21} and simulation~\cite{DBLP:conf/cmsb/HelfrichCKM22} are complementary to lumping. % methods.

\smallskip

\noindent\emph{Outline.} 
Section~\ref{sec_pre} provides necessary preliminary notions.
Section~\ref{sec_theory} introduces approximate constrained lumping, how to compute it, and how to choose the lumping tolerance value.
Section~\ref{sec_eval} evaluates our proposal on models from the literature, while Section~\ref{sec_conc} concludes the paper.

\smallskip

\noindent\emph{Notation.} 
For a function $f$, we denote its domain by $\dom(f)$.
The derivative with respect to time of a function $x : [0,T] \to \RE^m$ is denoted by $\dot{x}$.
We denote a dynamical system by $\dot{x} = f(x)$, and denote its initial condition by $x(0)=x^{0}$.
For $f: \mathbb{R}^{m} \to \mathbb{R}^{n}$,  we denote the Jacobian of $f$ at $x$ by $J(x)$.
Given a matrix $L \in \mathbb{R}^{m\times n} $, the rowspace of $L$ or the vector space generated by the rows of $L$ is denoted by $\rsp(L)$.
We denote by $\bar{L} \in \mathbb{R}^{n \times m}$ a  pseudoinverse of $L$ (i.e. $L\bar{L}= I_{n}$ where $I_{n} \in \mathbb{R}^{n\times n}$ is the identity matrix).
We reserve the term \textit{rational} for models with a non-trivial denominator on their right-hand side.

% make figures larger 
% make algorith

\section{Preliminaries}\label{sec_pre}

In this paper, we study systems of ODEs of the form:
\begin{equation}
 \dot{x}  = f(x), \label{eq:ode}
\end{equation}
where $f: \mathbb{R}^{m} \to \mathbb{R}^{m}, x\mapsto (f_{1}(x), \dots, f_{m}(x))^{T}$ and $f_{i}$ is an analytic function,  for $i=1,\dots, m$.

We first recall the notion of (exact) lumping~\cite{tomlin_effect_1997}. 
Informally, it is defined as a mapping $L$ that leads to a self-consistent system of ODEs in the reduced state space.
\begin{definition}\label{def:lump}
 Given a system of ODEs of the form given by Equation~\eqref{eq:ode}, and a full rank matrix $L\in \mathbb{R}^{l\times m}$ with $l< m$,
 we say that $L$ is an \emph{exact lumping of dimension $l$} (or that the system is \emph{exactly lumpable} by $L$) if there exists a function $g: \mathbb{R}^{l} \to \mathbb{R}^{l}$ with polynomial entries such that $Lf = g\circ L$.
\end{definition}

\noindent \begin{example}%~\textbf{\ref{ex:firstode}}
 \label{ex:firstode}
 Consider the following system
 \begin{equation} \label{eq:example1}
  \dot{x}_{1} = \frac {x_{2}^{2} +4x_{2}x_{3} +4x_{3}^{2}}{x_{1}^{2} + 1 },\qquad
  \dot{x}_{2} = \frac{2x_{1}-4x_{3}}{x_{2}+2x_{3} + 1},\qquad
  \dot{x}_{3} = \frac{-x_{1}-x_{2}}{x_{2}+2x_{3}+1}.
 \end{equation}
 Then, the matrix 
 \begin{equation*}
     L = \begin{pmatrix}
         1 & 0 & 0\\
         0 & 1 & 2
     \end{pmatrix}
 \end{equation*}
 % $jkL = (1 \ 0\ 0, 0\ 1\  2)^{T}$ \m{put it in equation mode as a matrix for clarity, to show the two rows explicitly.}
  is an exact lumping of dimension 2, since
 \begin{equation*}
  \begin{pmatrix}
   \dot{y}_{1} \\
   \dot{y}_{2}
  \end{pmatrix}
  =
  \begin{pmatrix}
   \dot{x}_{1} \\
   \dot{x}_{2} +2 \dot{x}_{3}
  \end{pmatrix}
  =
  \begin{pmatrix}
   \frac{(x_{2}+2x_{3})^{2}}{x_{1}^{2}+1} \\
   \frac{-2x_{2}-4x_{3}}{x_{2}+2x_{3}}
   %\frac{-4x_{1}^{2} -2x_{1}(x_{2}+2x_{3})-x_{1} +(x_{2}+2x_{3})}{2x_{1}+x_{2}+2x_{3}}
   %-4*x1^2 - 2*x1*x2 - 4*x1*x3 - x1 + x2 + 2*x3
  \end{pmatrix}
  %\begin{pmatrix}
  %\dot{y}_{1} \\
  %\dot{y}_{2}
  %\end{pmatrix}
  =
  \begin{pmatrix}
   \frac{y_{2}^{2}}{y_{1}^{2}+1} \\
   \frac{-2y_{2}}{y_{2}+1}
   %\frac{-4y^{1}_{1} -2y_{1}y_{2}-y_{1}+y_{2}}{2y_{1}+y_{2}}
  \end{pmatrix}
  =
  \begin{pmatrix}
   g_{1}(y_{1},y_{2}) \\
   g_{2}(y_{1},y_{2})
  \end{pmatrix}.
 \end{equation*}
\end{example}

\begin{definition}\label{def:constlump}
 Given a system of ODEs of the form given by Equation~\eqref{eq:ode}, and an exact lumping $L\in \mathbb{R}^{l\times m}$, we say that $y= Lx$ are the \emph{reduced (or lumped) variables}, 	and their evolution is given by the \emph{reduced system} $ \dot{y}= g(y)$.
\end{definition}

Given an initial condition $x^{0} \in \mathbb{R}^n$ and an exact lumping $L$, there is a corresponding initial condition $y^{0}$ in the lumped variables given by $y^{0} = L x^{0}$.
Similarly, since $y=Lx$ and $g\circ L = Lf$, we have that \[L\dot{x} = Lf(x)= g(Lx)=g(y)=\dot{y}\ .\]
This means that we can study the evolution of the lumped variables $y(t)$ by solving the (smaller) reduced system rather than the (larger) original one.

To answer whether it is possible to recover the evolution of some linear combination of state variables., we introduce the notion of constrained lumping.

\begin{definition}
 Let $x_{obs}= Mx$ for some matrix $M \in \mathbb{R}^{p\times m}$, for $p< m$.
 We say that a lumping $L$  is a \emph{constrained lumping} with \emph{observables} $x_{obs}$ if  $\rsp(M) \subseteq \rsp(L)$.
 This means that each entry of $x_{obs}$ is a linear combination of the reduced variables $y$. 	
\end{definition}

\begin{example}
 Suppose we are interested in observing the quantity $2x_{1}+x_{2}+2x_{3}$ where the evolution is given by the system~\eqref{eq:example1}.
 In this case $x_{obs}= Mx$ with $M = ( 2\ 1\ 2)$.
 We can see that  $L$ from Example \ref{ex:firstode} is a constrained lumping as we can recover the observable from the reduced system, i.e., $ x_{obs}=2y_{1}+y_{2}$.
 Suppose now that we want to observe the quantity $x_{1}+x_{2}+x_{3}$.
 In this case $M = ( 1\ 1\ 1)$, thus matrix $L$, is not a constrained lumping as it is not possible to obtain $x_{obs}$ as a linear combination of $y_{1}$ and $y_{2}$.
\end{example}

To understand how a constrained lumping can be computed, we first review the following known characterization of lumping.

\begin{theorem}[Characterization of Exact Lumping~\cite{tomlin_effect_1997}]\label{thm:lumping}
 Given a system of $m$ ODEs of the form given by Equation~\eqref{eq:ode} and a matrix $L\in \mathbb{R}^{l\times m}$ with rank $l$, the following are equivalent.
 \begin{enumerate}
  \item The system is exactly lumpable by $L$.
        \item\label{thm:lumping:inverse} For any pseudoinverse $\bar{L}$ of $L$, $Lf = (Lf) \circ \bar{L}L$.
        \item\label{thm:lumping:invariant} $\rsp(L J(x)) \subseteq \rsp(L)$  for all $x \in \RE^m$; that is, the row space of $L$ is invariant under $J(x)$ for all $x\in \mathbb{R}^{m}$, where $J(x)$ is the Jacobian of $f$ at $x$.
 \end{enumerate}
\end{theorem}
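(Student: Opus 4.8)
The plan is to prove the cycle of implications $(1)\Rightarrow(2)\Rightarrow(3)\Rightarrow(1)$. The first two implications are essentially linear algebra plus the chain rule; all of the analytic content sits in $(3)\Rightarrow(1)$, which is where I expect the real work to be.

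For $(1)\Rightarrow(2)$: assume $Lf=g\circ L$. For any pseudoinverse $\bar L$ (so $L\bar L=I_l$) one has $L\bar L L=(L\bar L)L=L$, hence as functions $(Lf)\circ\bar L L=g\circ(L\bar L L)=g\circ L=Lf$, and this holds for \emph{every} pseudoinverse, as $(2)$ requires. For $(2)\Rightarrow(3)$: differentiate the identity $Lf=(Lf)\circ\bar L L$ at an arbitrary $x$; since $L$ and $\bar L L$ are linear, the chain rule gives $LJ(x)=\big(LJ(\bar L L x)\big)\,\bar L L$, so every row of $LJ(x)$ lies in $\rsp(\bar L L)$. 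From $L\bar L L=L$ we get $\rsp(L)\subseteq\rsp(\bar L L)$, while $\rank(\bar L L)\le\rank(L)=l$ forces equality, so $\rsp(\bar L L)=\rsp(L)$ and therefore $\rsp(LJ(x))\subseteq\rsp(L)$ for all $x$.

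For $(3)\Rightarrow(1)$, the key move is to rephrase the row-space condition in terms of $\ker L$. Because $\ker L=\rsp(L)^{\perp}$ (and, in finite dimension, $(\rsp L)=(\ker L)^{\perp}$), the inclusion $\rsp(LJ(x))\subseteq\rsp(L)$ is equivalent to $LJ(x)\,v=0$ for all $v\in\ker L$ and all $x\in\RE^{m}$, i.e.\ the directional derivatives of the analytic map $x\mapsto Lf(x)$ along directions in $\ker L$ vanish identically. I then integrate: given $x_0,x_1$ with $Lx_0=Lx_1$, set $v:=x_1-x_0\in\ker L$ and $\phi(s):=Lf(x_0+sv)$ on $[0,1]$; then $\phi'(s)=LJ(x_0+sv)\,v=0$, so $\phi(0)=\phi(1)$, that is $Lf(x_0)=Lf(x_1)$. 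Hence $Lf$ depends on $x$ only through $Lx$, and since $L$ has full row rank it is onto $\RE^{l}$, so $g:=(Lf)\circ\bar L$ (for any pseudoinverse $\bar L$) is a well-defined analytic function with $g\circ L=(Lf)\circ(\bar L L)=Lf$, the last equality using once more that $x$ and $\bar L L x$ lie in the same coset of $\ker L$. This yields exact lumpability.

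The main obstacle is precisely this last implication — specifically the step that upgrades ``the row space of $L$ is invariant under $J(x)$'' to ``$Lf$ is constant on the cosets of $\ker L$.'' It rests on the orthogonal-complement identity $\ker L=\rsp(L)^{\perp}$ and on the convexity of $\RE^{m}$, which is what lets a pointwise-vanishing directional derivative be integrated into a \emph{global} constancy statement rather than a merely local one; without a path argument of this kind the row-space condition alone would only give a local factorization. Everything else — the two easy implications and the verification that $g$ inherits analyticity from $f$ — is routine bookkeeping with pseudoinverses and the chain rule.
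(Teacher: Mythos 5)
Your proof is correct. Note, however, that the paper itself does not prove this theorem: it is imported from the literature (Tomlin et al.\ / Li--Rabitz), so there is no in-paper argument to compare against. Your cycle $(1)\Rightarrow(2)\Rightarrow(3)\Rightarrow(1)$ is the standard one for this characterization, and the only non-formal step is exactly the one you single out: rewriting $\rsp(LJ(x))\subseteq\rsp(L)$ as $LJ(x)v=0$ for all $v\in\ker L$, integrating $\phi(s)=Lf(x_0+s(x_1-x_0))$ along the segment joining two points of the same coset of $\ker L$, and then setting $g=(Lf)\circ\bar L$, which satisfies $g\circ L=(Lf)\circ\bar LL=Lf$ because $x$ and $\bar LLx$ lie in the same coset. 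Two small caveats are worth flagging. First, your segment argument uses that the set where $f$ is defined (and analytic) is convex; this is fine under the theorem's literal hypothesis that $f$ is analytic on all of $\RE^m$, but the paper applies the result to rational drifts whose domain excludes the zero sets of denominators (cf.\ Example~\ref{ex:firstode} and the phrase ``where $J(x)$ is well-defined'' in Theorem~\ref{thm:repJ}); in that setting the implication $(3)\Rightarrow(1)$ additionally needs the relevant domain, or at least its intersection with each coset of $\ker L$, to be connected. This is a gap in the statement's bookkeeping rather than in your argument, but it is the one place where the proof could genuinely fail if taken beyond the stated hypotheses. Second, Definition~\ref{def:lump} asks for $g$ with polynomial entries, while your $g=(Lf)\circ\bar L$ is only as regular as $f$ (polynomial, rational, or analytic accordingly); since the paper's own Example~\ref{ex:firstode} exhibits a rational $g$, this is evidently what is intended, and your construction is the right one.
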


The characterization of exact lumping in Theorem~\ref{thm:lumping} provides us with a way to compute a constrained lumping $L$.
This is because, thanks to point~\ref{thm:lumping:invariant}, the problem of computing a lumping is equivalent to the problem of finding a $J(x)$-invariant subspace of $\mathbb{R}^{m}$ for all $x \in \mathbb{R}^{m}$.
However, $J(x)$ is a matrix whose entries are functions of $x$.
In other words, there is a different real-valued matrix for each $x \in \mathbb{R}^{m}$.
To circumvent this issue, we require the following result:

\begin{theorem}[{\cite[Lemma 1]{jimenez_clue_2022}}]\label{thm:repJ}
 Consider a system of ODEs of the form given by Equation~\eqref{eq:ode} and let $J(x)$ be the Jacobian matrix of $f$.
 Let $\mathcal{B}$ be any set of real-valued matrices spanning the $\mathbb{R}-$vector space $$\mathcal{V}_{J}:=\left< J(x) | x \in \mathbb{R}^{m} \text{ and $J(x)$ is well-defined}\right>.$$
 Then $L$ is a lumping of~\eqref{eq:ode} if and only if $\rsp(L)$ is invariant to each $J_i \in \mathcal{B}$.
\end{theorem}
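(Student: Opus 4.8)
The plan is to reduce the claim to the Jacobian characterisation of exact lumping in Theorem~\ref{thm:lumping}, point~\ref{thm:lumping:invariant}, and then to exploit the fact that the invariance condition occurring there is \emph{linear} in the Jacobian matrix, so that it suffices to test it on a spanning set of $\mathcal{V}_J$ rather than on the full $x$-parametrised family $\{J(x)\}$. By Theorem~\ref{thm:lumping}, $L$ is a lumping of~\eqref{eq:ode} if and only if $\rsp\bigl(L\,J(x)\bigr)\subseteq\rsp(L)$ for every $x$ at which $J(x)$ is well-defined; equivalently, $\rsp(L)$, regarded as a subspace of row vectors acted on the right, is invariant under each matrix $J(x)$.

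First I would introduce the set
\[
  \mathcal{W}_L := \bigl\{\, A \in \mathbb{R}^{m\times m} : \rsp(L\,A) \subseteq \rsp(L) \,\bigr\}
\]
of all constant real matrices leaving $\rsp(L)$ invariant, and verify that $\mathcal{W}_L$ is an $\mathbb{R}$-linear subspace of $\mathbb{R}^{m\times m}$: it contains the zero matrix, and for $A_1,A_2\in\mathcal{W}_L$ and $\lambda\in\mathbb{R}$, every row of $L(A_1+\lambda A_2)=LA_1+\lambda\,LA_2$ is a linear combination of rows of $LA_1$ and $LA_2$, hence lies in the subspace $\rsp(L)$, so $A_1+\lambda A_2\in\mathcal{W}_L$. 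This is the only step with any content, and it is elementary; the one thing to be careful about is that $\rsp(L)$ is a genuine linear subspace (which it is, being a row space), so that membership is closed under the linear operations.

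With $\mathcal{W}_L$ in hand the equivalence is a short chain. The condition ``$\rsp(L\,J(x))\subseteq\rsp(L)$ for all admissible $x$'' is by definition ``$\{J(x) : x\in\mathbb{R}^m,\ J(x)\text{ well-defined}\}\subseteq\mathcal{W}_L$''; since $\mathcal{W}_L$ is a linear subspace, this holds iff the span $\mathcal{V}_J$ of that family is contained in $\mathcal{W}_L$; and since $\mathcal{B}$ spans $\mathcal{V}_J$, this holds iff $\mathcal{B}\subseteq\mathcal{W}_L$, i.e.\ iff $\rsp(L\,J_i)\subseteq\rsp(L)$ for every $J_i\in\mathcal{B}$, i.e.\ iff $\rsp(L)$ is invariant to each $J_i$. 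Chaining this with the first paragraph gives the claim.

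The main point requiring care (rather than real difficulty) is the domain issue for analytic or rational drifts: Theorem~\ref{thm:lumping} is phrased for all $x\in\mathbb{R}^m$, whereas for rational $f$ the Jacobian is defined only off the zero set of the denominators, which is precisely the restriction built into the definition of $\mathcal{V}_J$. I would dispose of this by remarking that this domain is dense (the complement of a proper analytic subvariety), so that invariance of $\rsp(L)$ under $J(x)$ on it is the same condition as the one used in Theorem~\ref{thm:lumping}; alternatively, one reads Theorem~\ref{thm:lumping} with the identical ``well-defined'' proviso, after which the argument above goes through verbatim.
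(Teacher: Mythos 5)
The paper does not prove this statement itself---it imports it verbatim as \cite[Lemma~1]{jimenez_clue_2022}---so there is no internal proof to compare against; judged on its own, your argument is correct. The key observation that $\mathcal{W}_L=\{A\in\mathbb{R}^{m\times m}:\rsp(LA)\subseteq\rsp(L)\}$ is an $\mathbb{R}$-linear subspace, so that invariance need only be checked on a set spanning $\mathcal{V}_J$ (using $\mathcal{B}\subseteq\mathcal{V}_J$ in one direction and $\mathcal{V}_J=\langle\mathcal{B}\rangle$ in the other), combined with the Jacobian characterization of Theorem~\ref{thm:lumping} read with the same ``well-defined'' proviso for rational drifts, is exactly the standard argument underlying the cited lemma.
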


A set of matrices $\{J_1,\dots, J_N\}$ can be found analytically when $J(x)$ can be represented as
\begin{equation}\label{eq:jacrep}
 J(x) = \sum_{i=0}^{N}J_{i}\mu_{i}(x),
\end{equation}
where $\left\{\mu_{i}(x) \mid  0 \leq i \leq N \right\}$ is a set of analytic functions.
When working with polynomial systems, each of the entries of $J(x)$ is a polynomial and so each $\mu_i$ corresponds to the monomials of $J(x)$. 

For instance, for rational systems, $J(x)$ can be obtained symbolically by differentiating $f(x)$ and then multiplying by the minimum common denominator (mcd) of all entries of $J(x)$. 
In practice, this approach is computationally unfeasible since this symbolic approach results in an explosion of the number of monomials in $J(x)$.
To avoid any explicit symbolic computation, a representation $\{J_1,\dots, J_N\}$ can be obtained by sampling $f(x)$ at different points and using automatic differentiation~\cite[Section 3.4]{jimenez_clue_2022}. 
The general idea of this sampling procedure is formalized by Algorithm~\ref{alg:findJ}. 
In~\cite{jimenez_clue_2022}, the authors argue that a uniform sampling of $J(x)$ values finds almost surely $\mathcal{V}_{J}$.

 \begin{algorithm}[t]
  \caption{Computation of $\mathcal{V}_{J}$ from Theorem~\ref{thm:repJ}.}\label{alg:findJ}
  \begin{algorithmic}[1]
   \REQUIRE
   $f: \mathbb{R}^{m} \to \mathbb{R}^{m}$, and 
   a compact set $\Omega \subseteq \mathbb{R}^{m}$\\
   \STATE {\textbf{set} $N=0$}
   \REPEAT
   \STATE {\textbf{set} $N=N+1$}
   \STATE {\textbf{compute} $x_N$ sampling uniformly from $\Omega$ }
   \STATE {\textbf{compute} $J_N=J(x_N)$ using automatic differentiation}
   \UNTIL{$J_N \in \left< J_i | 1\leq i \leq N-1 \right>$}

   \RETURN {$\{J_1,\dots, J_{N-1}\}$}.
  \end{algorithmic}
 \end{algorithm}

  \begin{example}\label{ex:findJ} 
  % \m{I am not sure that this example is really useful? It is not fully worked out, so it requires some degree of imagination by the reader. Shall we use a much simpler example for this?} 
  Consider the system from  Example~\ref{ex:firstode}.
  Before finding a basis for $\mathcal{V}_{J}$ using the sample-based approach, we exemplify the symbolic approach to display its complexity.  
   It starts by %finding a basis for $\mathcal{V}_{J}$. %We first use the analytic approach to .
   %By 
   symbolically differentiating $f(x)$ to obtain $J(x)$:
   \begin{equation*}
       % ((-2*x1*x2^2 - 8*x1*x2*x3 - 8*x1*x3^2)/(x1^4 + 2*x1^2 + 1), (2*x2 + 4*x3)/(x1^2 + 1), (4*x2 + 8*x3)/(x1^2 + 1))
% (2/(x2 + 2*x3 + 1), (-2*x1 + 4*x3)/(x2^2 + 4*x2*x3 + 4*x3^2 + 2*x2 + 4*x3 + 1), (-4*x1 - 4*x2 - 4)/(x2^2 + 4*x2*x3 + 4*x3^2 + 2*x2 + 4*x3 + 1))
% ((-1)/(x2 + 2*x3 + 1), (x1 - 2*x3 - 1)/(x2^2 + 4*x2*x3 + 4*x3^2 + 2*x2 + 4*x3 + 1), (2*x1 + 2*x2)/(x2^2 + 4*x2*x3 + 4*x3^2 + 2*x2 + 4*x3 + 1))
J(x) = \begin{pmatrix}
\frac{-2x_1x_2^2 - 8x_1x_2x_3 - 8x_1x_3^2}{x_1^4 + 2x_1^2 + 1} & \frac{2x_2 + 4x_3}{x_1^2 + 1} & \frac{4x_2 + 8x_3}{x_1^2 + 1}\\
\frac{2}{x_2 + 2x_3 + 1} & \frac{-2x_1 + 4x_3}{x_2^2 + 4x_2x_3 + 4x_3^2 + 2x_2 + 4x_3 + 1} & \frac{-4x_1 - 4x_2 - 4}{x_2^2 + 4x_2x_3 + 4x_3^2 + 2x_2 + 4x_3 + 1}\\
\frac{-1}{x_2 + 2x_3 + 1} & \frac{x_1 - 2x_3 - 1}{x_2^2 + 4x_2x_3 + 4x_3^2 + 2x_2 + 4x_3 + 1} &\frac{2x_1 + 2x_2}{x_2^2 + 4x_2x_3 + 4x_3^2 + 2x_2 + 4x_3 + 1}
\end{pmatrix}.
   \end{equation*}
   We then find the mcd of all entries of $J(x)$, which is 
   $ q(x)= (x_{1}^{2} +1)^{2}(x_{2}+2x_{3}+1)^{2}$.
   %\m{AV: provide minimal details to explain that this stuff is difficult}
   %
    %minimum common denominator, 
   Since $J(x)$ is rational in all its entries, it follows that $J(x)= B(x)/q(x)$, where %minimum common denominator 
   $B(x) = q(x)J(x)$ is a matrix with polynomial entries up to degree 5.
   Notice that to obtain a monomial representation of $B(x)$, we need to expand all its terms.
   %[(5, (1, 5, 2)), (5, (3, 3, 2)), (5, (5, 2, 3)), (5, (2, 4, 3))]
   %

   These complex symbolic computations can be avoided by using Algorithm~\ref{alg:findJ}.
   Following the efficient sampling method outlined in~\cite{jimenez_clue_2022}, we find that the dimension of $\mathcal{V}_{J}$ is 5.
   The set $\{J_1,\dots,J_5\}$ can be obtained by evaluating $J(x)$ at the following points: $
    x_1 = (1, 0, 0),~x_2 = (0, 1, 0),~x_3 = (0, 0, 1),~x_4 = (1, 5, 2),~x_5 = (3, 3, 2).$
    % (1, 0, 0)), (1, (0, 1, 0)), (2, (0, 0, 1)), (5, (1, 5, 2)), (5, (3, 3, 2)), (5, (5, 2, 3)
   These points lead to the matrices $J_i$: 
   % \m{but there are six, why not show them all?}
   % [ 0  0  0  2 -2 -8 -1  0  2]
% [ 0.    2.    4.    1.    0.   -2.   -0.5  -0.25  0.5 ]
% [ 0.     4.     8.     0.667  0.444 -0.444 -0.333 -0.333  0.   ]
% [-40.5    9.    18.     0.2    0.06  -0.28  -0.1   -0.04   0.12]
% [-2.94   1.4    2.8    0.25   0.031 -0.438 -0.125 -0.031  0.188]
   \begin{align*}
    % \begin{split}
     J_1 &=
     \begin{pmatrix}
      0  & 0  & 0  \\
      2  & -2 & -8 \\
      -1 & 0  & 2
     \end{pmatrix},&
     J_2 &=
     \begin{pmatrix}
      0    & 2     & 4   \\
      1    & 0     & -2  \\
      -0.5 & -0.25 & 0.5
     \end{pmatrix},\\
     J_3 &=
     \begin{pmatrix}
      0      & 4      & 8      \\
      0.667  & 0.444  & -0.444 \\
      -0.333 & -0.333 & 0
     \end{pmatrix},&
     J_4 &=
     \begin{pmatrix}
      -40.5  & 9      & 18     \\
      0.200  & 0.060  & -0.280 \\
      -0.100 & -0.040 & 0.120
     \end{pmatrix},
     \\
     J_5 &= 
     \begin{pmatrix}
         -2.94 & 1.4 & 2.8 \\
         0.25 & 0.31 & -0.438\\
         -0.125 & -0.031 & 0.188
% [-2.94   1.4    2.8    0.25   0.031 -0.438 -0.125 -0.031  0.188]
     \end{pmatrix}.
    % \end{split}
    % \hspace{1cm}
    % \begin{split}
    % \end{split}
   \end{align*}
  \end{example}

 We see that Algorithm~\ref{alg:findJ} outputs a set of matrices $\{ J_1,\dots, J_N\}$
 spanning $\mathcal{V}_{J}$.
 Given such a set, Theorem~\ref{thm:repJ} provides an algorithmic way to find a constrained lumping $L$ by performing a finite number of checks over real-valued vectors.
 This is shown in Algorithm~\ref{alg:findL}.
 The use of Algorithms~\ref{alg:findJ} and~\ref{alg:findL} to obtain a constrained lumping $L$ is summarized in Algorithm~\ref{alg:clue}.

 \hspace{-0.65cm}
 \begin{minipage}[t]{0.47\textwidth}
  \begin{algorithm}[H]
   \caption{Computation of $L$}\label{alg:findL}
   \begin{algorithmic}[1]
    \REQUIRE
    a set of matrices $\left\{ J_1,\dots,J_N \right\}$ spanning $\mathcal{V}_{J}$ (Theorem \ref{thm:repJ});\\
    a $p \times m$ matrix $M$ with row rank $p$.\label{alg:findL:M}

    \STATE \textbf{set} $L := M$
    \REPEAT
    \FORALL{$1 \leq i \leq \kappa$ and rows $r$ of $L$}
    \STATE {\textbf{compute} $r J_i$ }
    \IF{$rJ_{i}\notin \rsp(L)$} \label{alg:findL:check}
    \STATE {\textbf{append} row $r J_i$ to $L$}\label{alg:findL:append}
    \ENDIF
    \ENDFOR
    \UNTIL{no rows are appended to $L$}

    \RETURN Lumping matrix $L$.
   \end{algorithmic}
  \end{algorithm}
  \vspace{0.01cm}
 \end{minipage}
 \hfill
 \begin{minipage}[t]{0.47\textwidth}
  \begin{algorithm}[H]
   \caption{Constrained lumping \cite{ovchinnikov_clue_2021} }\label{alg:clue}
   \begin{algorithmic}[1]
    \REQUIRE \!\!a\! system \!$\dot{x} \!=\!\! f(x)\!$ of $\!m\!$ \! ODEs;\\
    a $p \times m$ matrix $M$ with row rank $p$.\\

    %\STATE \textbf{compute} $J(x)$, the Jacobian 			of $f(x)$

    \STATE \textbf{compute} a set of matrices $\left\{ J_1,\dots,J_N \right\}$ spanning $\mathcal{V}_{J}$ (Algorithm~\ref{alg:findJ})\\
    \STATE \textbf{compute} $L$ (Algorithm \ref{alg:findL}) \label{alg:clue:findL}

    \RETURN Constrained lumped ODE system $\dot{y} = L f (\bar{L}y)$.

   \end{algorithmic}
  \end{algorithm}
  \vspace{0.01cm}
 \end{minipage}

 The number of checks in Algorithm~\ref{alg:findL} is proportional to the number of elements in $\mathcal{V}_{J}$.
 Moreover, Algorithm~\ref{alg:findL} can be implemented in polynomial time~\cite{ovchinnikov_clue_2021,jimenez_clue_2022}.

\section{Constrained Approximated Lumping}\label{sec_theory}

This section introduces the formal theory underlying approximate constrained lumping, as well as its computational aspects.
In Section~\ref{sec:theoerrest} we present the definition of approximate constrained lumping. 
We introduce the \emph{deviation tolerance} $\eta$, which relaxes the conditions for lumping granting additional aggregation power. 
We prove how the error bound depends on the deviation tolerance.
Next, we focus on how to compute an approximate constrained lumping.
In Section~\ref{sec:algacl} we introduce a numerical lumping tolerance $\varepsilon$ to compute a lumping matrix $L$ and show how it relates to the deviation tolerance. 
In Section~\ref{sec:besteps}, instead, we discuss how to pick an appropriate value for the lumping tolerance $\varepsilon$ in a way that leads to satisfactory results for both polynomial and rational systems.

\subsection{Definition and error estimation}\label{sec:theoerrest}
We begin by demonstrating why the condition for exact lumping can be too restrictive for finding reductions.
Consider the following system of ODEs:

\begin{equation} \label{eq:examplepert}
	\dot{x}_{1} = \frac {x_{2}^{2} +4.05x_{2}x_{3} +4x_{3}^{2}}{x_{1}^{2} + 1 },~
	\dot{x}_{2} = \frac{2x_{1}-4x_{3}}{x_{2}+2x_{3} + 1},~
	\dot{x}_{3} = \frac{-x_{1}-x_{2}}{x_{2}+2x_{3}+1}.
\end{equation}

Notice that the system given by Equation~\eqref{eq:examplepert} is similar to that of Equation~\eqref{eq:example1}, as we have just added a term $0.05x_2x_3$ to the numerator of its first equation. While the initial system is lumpable by $L = (1 \ 0\ 0, 0\ 1\  2)^{T}$, this new system is not exactly lumpable by  $L$.
However, we would like to know if it is still possible to use the matrix $L$ to obtain a \textit{useful} reduced system that approximates well the original one.
To do so, we first want to evaluate how \emph{close} the matrix $L$ is to (i.e., how much it \emph{deviates} from) an exact lumping.

By Theorem~\ref{thm:lumping}, Point \ref{thm:lumping:inverse}, a full rank matrix $L$ is a lumping if and only if the equality $Lf(\bar{L}Lx) - Lf(x) = 0$ is satisfied for all $x \in \dom (f) \subseteq \mathbb{R}^{m}$, where $\dom(f)$ denotes the domain of $f$.
In our proposal of approximate lumping, we relax this requirement by asking it to be satisfied up to a certain tolerance.

	\begin{definition}\label{def:deviation}
		Consider the system of ODEs in Equation~\eqref{eq:ode}.
		Let $L \in \mathbb{R}^{l\times m}$ be a full rank  matrix with $l< m$
		and denote by $\bar{L}$ the Moore-Penrose pseudoinverse of $L$ and
		by $P_L= \bar{L}L$  the orthogonal projection operator onto $\rsp(L)$.
		We define the \emph{deviation of $L$ at $x \in \dom(f) \cap \dom(f\circ P_L)$} by
		\begin{equation}
			\dev_{L}(f,x):= \norm{ L f(\bar{L}L x) - L f(x)} .
		\end{equation}
	\end{definition}
\begin{figure}
    \centering
    \includegraphics[width=0.4\linewidth]{./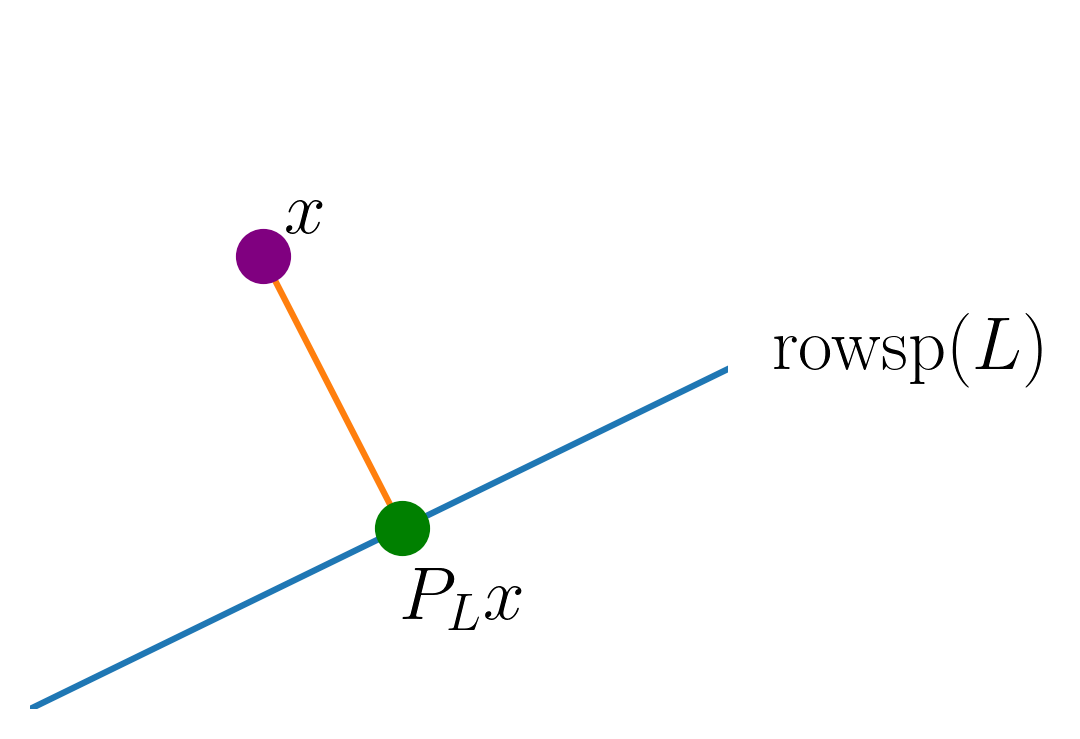}
	\caption{Points projected to $\rsp(L)$~\cite{leguizamon-robayo_approximate_2023}.}
	\label{fig:devexp}
\end{figure}

% \begin{wrapfigure}[7]{r}{0.3\textwidth}
%     \vspace{-1.35cm}
% 	\includegraphics[width=\linewidth]{./img/devexp.png}
% 	\vspace{-0.7cm}
% 	\caption{Points projected to $\rsp(L)$~\cite{leguizamon-robayo_approximate_2023}.}
% 	\label{fig:devexp}
% \end{wrapfigure}

To understand the intuition behind Definition \ref{def:deviation}, consider the point $x$ in Figure~\ref{fig:devexp}.
Then, the deviation computes the difference between the images under $L f$ of  $x$ (purple) and $P_L x$ (green).
The deviation is identically zero if and only if $L$ is a lumping.

\begin{example} \label{ex:dev}
	Consider the matrix $L$ given in Example \ref{ex:firstode}.
	A pseudoinverse	of $L$ is given by $ \bar{L}=(1\ 0\ 0,0\ 0.2\ 0.4)^{T}$.
	Writing $g$ for the corresponding vector field, we obtain that $dev_{L}(g,(1,1,1)^{T}) = 0$, as $L$ is an exact lumping.
	Now consider the system given by Equation~\eqref{eq:examplepert} and use $h$ to denote the corresponding vector field.
	Then, we get that $\dev_{L}(h,(1,1,1)^{T}) = 0.007$.
	Therefore, the matrix $L$ is not an exact lumping of Equation \eqref{eq:examplepert}.
\end{example}

From a modelling perspective, a reduced model is meaningful insofar as its predictions for a set of initial conditions of interest are \textit{close enough} to those of the original model throughout a given finite time horizon.
While the theory in~\cite{ovchinnikov_clue_2021,jimenez_clue_2022} guarantees that reduced models provide exact predictions, this might restrict the actual aggregation power of the technique.
We aim to relax this theory by considering reductions that need not be exact or tight.
Having this in mind, we introduce the following notion.

	\begin{definition}
		Consider the ODEs from Equation \eqref{eq:ode}, a set of initial conditions $S$, and a finite time horizon $T>0$ such that $x$ is well-defined on $[0,T]$ for all initial conditions $x(0)\in S$.
		Given a full rank  matrix $L \in \mathbb{R}^{l\times m}$ and $\eta >0$, we say that Equation \eqref{eq:ode} is \emph{approximately lumpable} by $L$ with \emph{deviation tolerance} $\eta$ if
		\begin{equation}
			\dev_{L}(f,x(t)) \leq \eta,
		\end{equation}
		for all $t \in [0,T]$.
		We say that $L$ is an \emph{approximate lumping} for the set $S$, time horizon $T$, and with deviation tolerance $\eta$ (or $(S,T,\eta)$-lumping).
		We will omit $S$, $T$ or $\eta$ whenever they can be inferred from the context.
	\end{definition}

\begin{remark}
	The notion of approximate lumping generalizes that of exact lumping from Definition~\ref{def:lump}.
	To see this, suppose $L$ is an exact lumping.
	Then by Point \ref{thm:lumping:inverse} of Theorem \ref{thm:lumping}, $\norm{ L f(\bar{L}L x) - L f(x)} = \dev_{L}(f,x) = 0 $, for all $x \in \mathbb{R}^{m}$.
	It follows that $L$ is a $(\mathbb{R}^m,\infty,0)$-lumping.
\end{remark}

	In practice, modellers often use biological models to study the evolution of a system $\dot{x}= f(x)$ for a fixed set $S$ of initial conditions and a %finite 
	time horizon $T$. The %underlying 
	assumption is that the behaviour of interest to the modeler occurs within time $T$.

% \begin{wrapfigure}[8]{r}{0.29\textwidth}
	% \vspace{-1.1cm}
 \begin{figure}
     \centering
	\includegraphics[width=0.4\linewidth]{./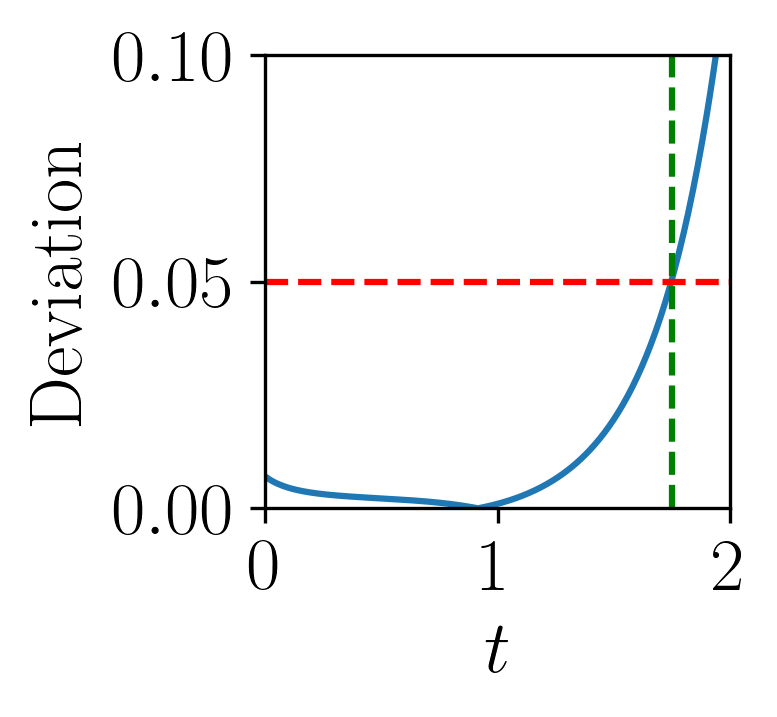}
	% \vspace{-0.85cm}
	\caption{Example \ref{ex:applump}: evolution of $\dev_{L}(f(x(t)))$.}
	\label{fig:ex:lumpscheme}
 \end{figure}
	% \centering
% \end{wrapfigure}

\begin{example} \label{ex:applump}
	Consider the system in Equation \eqref{eq:examplepert}, the matrix $L$ of Example \ref{ex:firstode} and let $x(0)= (1,1,1)^{T}$.
	Then $L$ is an approximate lumping for $x(0)$, time $T=1.75$, and deviation tolerance $0.05$, i.e., $L$ is a $(\{x(0)\}, 1.75, 0.05)$-lumping.
	To see this, note that in Figure \ref{fig:ex:lumpscheme} the deviation of the dynamics is bounded by $0.05$.
\end{example}

After having generalized the notion of lumping to approximate lumping, we next introduce approximate constrained lumping in an obvious manner.

\begin{definition} \label{def:constapplump}
	Let $x_{obs}= Mx$ for some matrix $M \in \mathbb{R}^{p\times m}$ with $p < m$.
	We say that an approximate lumping $L$ of Equation \eqref{eq:ode} is an \emph{approximate constrained lumping} with \emph{observables} $x_{obs}$ if $\rsp(M) \subseteq \rsp(L)$.
\end{definition}

Similarly to Definition \ref{def:constlump}, Definition \ref{def:constapplump} means that the observables $x_{obs}$ can be recovered as a linear combination of the reduced variables $y = Lx$.

\begin{definition}\label{def:redsys}
	The \emph{reduced system} induced by an approximate lumping $L$ is given by $\dot{y} = Lf(\bar{L}y)$, where $y\in \mathbb{R}^{l}$.
\end{definition}

\begin{example} \label{ex:appredsys}
	Consider the system in Equation \eqref{eq:examplepert} and the matrix $L$ given in Example \ref{ex:firstode}.
	We can compute the reduced system as follows
	\begin{equation*}
		\begin{pmatrix}
			\dot{y}_{1} \\
			\dot{y}_{2}
		\end{pmatrix}
		= L f \left(  \bar{L}        \begin{pmatrix}
			y_{1} \\
			y_{2}
		\end{pmatrix}
		\right)
		=
		L f \left(
		\begin{pmatrix}
				y_{1} & 0        \\
				0     & 0.2y_{2} \\
				0     & 0.4y_{2}
			\end{pmatrix} \right)
		=
		%L \begin{pmatrix}
		%1.004y_{2}^{2}    \\
		%2y_{1} - 1.6y_{2} \\
		%-y_{1}-0.2y_{2}
		%\end{pmatrix}=
		\begin{pmatrix}
			\frac{1.004y_{2}^{2}}{y_{1}^{2}+1} \\
			\frac{-2y_{2}}{y_{2}+1}
		\end{pmatrix}.
	\end{equation*}
\end{example}

\begin{definition}\label{def:error}
	Given the system in Equation~\eqref{eq:ode} and an approximate lumping $L$, we define the \emph{error} of the reduction by $ e(t):= y(t) - Lx(t),$ where $y$ is the solution of the reduced system given by Definition \ref{def:redsys}.
	The evolution of the approximation error is given by $\dot{e} = \dot{y} - L \dot{x}$, with $e(0) = 0$.
\end{definition}

%\begin{minipage}{0.46\textwidth}
\begin{example}
	Following Example \ref{ex:appredsys}, we compute the trajectories of the reduced and original systems.
	We also compute the $L_{2}$-error of the reduction.
	This is summarized in Figure \ref{fig:apperr}.
	Note that, in this example, we obtained a reduction of a system that was not exactly lumpable.
	In Figure \ref{fig:apperr:sim}, we can appreciate that, for the given time horizon, the reduced and original values are close to each other.
\end{example}

\begin{figure}[t]
	\centering
	\begin{subfigure}[t]{0.45\textwidth}
		\centering
		\includegraphics[width=0.8\textwidth]{./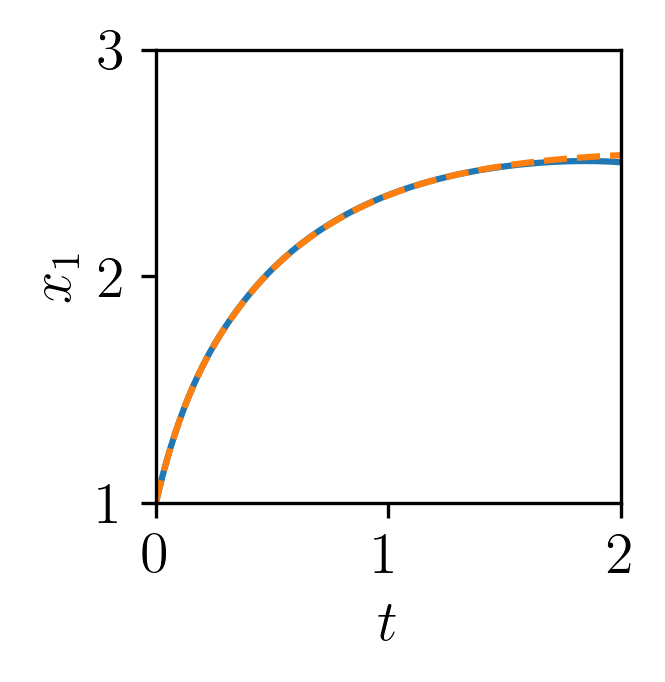}
		\caption{Evolution of the reduced (orange) and original (blue).}
		\label{fig:apperr:sim}
	\end{subfigure}
	%\hfill
	\hspace{1cm}
	\begin{subfigure}[t]{0.45\textwidth}
		\centering
		\includegraphics[width=0.9\textwidth]{./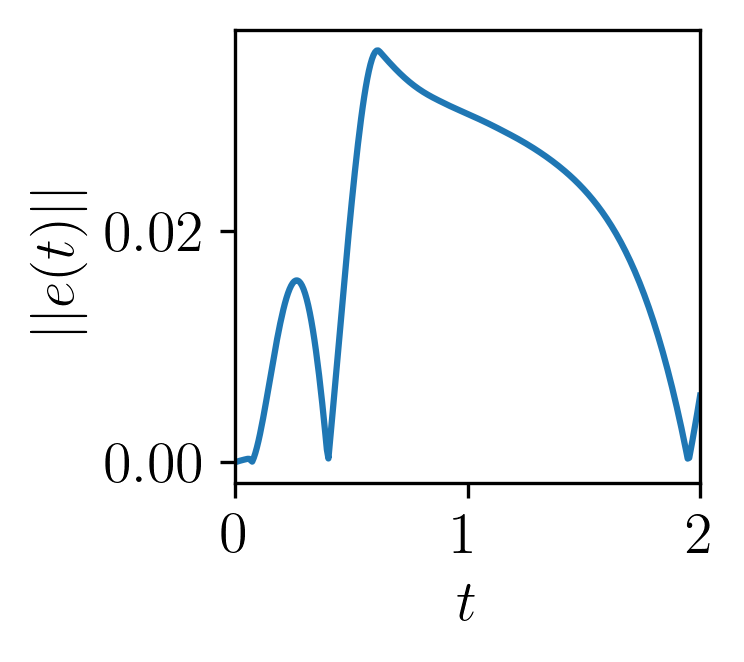}
		\caption{Error evolution $\norm{e(t)}$ }
	\end{subfigure}
	\caption{Reduced system and error computation for Example \ref{ex:appredsys} using the matrix $L$ of Example \ref{ex:firstode} on Equation \eqref{eq:examplepert}.}
	\label{fig:apperr}
\end{figure}

We next show that it is possible to bound the error introduced by approximate constrained lumping.
Even though our worst-case bound is often conservative in practice, the bound confirms the consistency of the approach: the error is of order $\mathcal{O}(\eta)$ --- that is, the actual error $e(t)$ decreases linearly with the deviation $\eta$.
Moreover, as we will see in Section~\ref{sec_eval}, our framework can find approximate lumpings with low errors for published biological models.
\begin{theorem}[Error Bounds]\label{thm:errbound}
	Fix a bounded set of initial conditions $S$, a finite time horizon $T$ and assume that the respective reachable set of the $m$-dimensional ODE system $\dot{x} = f(x)$ is bounded on $[0,T]$.
	If there exists a compact set $\Omega$ such that $x(t), \bar{L}Lx(t) \in \Omega$ for all $t \in [0,T]$ and all $x(0)\in S$ and $f$ is analytic in $\Omega$, then for any $\eta > 0$ for which $L$ is a $(S,T, \eta)-$ lumping, it holds that $\norm{e(t)} \leq \eta \cdot K_{C,L,T}$, where
	\[
		K_{C,L,T} = \frac{1}{C \norm{L} \norm{\bar{L}}} \left( e^{C \norm{L} \norm{\bar{L}}T}-1 \right).
	\]
	Here, $C$ is the Lipschitz constant of $f$ over the set of initial conditions $S$.
\end{theorem}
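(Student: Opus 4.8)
The plan is to derive and then integrate a differential inequality for $\norm{e(t)}$, using Gr\"onwall's lemma. First I would write down the dynamics of the error. By Definition~\ref{def:error}, $\dot{e} = \dot{y} - L\dot{x} = Lf(\bar{L}y) - Lf(x)$. The standard trick is to insert and subtract the ``projected'' term $Lf(\bar{L}Lx)$, giving
\[
\dot{e} = \bigl(Lf(\bar{L}y) - Lf(\bar{L}Lx)\bigr) + \bigl(Lf(\bar{L}Lx) - Lf(x)\bigr).
\]
The second bracket has norm exactly $\dev_L(f,x(t))$, which is $\le \eta$ by hypothesis since $L$ is an $(S,T,\eta)$-lumping. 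For the first bracket I would bound $\norm{Lf(\bar{L}y) - Lf(\bar{L}Lx)} \le \norm{L}\cdot C \cdot \norm{\bar{L}y - \bar{L}Lx} \le C\norm{L}\norm{\bar{L}}\cdot\norm{y - Lx} = C\norm{L}\norm{\bar{L}}\norm{e(t)}$, where $C$ is the Lipschitz constant of $f$ on $\Omega$ (here I need both $\bar{L}y$ and $\bar{L}Lx$ to lie in $\Omega$; the former requires a small argument that the reduced trajectory stays in the relevant compact set, or one restricts $\Omega$ appropriately — see below). Combining, $\norm{\dot e(t)} \le C\norm{L}\norm{\bar L}\,\norm{e(t)} + \eta$.

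Next I would pass from $\norm{\dot e}$ to an inequality for $\frac{d}{dt}\norm{e}$. Since $\frac{d}{dt}\norm{e(t)} \le \norm{\dot e(t)}$ (valid wherever $\norm{e}$ is differentiable; at non-smooth points one uses the upper Dini derivative, which satisfies the same bound), we get the scalar differential inequality
\[
\frac{d}{dt}\norm{e(t)} \le a\,\norm{e(t)} + \eta, \qquad a := C\norm{L}\norm{\bar{L}}, \qquad \norm{e(0)} = 0.
\]
Applying the integral form of Gr\"onwall's inequality (or simply comparing with the solution of $\dot u = au + \eta$, $u(0)=0$, namely $u(t) = \frac{\eta}{a}(e^{at}-1)$), we obtain $\norm{e(t)} \le \frac{\eta}{a}\left(e^{at}-1\right)$ for all $t\in[0,T]$. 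Substituting $a = C\norm{L}\norm{\bar L}$ gives precisely $\norm{e(t)}\le \eta\cdot K_{C,L,T}$ with the stated $K_{C,L,T}$, since the bound $\frac{1}{a}(e^{at}-1)$ is increasing in $t$ and hence maximized at $t=T$ over the horizon (one can either state the time-dependent bound or its value at $T$).

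The main obstacle — and the point that needs care rather than routine calculation — is the domain/compactness bookkeeping that makes the Lipschitz step legitimate: one must ensure that $\bar{L}y(t)$ (the argument fed into $f$ in the reduced system) stays inside the compact set $\Omega$ on which $f$ is analytic and therefore Lipschitz, alongside $x(t)$ and $\bar{L}Lx(t)$ which are in $\Omega$ by hypothesis. Since $y(t) = Lx(t) + e(t)$ and $e$ is a priori only controlled by the very bound we are proving, this is mildly circular; the clean fix is a continuity/bootstrap argument: let $t^\ast$ be the supremum of times for which $\bar L y$ stays in a slightly enlarged compact neighbourhood, run the Gr\"onwall estimate on $[0,t^\ast]$, and show the resulting bound is small enough (for $\eta$ small, or because $\Omega$ is chosen with margin) that $\bar L y$ cannot have left the neighbourhood, so $t^\ast = T$. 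Alternatively, since the theorem's hypothesis already posits a compact $\Omega$ containing the relevant trajectories, one may simply take $C$ to be the Lipschitz constant of $f$ on $\Omega$ and assume $\Omega$ was chosen large enough to contain $\bar L y(t)$ as well; the remaining steps are then entirely routine.
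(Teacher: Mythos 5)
Your proof follows essentially the same route as the paper: the identical insert-and-subtract decomposition $Lf(\bar{L}y)-Lf(\bar{L}Lx)+Lf(\bar{L}Lx)-Lf(x)$, the Lipschitz-plus-deviation bound $\norm{\dot e(t)}\le C\norm{L}\norm{\bar L}\,\norm{e(t)}+\eta$ (the paper's Proposition~\ref{prop:errderbound}), and then a Gr\"onwall-type lemma on the right/Dini derivative (the paper's Lemma~\ref{lem:groenwall}). Your closing remark about ensuring $\bar{L}y(t)$ itself lies in the compact set on which $f$ is Lipschitz is a genuine subtlety that the paper's proof passes over silently, so your bootstrap discussion only makes the same argument more careful.
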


\subsection{Lumping Algorithm}\label{sec:algacl}

In this section we relax the condition in Line~\ref{alg:findL:check} of Algorithm~\ref{alg:findL}, thus allowing us to find approximate reductions. 
Intuitively, we add a new row $rJ_{i}$ to the matrix $L$ only if it is \textit{far enough} from $\rsp(L)$.
We make this rigorous by fixing a lumping tolerance $\varepsilon$ and adding a row $rJ_{i}$ only if %the distance
$\norm{rJ_{i} - \pi_{i}} > \varepsilon$, where $\pi_{i}:=rJ_{i}P_{L}$ is the orthogonal projection of $rJ_{i}$ onto $\rsp(L)$.
Thus, we propose Algorithm \ref{alg:acl}.

\begin{algorithm}[t]
 \caption{Approximate Constrained Lumping Algorithm}\label{alg:acl}
 %\scriptsize
 \begin{algorithmic}[1]
  \REQUIRE numerical threshold $\varepsilon \geq 0$, and 
  a set of matrices $\left\{ J_1,\dots,J_N \right\}$ spanning $\mathcal{V}_{J}$ (Theorem \ref{thm:repJ}), and 
  a $p \times m$ matrix of observables $M$ with row rank $p$

  \STATE \textbf{compute} orthonormal rows spanning the row space of $M$ and store them as $M$

  \STATE \textbf{set} $L := M$\label{alg:acl:setL0}

  \REPEAT
  \FORALL{$1 \leq i \leq N$ and rows $r$ of $L$} \label{alg:acl:mainloop}
  \STATE {\textbf{compute} $\pi_{i} := r J_i L^T  L$} \label{alg:acl:proj}
  \IF{$\norm{r J_i - \pi_{i}} > \varepsilon$}\label{alg:acl:check}
  \STATE {\textbf{append} row $(r J_i - \pi_{i}) / \norm{r J_i -\pi_{i} }$ to $L$}\label{alg:acl:append}
  \ENDIF
  \ENDFOR
  \UNTIL{no rows have been appended to $L$}

  \RETURN lumping matrix $L$.

 \end{algorithmic}
\end{algorithm}
\begin{figure}
    \centering
 \includegraphics[width=0.4\linewidth]{./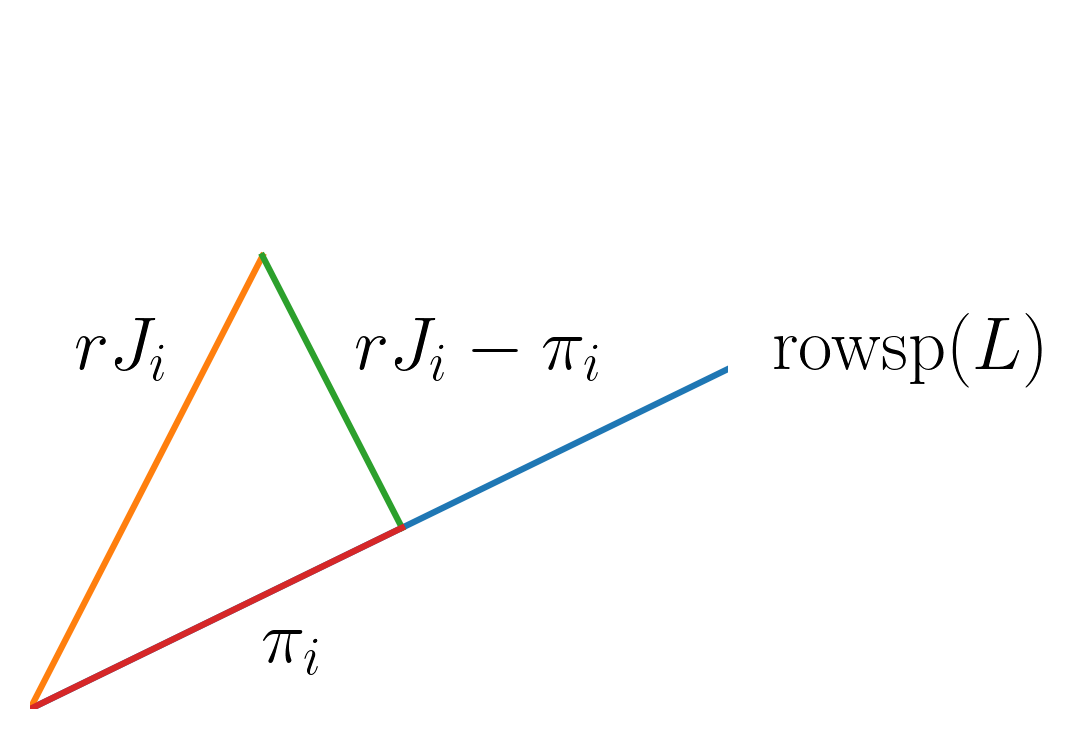}
 \caption{Decomposition of $rJ_i$ into $\rsp(L)$ and $\rsp(L) ^{\perp}$ \cite{leguizamon-robayo_approximate_2023}.}
 \label{fig:newrow}
\end{figure}
% \begin{wrapfigure}[6]{r}{0.35\textwidth}
% 	\centering
% 	\vspace{-1.4cm}
% 	\includegraphics[width=0.95\linewidth]{./img/newrow.png}
% 	\vspace{-0.4cm}
% 	\caption{Decomposition of $rJ_i$ into $\rsp(L)$ and $\rsp(L) ^{\perp}$ \cite{leguizamon-robayo_approximate_2023}.
% 		%\coma{Too compact. not clear where text ends and caption begins}
% 	}
% 	\label{fig:newrow}
% \end{wrapfigure}
%
\begin{remark}
 In Line \ref{alg:acl:append} of Algorithm \ref{alg:acl} we do not append $rJ_{i}$, given by the orange vector in Figure \ref{fig:newrow}.
 Rather, we append the normalized component of $rJ_i$ in the orthogonal direction to $\rsp(L)$, which is $r J_i - \pi_{i}$, the green vector in Fig.~\ref{fig:newrow}.
 This ensures that the matrix $L$ is orthonormal.
 We can use the fact that the pseudoinverse of an orthonormal matrix is its transpose to obtain that $\bar{L}=L^{T}$.
\end{remark}

We now provide a detailed example of Algorithm~\ref{alg:acl}.

\begin{example}
 Consider the system given by Equation~\eqref{eq:examplepert} and let $\varepsilon=0.2$.
 Set $M=(1,0,0)$ i.e., we are observing the component $x_{1}$ from the original system.
 Using Algorithm~\ref{alg:findJ} with the same points as in Example~\ref{ex:findJ}, we find that the dimension of $\mathcal{V}_{J}$ is 6 with basis given by the matrices:
  %[ 0  0  0  2 -2 -8 -1  0  2]
  %[ 0.    2.    4.05  1.    0.   -2.   -0.5  -0.25  0.5 ]
  %[ 0.     4.05   8.     0.667  0.444 -0.444 -0.333 -0.333  0.   ]
  %[-40.75    9.05   18.125   0.2     0.06   -0.28   -0.1    -0.04    0.12 ]

  % [-40.75    9.05   18.125   0.2     0.06   -0.28   -0.1    -0.04    0.12 ]
% [-2.958  1.41   2.815  0.25   0.031 -0.438 -0.125 -0.031  0.188]
% [-0.951  0.621  1.235  0.222  0.025 -0.395 -0.111 -0.025  0.173]
  \begin{equation*}
   \begin{split}
    J_1 &=
    \begin{pmatrix}
     0  & 0  & 0  \\
     2  & -2 & -8 \\
     -1 & 0  & 2
    \end{pmatrix},\\
    J_3 &=
    \begin{pmatrix}
     0      & 4.05   & 8      \\
     0.667  & 0.444  & -0.444 \\
     -0.333 & -0.333 & 0
    \end{pmatrix}, \\
    J_5 &=
% [-2.958  1.41   2.815  0.25   0.031 -0.438 -0.125 -0.031  0.188]
    \begin{pmatrix}
     -2.958    & 1.41     & 2.815 \\
     0.25    & 0.031    & -0.438   \\
     -0.125 & -0.031 & 0.188
    \end{pmatrix},
   \end{split}
   \hspace{1cm}
   \begin{split}
   J_2 &=
    \begin{pmatrix}
     0    & 2     & 4.05 \\
     1    & 0     & -2   \\
     -0.5 & -0.25 & 0.5
    \end{pmatrix},\\
    J_4 &=
    \begin{pmatrix}
     -40.75 & 9.05   & 18.125 \\
     0.200  & 0.060  & -0.280 \\
     -0.100 & -0.040 & 0.120
    \end{pmatrix},\\
    J_6 &=
% [-0.951  0.621  1.235  0.222  0.025 -0.395 -0.111 -0.025  0.173]
    \begin{pmatrix}
     -0.951    & 0.621     & 1.235 \\
     0.222    & 0.025     & -0.395   \\
     -0.111 & -0.025 & 0.173
    \end{pmatrix}.
   \end{split}
  \end{equation*}

 We begin the computation in Line \ref{alg:acl:setL0} by setting $L=M=(1,0,0)$ and begin the main loop in Line \ref{alg:acl:mainloop} with $r=(1,0,0)$.
 To carry out the computation of Line \ref{alg:acl:proj},  we have that 
 \begin{equation*}
  \begin{split}
   rJ_{1}&=(0,0,0),   \\
   rJ_{3}&=(0,4.05,8),\\
   rJ_{5}&=(0   , 1.41 , 2.815),
  \end{split}
  \hspace{1cm}
  \begin{split}
   rJ_{2}&=(0,2,4.05),   \\
   rJ_{4}&=(0, 9.05, 18.125), \\
   rJ_{6}&= (0   , 0.621, 1.235).
  \end{split}
 \end{equation*}
 % $rJ_{1}=,~rJ_{2}=(0,2,4.05),~rJ_{3}=,~rJ_{4}=,~rJ_{5}=,~rJ_6=$.
 Let $d_{i}:=\norm{r J_i - \pi_{i}}$.
 By the check of Line \ref{alg:acl:check}, we do not append any new row to $L$ as $d_{1}=0$ since $rJ_{1}=\pi_{1} = (0,0,0)$.
 Next, we compute $\pi_{2}$ following Line \ref{alg:acl:proj}.
 As $\pi_{2} =0$, it follows that $d_{2} = \norm{rJ_{2}}=4.52>0.2$, which, by Line \ref{alg:acl:check}, means that we need to append a new row to $L$.
 By Line \ref{alg:acl:append}, we add normalized $rJ_{2} - \pi_{2}$ as a row of $L$ (Line \ref{alg:acl:append}), thus obtaining
 \begin{equation}\label{eq:resultingL}
  L = \begin{pmatrix}
   1 & 0     & 0     \\
   0 & 0.443 & 0.897
  \end{pmatrix}.
 \end{equation}
 Going back to Line \ref{alg:acl:proj} with the updated matrix $L$, we compute $\pi_{3} = (0, 3.974, 8.046)$.
 Following Line \ref{alg:acl:check}, we do not add additional rows to $L$ since $d_{3} = 0.089 < 0.2$.
 We compute $\pi_{4}= (-40.75, 8.978, 18.18)$ and $d_{4} = 0.09 < 0.2 $, leaving $L$ unchanged.
 We go on with the computations for the remaining matrices $J_{i}$ to obtain $d_{5}=0.018,~d_{6}=0.01$.
 It follows that we need not add any new rows to $L$.

 So far we have only checked the first row of $L$.
 Following the main loop (Line \ref{alg:acl:mainloop}), we set $r = (0, 0.443, 0.897)$.
 We compute
 \begin{equation*}
  \begin{split}
   rJ_{1}&=(-0.011,-0.869,-1.760),   \\
   rJ_{3}&=(-0.004,-0.098,-0.199),\\
   rJ_{5}&=(-0.001, -0.013, -0.026),
  \end{split}
  \hspace{1cm}
  \begin{split}
   rJ_{2}&=(-0.006,-0.218,-0.441),   \\
   rJ_{4}&=(-0.001, -0.008, -0.017), \\
   rJ_{6}&= (-0.001, -0.010 , -0.021).
  \end{split}
 \end{equation*}
 We continue computing $d_{1} = 0.02$, $d_{2} = 0.007,~d_{3}= 0.004$, while $d_{i}=0.001$ for $i= 4,5,6$. 
    It follows that the check of Line~\ref{alg:acl:check} is false for all $i=1,\dots,6$; meaning that no more rows should be added to $L$, thus terminating the algorithm.
 The output of Algorithm \ref{alg:acl} is the matrix $L$ of Equation \eqref{eq:resultingL}.
\end{example}

 The following result states that approximate constrained lumpings can be efficiently computed.

\begin{theorem}[Time Complexity of Algorithm~\ref{alg:acl}]\label{thm:complexity}
Approximate constrained lumping can be computed in polynomial time. 
Specifically, the complexity of Algorithm~\ref{alg:clue} when using Algorithms~\ref{alg:findJ} and \ref{alg:acl} as subroutines can be bounded by $\mathcal{O}((A+Np(p+2))m^2)$, where $A$ is the cost of computing an entry of $f(x)$,  $m$ is the dimension of $f$, $N$ is the number of matrices spanning $\mathcal{V}_{J}$, and $p$ is the dimension of the reduced system.
\end{theorem}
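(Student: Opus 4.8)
The plan is to charge the running time of Algorithm~\ref{alg:clue} to its two subroutines separately --- the construction of a spanning set $\{J_1,\dots,J_N\}$ of $\mathcal{V}_J$ by Algorithm~\ref{alg:findJ}, and the construction of $L$ by Algorithm~\ref{alg:acl} --- and then add the bounds, noting that assembling the returned reduced system $\dot y = Lf(\bar Ly)$ with $\bar L = L^T$ requires no further computation. First I would establish termination and bound the iteration counts. Algorithm~\ref{alg:findJ} halts after $N+1$ samples because the matrices it retains are linearly independent in $\mathbb{R}^{m\times m}$, so $N=\dim\mathcal{V}_J\le m^2$ (for uniform sampling the stopping condition is met exactly at $\dim\mathcal V_J+1$ with probability one, by the argument of~\cite{jimenez_clue_2022}). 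For Algorithm~\ref{alg:acl} the key monotonicity fact is that rows are only appended to $L$ and never removed, while $L$ stays full rank with $\rsp(L)\subseteq\mathbb{R}^m$; hence $L$ has at most $p$ rows throughout, where $p$ is the final reduced dimension. Consequently the outer \textbf{repeat} loop makes at most $p+1$ passes (each non-terminal pass appends at least one row, and at most $p-\rank(M)$ rows are appended in total), and each pass iterates the inner loop over at most $N$ matrices and at most $p$ rows, regardless of whether freshly appended rows are revisited within the same pass.

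Next I would bound the per-iteration cost. For a fixed row $r$ (an $m$-vector) and matrix $J_i\in\mathbb{R}^{m\times m}$, the product $rJ_i$ costs $O(m^2)$; the projection $\pi_i=rJ_iL^TL$ is then $(rJ_i)L^T$ followed by a multiplication by $L$, each $O(mp)$; and the norm test together with the normalized append is $O(m)$. Since $p\le m$, each $(i,r)$ evaluation is $O(m^2)$, and one verifies that the number of such evaluations over the whole run is at most $Np(p+2)$ (a loose form of the $(p+1)\cdot N\cdot p$ count above), so Algorithm~\ref{alg:acl} runs in $O(Np(p+2)m^2)$ arithmetic operations; caching the at most $Np$ distinct products $rJ_i$ only improves constants. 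For Algorithm~\ref{alg:findJ}, each of the $N+1$ iterations evaluates $J(x_N)$ by automatic differentiation at cost $O(m^2A)$ ($m$ forward sweeps, each of cost $O(mA)$, with $A$ the cost of one entry of $f$) and performs one membership test on $m^2$-dimensional vectors by maintaining a reduced basis, which is subsumed; this gives the $O(Am^2)$-type term. Summing yields $\mathcal{O}\big((A+Np(p+2))m^2\big)$. Finally, since $N\le m^2$, $p\le m$, and $A$ is polynomial in $m$ for polynomial and rational right-hand sides, the whole bound is polynomial in the input size, which establishes the first sentence of the statement.

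The main obstacle is the amortized bookkeeping needed to land precisely on $\mathcal{O}((A+Np(p+2))m^2)$ rather than a coarser polynomial: one must be careful about (i) how often the products $rJ_i$ and the projections $\pi_i$ are recomputed as $L$ grows, and exactly how the inner \textbf{for all} loop interacts with appends during a pass; (ii) the precise cost model for computing the Jacobian by automatic differentiation in terms of $A$ and $m$ (and the implicit dependence on the number of samples $N+1$); and (iii) checking that the linear-independence and $\rsp(L)$-membership tests are genuinely dominated by the leading terms. None of this is conceptually deep, but it does require fixing an explicit computational model --- counting arithmetic operations over the reals with matrices stored densely --- for the constants to be meaningful. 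A minor point worth stating is that the symbol $p$ in the bound denotes the dimension of the \emph{reduced} system, which may exceed $\rank(M)$; the bound is monotone in this quantity, so using the final dimension is the safe choice.
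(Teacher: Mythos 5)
Your proposal is correct and takes essentially the same route as the paper: it separately charges Algorithm~\ref{alg:findJ} at $\mathcal{O}(Am^2)$ and the main loop of Algorithm~\ref{alg:acl} via per-iteration matrix-operation costs, arriving at $\mathcal{O}(Np(p+2)m^2)$ and hence the stated bound. The only difference is bookkeeping: the paper charges $\mathcal{O}((p+2)m^2)$ per inner iteration (naively recomputing $L^TL$) over $Np$ iterations, whereas you charge $\mathcal{O}(m^2)$ per iteration (associating the projection as $((rJ_i)L^T)L$) over at most $\approx Np(p+1)$ iterations counted across the repeat passes --- both decompositions land on the same bound, and your explicit pass count and your remark on the $N$-fold cost of the Jacobian samples are, if anything, more careful than the paper's own accounting.
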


When the system is not approximately lumpable, it follows that the worst-case time complexity is $O(m^4)$.
The complexity bound of Theorem~\ref{thm:complexity} can be further improved by using more efficient data structures~\cite{ovchinnikov_clue_2021}. 

The next result relies on Theorem~\ref{thm:repJ} and ensures that the approximate reductions found by Algorithm~\ref{alg:acl} admit a deviation tolerance of order $\mathcal{O}(\varepsilon)$. In other words, the deviation tolerance $\eta$ is in the order of the lumping tolerance $\varepsilon$.

\begin{theorem}[Correctness of Algorithm~\ref{alg:acl}]\label{thm:algorithm:correctness}
 Using the setting of Theorem~\ref{thm:errbound}, assume $J(x)$ can be written in the form given by Equation~\eqref{eq:jacrep}.
 %Given $f: \RE^m \rightarrow \RE^m$ analytic functions on an open ball $\Omega \subset \RE^n$ centered in the origin with diameter $K$ such that for all $x(0)\in S$,  the solution $x(t)$ of~$\dot{x}=f(x)$ remains in $\Omega$ for all $t \in [0,T]$, where $S$ is a set of given initial conditions and $T$ is a given finite time horizon such that none of the solutions explode in a finite time $t \leq T$.
 %Moreover, assume we can write $J(x)$ in the form given by Theorem~\ref{thm:repJ} where $J_i \in \RE^{n\times n}$ and $\{\mu_i(x)\ :\ i=1,\ldots,N\}$ are $\RE$-linearly independent and analytic on $\Omega$.
 Let $L$ be the matrix computed via Algorithm \ref{alg:acl} with lumping tolerance $\varepsilon$ over the sampled matrices  $\{J(x_1),\ldots,J(x_N)\}$ for $x_1,\ldots,x_N \in \Omega$.
 Then, $L$ is a $(T, S, \sqrt{m} C'C K\varepsilon)$-lumping, where $C' =\sup_{i, x\in \Omega} \left\lvert \mu_{i}(x) \right\lvert$, $C$ is a constant depending on the representation of $J(x)$ and $K$ is a constant depending on the trajectories $x(t)$ and $\bar{L}Lx(t)$.
\end{theorem}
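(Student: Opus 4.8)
The plan is to bound the deviation $\dev_L(f, x(t)) = \norm{Lf(\bar{L}Lx(t)) - Lf(x(t))}$ uniformly over $t \in [0,T]$ by relating it, through the mean-value / fundamental-theorem-of-calculus argument already implicit in Theorem~\ref{thm:lumping}, to how far $\rsp(L)$ is from being $J(x)$-invariant along the trajectory. Concretely, write $z = x(t)$ and $P_L = \bar L L = L^T L$ (using orthonormality of $L$ from Algorithm~\ref{alg:acl}), and parametrize the segment from $P_L z$ to $z$ by $\gamma(s) = P_L z + s(z - P_L z)$, $s \in [0,1]$, which stays in the convex set $\Omega$ by the hypothesis $x(t), \bar L L x(t) \in \Omega$. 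Then
\begin{equation*}
Lf(z) - Lf(P_L z) = \int_0^1 L\,J(\gamma(s))\,(z - P_L z)\,ds = \int_0^1 L\,J(\gamma(s))\,(I - P_L)z\,ds,
\end{equation*}
so $\dev_L(f,z) \le \sup_{w \in \Omega} \norm{(I - P_L)\,J(w)^T L^T}\cdot \norm{(I-P_L)z}$. The second factor is bounded by a constant $K$ depending only on the trajectories (since the reachable set is bounded, $(I-P_L)x(t)$ has bounded norm on $[0,T]$); it remains to control $\sup_{w\in\Omega} \norm{(I-P_L)J(w)^T L^T}$, i.e. the size of the part of $L J(w)$ lying outside $\rsp(L)$.

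The second step uses the representation $J(w) = \sum_{i=0}^N J_i\mu_i(w)$ together with the termination property of Algorithm~\ref{alg:acl}. When the algorithm halts, for every row $r$ of the returned (orthonormal) $L$ and every sampled matrix $J(x_k)$ we have $\norm{rJ(x_k) - rJ(x_k)P_L} \le \varepsilon$; since the $\{J(x_k)\}$ span $\mathcal{V}_J$ (Theorem~\ref{thm:repJ}), the residual $\norm{rJ_i - rJ_i P_L}$ for each basis matrix $J_i$ — and hence $\norm{(I-P_L)J(w)^T r^T}$ for arbitrary $w\in\Omega$ — is bounded by a fixed multiple of $\varepsilon$: this is where the constant $C$ ("depending on the representation of $J(x)$", i.e. on how the $J_i$ re-express in terms of the sampled matrices and on the change-of-basis norms) and the constant $C' = \sup_{i,x\in\Omega}|\mu_i(x)|$ enter, via $\norm{(I-P_L)J(w)^Tr^T} \le \sum_i |\mu_i(w)|\,\norm{(I-P_L)J_i^T r^T} \le C' \sum_i \norm{(I-P_L)J_i^T r^T} \le C'C\varepsilon$. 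Summing the $l \le m$ rows contributes the $\sqrt{m}$ factor (bounding the $\ell^2$ norm of the stacked residual by $\sqrt{m}$ times the worst row residual). Combining the two steps yields $\dev_L(f,x(t)) \le \sqrt{m}\,C'C K\varepsilon$ for all $t\in[0,T]$, which is exactly the claim that $L$ is a $(T,S,\sqrt{m}C'CK\varepsilon)$-lumping; one then invokes Theorem~\ref{thm:errbound} to translate this into the $\mathcal{O}(\varepsilon)$ trajectory error if desired.

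The main obstacle I anticipate is the bookkeeping in the second step: making precise the claim that a bound of $\varepsilon$ on residuals against the sampled matrices $J(x_1),\dots,J(x_N)$ transfers to a bound against the abstract basis $\{J_i\}$ (and hence against all $J(w)$, $w\in\Omega$). This requires that $\{J(x_1),\dots,J(x_N)\}$ not merely span $\mathcal{V}_J$ but do so with controlled conditioning, so that each $J_i$ is a bounded linear combination of the samples — the constant $C$ is precisely this conditioning constant, and one must argue it is finite (it is, once the spanning set is fixed) and that it is what the theorem statement quietly absorbs into "a constant depending on the representation of $J(x)$." A secondary subtlety is ensuring the segment $\gamma(s)$ stays inside the domain of analyticity $\Omega$ so the integral representation is valid; this is handed to us by the hypothesis that both $x(t)$ and $\bar L L x(t)$ lie in the compact convex $\Omega$. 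The rest is norm-submultiplicativity and the triangle inequality.
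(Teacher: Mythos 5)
Your proposal is correct and follows essentially the same route as the paper: your integral-along-the-segment estimate is the paper's mean-value argument (Proposition~\ref{prop:bound}, via Theorem~\ref{thm:schs} applied on $\ker L$), your transfer of the $\varepsilon$-residuals from the sampled matrices $J(x_k)$ to the basis matrices $J_i$ with a conditioning constant is exactly Proposition~\ref{prop:analytic}, and the factors $C'=\sup_{i,x\in\Omega}|\mu_i(x)|$, $\sqrt{m}$ from stacking rows, and $K$ bounding $\norm{\bar LLx(t)-x(t)}$ appear identically (Lemma~\ref{lmm:boundJ} and the final compactness step). The convexity of $\Omega$ you flag is likewise implicitly assumed in the paper's Proposition~\ref{prop:bound}, so there is no substantive divergence.
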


\subsection{Heuristic search of lumping tolerance}\label{sec:besteps}

To apply Algorithm \ref{alg:acl}, it is important to choose an appropriate value for $\varepsilon$.
While Theorems~\ref{thm:errbound} and~\ref{thm:algorithm:correctness} allow for an error estimation in terms of $\varepsilon$, in practice, this bound is not tight enough.
For this reason, we introduce a heuristic approach to appropriately choose $\varepsilon$.

Intuitively, by increasing $\varepsilon$, it is possible to lump more variables together at the price of larger approximation errors.
We would like to find the largest admissible $\varepsilon$ such that the approximation error is small enough.
To this aim, we begin by noting that the minimum value that $\varepsilon$ can have is $0$, corresponding to an exact reduction.
A naive idea would be to set $\varepsilon=0$ and add small increments until the reduction is satisfying.
However, this requires an appropriate choice of increment which in turn depends on the model.
Instead, Lemma \ref{lmm:uppereps} gives us an upper bound for $\varepsilon$ which can be computed for each model.
\begin{lemma}[Upper bound on $\varepsilon$]\label{lmm:uppereps}
 Consider a  matrix of observables $L\in \mathbb{R}^{p\times m}$ of rank $p$ with $j-$th row denoted by $r_{j}$ and a decomposition of the Jacobian $J(x) = \sum_{i=1}^{n}J_{i}\mu_{i}(x)$.
 Let $\varepsilon_{\max}$ be given by
 \begin{equation} \label{eq:epsmax}
  \varepsilon_{\max} = \max_{j,i} \norm{r_{j}J_{i} - r_{j}J_{i}P_L},
 \end{equation}
    for $i=1,\dots, n$ and each $j=1,\dots, l$, where $P_L$ is the projection onto $\rsp(L)$.
    Then $\varepsilon_{max}$ is the smallest $\varepsilon$ such for any $\varepsilon \geq \varepsilon_{\max}$ the output of Algorithm \ref{alg:acl} will be a matrix of orthonormal rows spanning the row space of $L$.
\end{lemma}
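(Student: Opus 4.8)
The plan is to analyze Algorithm~\ref{alg:acl} directly, essentially by inspecting its first pass through the loop of Line~\ref{alg:acl:mainloop}. We may assume, as the algorithm ensures in its first line, that the rows $r_{1},\dots,r_{p}$ of $L$ are orthonormal; then $\bar{L}=L^{T}$, the matrix $P_L=L^{T}L$ is the orthogonal projection onto $\rsp(L)$, and for any row $r$ of the running matrix the vector $\pi_{i}=rJ_{i}L^{T}L$ computed in Line~\ref{alg:acl:proj} is exactly the orthogonal projection of $rJ_{i}$ onto the current row space. The crucial elementary observation is that whenever Line~\ref{alg:acl:append} fires, the appended vector $(rJ_{i}-\pi_{i})/\norm{rJ_{i}-\pi_{i}}$ is both nonzero --- the test of Line~\ref{alg:acl:check} forces $\norm{rJ_{i}-\pi_{i}}>\varepsilon\geq 0$ --- and orthogonal to the current row space, so the append \emph{strictly} enlarges the row space. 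Since rows are only appended, never removed (and the outer loop exits as soon as a pass produces no append), it follows that the output has row space exactly $\rsp(L)$ if and only if no append ever occurs, which is equivalent to: no append occurs during the first pass. (The output always consists of orthonormal rows, so ``orthonormal'' is automatic; what is at stake is whether the row space stays equal to $\rsp(L)$.)

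For the direction ``$\varepsilon\geq\varepsilon_{\max}$ suffices'', consider the first moment at which Line~\ref{alg:acl:append} would fire. By the above this moment, if it exists, lies in the first pass, so the running matrix is still $L$ at that point and the quantity tested in Line~\ref{alg:acl:check} equals $\norm{r_{j}J_{i}-r_{j}J_{i}P_L}$ for the row $r_{j}$ and index $i$ then being processed. By the definition of $\varepsilon_{\max}$ in Equation~\eqref{eq:epsmax} this is at most $\varepsilon_{\max}\leq\varepsilon$, so the strict test $>\varepsilon$ is false --- a contradiction. Hence no row is ever appended and the algorithm returns $L$, a matrix of orthonormal rows spanning $\rsp(L)$.

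For minimality, fix any $\varepsilon<\varepsilon_{\max}$ and a pair $(j^{\ast},i^{\ast})$ attaining the maximum in Equation~\eqref{eq:epsmax}. Follow the first pass up to the step that processes the row $r_{j^{\ast}}$ together with $J_{i^{\ast}}$. Either some row was already appended before that step --- in which case the output strictly contains $\rsp(L)$ --- or the running matrix still equals $L$, so the tested quantity equals $\norm{r_{j^{\ast}}J_{i^{\ast}}-r_{j^{\ast}}J_{i^{\ast}}P_L}=\varepsilon_{\max}>\varepsilon$ and a (nonzero) row is appended at that very step. In both cases at least one append occurs, so by the first paragraph the output does not merely span $\rsp(L)$. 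Combining the two directions, the stated property holds exactly when $\varepsilon\geq\varepsilon_{\max}$, which makes $\varepsilon_{\max}$ the smallest admissible value.

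I expect the only genuinely delicate point to be the interaction between the order in which the loop of Line~\ref{alg:acl:mainloop} visits the pairs $(r,J_{i})$ and the growth of the running matrix $L$: the argmax pair $(r_{j^{\ast}},J_{i^{\ast}})$ may be reached only after $L$ has been enlarged, which could shrink $\norm{r_{j^{\ast}}J_{i^{\ast}}-\pi_{i^{\ast}}}$ below $\varepsilon$ and so \emph{fail} to trigger an append at that pair. As indicated above, this is defused by the observation that a \emph{single} append anywhere already breaks the property, so one never needs the test to fire specifically at the argmax pair. Everything else --- identifying $\pi_{i}$ with an orthogonal projection because $L$ is kept orthonormal, and the monotone growth of the row space --- is routine bookkeeping.
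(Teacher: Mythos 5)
Your proof is correct and follows essentially the same route as the paper's: for $\varepsilon\geq\varepsilon_{\max}$ every test in Line~\ref{alg:acl:check} fails on the first pass so the algorithm returns an orthonormal basis of $\rsp(L)$, and for $\varepsilon<\varepsilon_{\max}$ at least one append occurs, strictly enlarging the row space, which forces $\varepsilon_{\max}$ to be minimal. Your treatment is in fact slightly more careful than the paper's terse minimality argument, since you explicitly justify why a single append (not necessarily at the argmax pair) already destroys the property and why no append can precede the first pass's inspection of the unchanged matrix $L$.
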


Lemma \ref{lmm:uppereps} states that any $\varepsilon \geq \varepsilon_{\max}$ will collapse the dynamics of the system onto observable $M$.
In this way, for any given model, we have a range $[0;\varepsilon_{\max}]$ from which to choose the right value for $\varepsilon$.

\begin{remark}\label{rmk:epsmaxrat}
The value of $\varepsilon_{\max}$ depends not only on the chosen model and observable but also on the set of matrices $\mathcal{V}_{J}$ used to represent $J(x)$. 	 Compared to polynomial models, the values for the drift in rational models can be more sensitive to changes in $x$, especially when the denominator of $f(x)$ is close to $0$. This sensitivity can be explained by vanishing denominators in the Jacobian $J(x)$ of a rational drift $f(x)$. We therefore expect rational models to exhibit larger fluctuations of $\varepsilon_{\max}$.
\end{remark}

For example, the evolution of the reductions for different values of $\varepsilon$ is demonstrated for a model from the literature in Figure~\ref{fig:staircase}. The model, named BIOMD103, has been downloaded from the online repository ODEbase~\cite{LuedersSturmRadulescu22} discussed in greater detail in Section~\ref{sec_eval} (it is model 2 from Table \ref{tab:chosenmod}).

\begin{figure}[h]
 \centering
 \includegraphics[width=0.5\linewidth]{./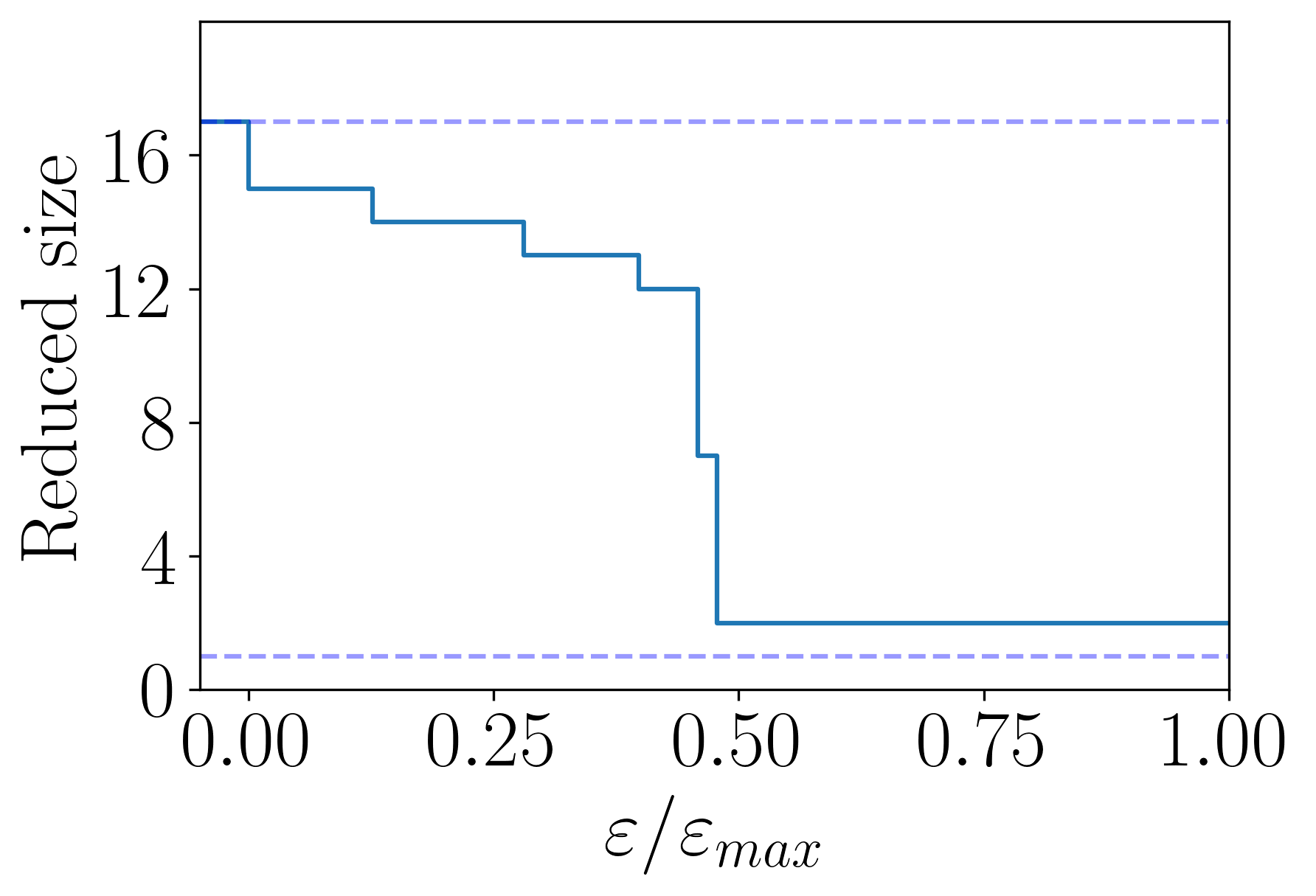}
 \caption{Reduced size vs relative $\varepsilon$ for the model BIOMD103.
 }
 \label{fig:staircase}
\end{figure}

We note that previous work~\cite{leguizamon-robayo_approximate_2023} used the deviation as a proxy to estimate the error without having to simulate the original system. This approach relied on a Monte Carlo approximation of the $L_1$ norm of $\dev_L$ over $[0, \norm{x^{0}}]^m$, where $x^{0}$ is the center of the initial set.
While being effective in finding meaningful reductions for polynomial systems, this deviation-based approach can be problematic for rational systems. 
This is because rational systems are more sensitive to the choice of compactum, since a small change in the denominator can have a large effect on the overall value of $f(x)$, especially for values of $x$ where the denominator approaches $0$.

We propose a novel approach which gives satisfactory results for both polynomial and rational systems  (Section~\ref{sec_eval}).
Using the fact that the size of a reduced model decreases monotonically with $\varepsilon$, we set up a target (a \textit{cutoff} size, $m^{*}$) for the reduced model given as a percentage of its original size.
We then apply a binary search algorithm to find the largest $\varepsilon$  such that the reduced model's size $m_{\varepsilon}$ satisfies $m_{\varepsilon} \leq m^{*}$.
In terms of Figure \ref{fig:staircase}, we want to obtain the smallest $\varepsilon$ whose induced size is below the cutoff size.
This intuition is formalized by Algorithm \ref{alg:acleps}.

%	\begin{wrapfigure}[16]{L}{0.45\textwidth}
%			\vspace{-0.8cm}
%		\begin{minipage}{0.45\textwidth}
\begin{algorithm}[h]
 \caption{Finding the smallest acceptable $\varepsilon$ for Algorithm \ref{alg:acl}
 }
 % \algsetup{linenosize=\tiny}
 % \scriptsiz
 \begin{algorithmic}[1]\label{alg:acleps}

  \REQUIRE set of matrices $J_{i}, ~i=1,\dots,N$ spanning $\mathcal{V}_{J}$, and 
   cutoff reduced size $m^{*}$, and
  minimal difference $d_{\min}$ 
  \STATE {\textbf{set} $\varepsilon_{\min}=0$}
  \STATE {\textbf{compute} $m_{\min} = nrows(L_{\varepsilon_{\min}})$, \\ \hspace{0.2cm}
  $\varepsilon_{\max} = \max_{j,i} \norm{r_{j}J_{i}}$,\\ \hspace{0.2cm}
  $m_{\max} = nrows(L_{\varepsilon_{\max}})$}

  \IF{$m^{*}<m_{\min}$} \label{alg:acleps:check}
  \RETURN $\varepsilon_{\min}$
  \ELSIF{$m^{*}>m_{\max}$}
  \RETURN $\varepsilon_{\max}$
  \ENDIF

  \REPEAT \label{alg:acleps:mainloop}
  \STATE {\textbf{compute} $\varepsilon = (\varepsilon_{\max}+\varepsilon_{\min})/{2}$}
  \STATE {\textbf{compute} $L$ using Algorithm~\ref{alg:acl} with $\varepsilon$ }
  %using Algorithm \ref{alg:aclcheck}.}
  \IF{ $nrows(L) < m^{*} $ }
  \STATE {\textbf{set} $\varepsilon_{\max} = \varepsilon$ }
  \ELSE %IF{ $\dev_{L} \leq \eta_{\max}$  }
  \STATE {\textbf{set} $\varepsilon_{\min} = \varepsilon$ }
  \ENDIF
  \STATE {\textbf{compute} $d= \varepsilon_{\max}-\varepsilon_{\min}$}
  \UNTIL{$d < d_{\min}$}

  \RETURN $\varepsilon_{\max}$.
 \end{algorithmic}
\end{algorithm}

\begin{remark}\label{rmk:dtradeoff}
It is important to notice that the choice of minimal different $d_{min}$ in Algorithm~\ref{alg:acleps} 
% \m{this $d$ has never been discussed before in the text?} 
provides a trade-off between detecting reductions and the speed of the approach. 
A lower value for $d_{min}$ can lead to finding more reductions at the cost of more iterations. 
\end{remark}
%			\end{minipage}	
%		\end{wrapfigure}
%
\begin{example}\label{ex:findeps}
 Let us use again model BIOMD103 from Figure~\ref{fig:staircase}.  It has an observable of interest, $C3$~\cite{legewie_mathematical_2006}, consisting of one variable representing the concentration of\emph{Activated capsase3}. 
 We would like to find a reduction such that the reduced model size is at most $62.5\%$ of the original size ($17$), i.e., $11$ species.
 To do so, we use Algorithm \ref{alg:acleps}.
 We first verify (Line~\ref{alg:acleps:check}) that the cutoff size is larger than the minimum size and smaller than the size of the exact reduction.
 As the cutoff size $11$ is between $1$ and $17$ we continue to Line~\ref{alg:acleps:mainloop}.
 On the first iteration (Figure \ref{fig:findeps:1}) %notice that 
 the middle point (green) gives a reduction of size $2$.
 As this size is smaller than the goal size $11$, we set $\varepsilon_{\max} = \varepsilon$.
 This gives a new interval to search for a reduction (black line on Figure \ref{fig:findeps:2}).
 We continue following the algorithm until it converges to a reduction of size $7$ (green line Figure \ref{fig:findeps:3}).
 This is a correct result as we can see that the closest step below the cutoff size is the one of size $7$ in Figure \ref{fig:findeps:1}.
\end{example}

\begin{figure}[t]
 \centering
 \begin{subfigure}[b]{0.32\textwidth}
  \includegraphics[width=\textwidth]{./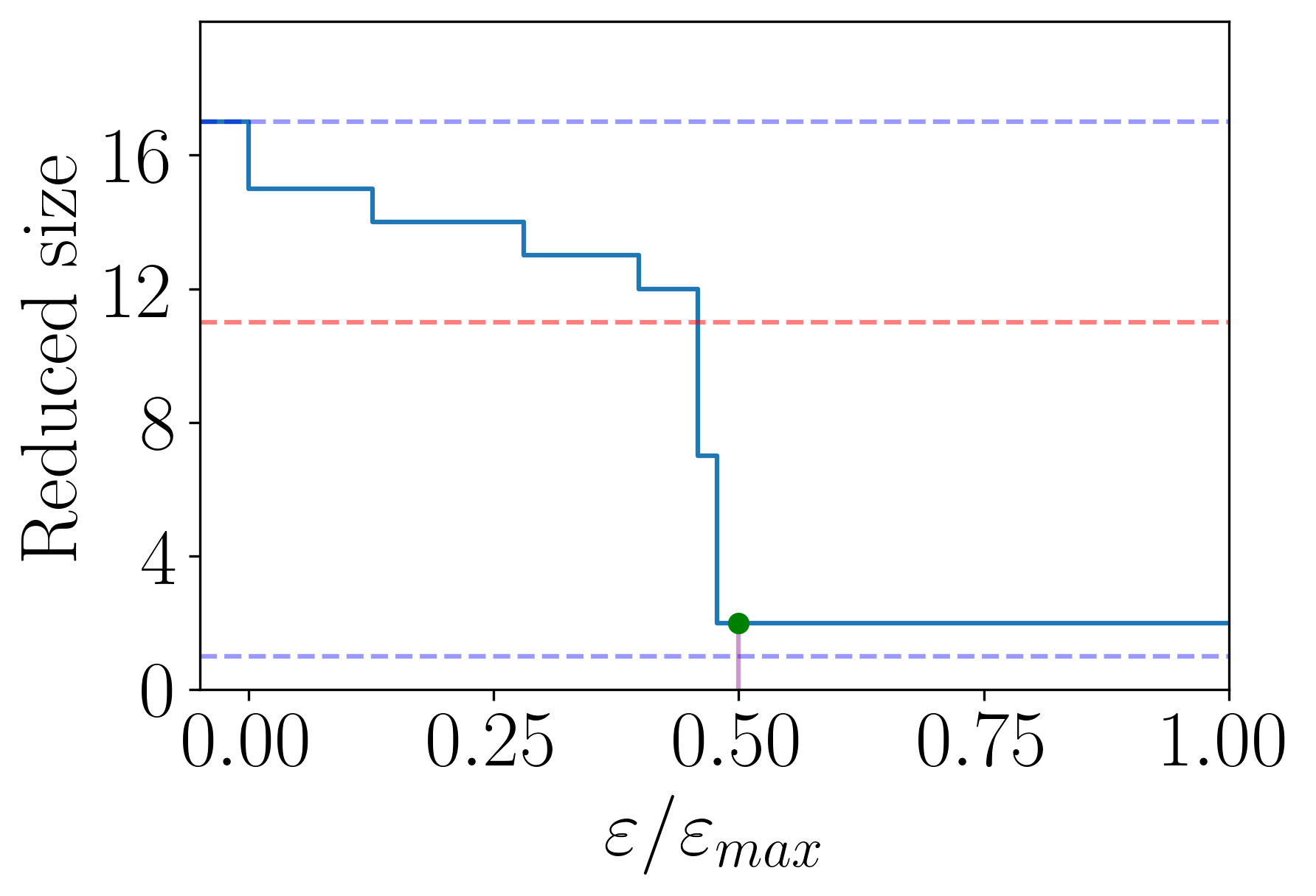}
  \vspace{-0.5cm}
  \caption{First iteration} \label{fig:findeps:1}
 \end{subfigure}
 \begin{subfigure}[b]{0.32\textwidth}
  \includegraphics[width=\textwidth]{./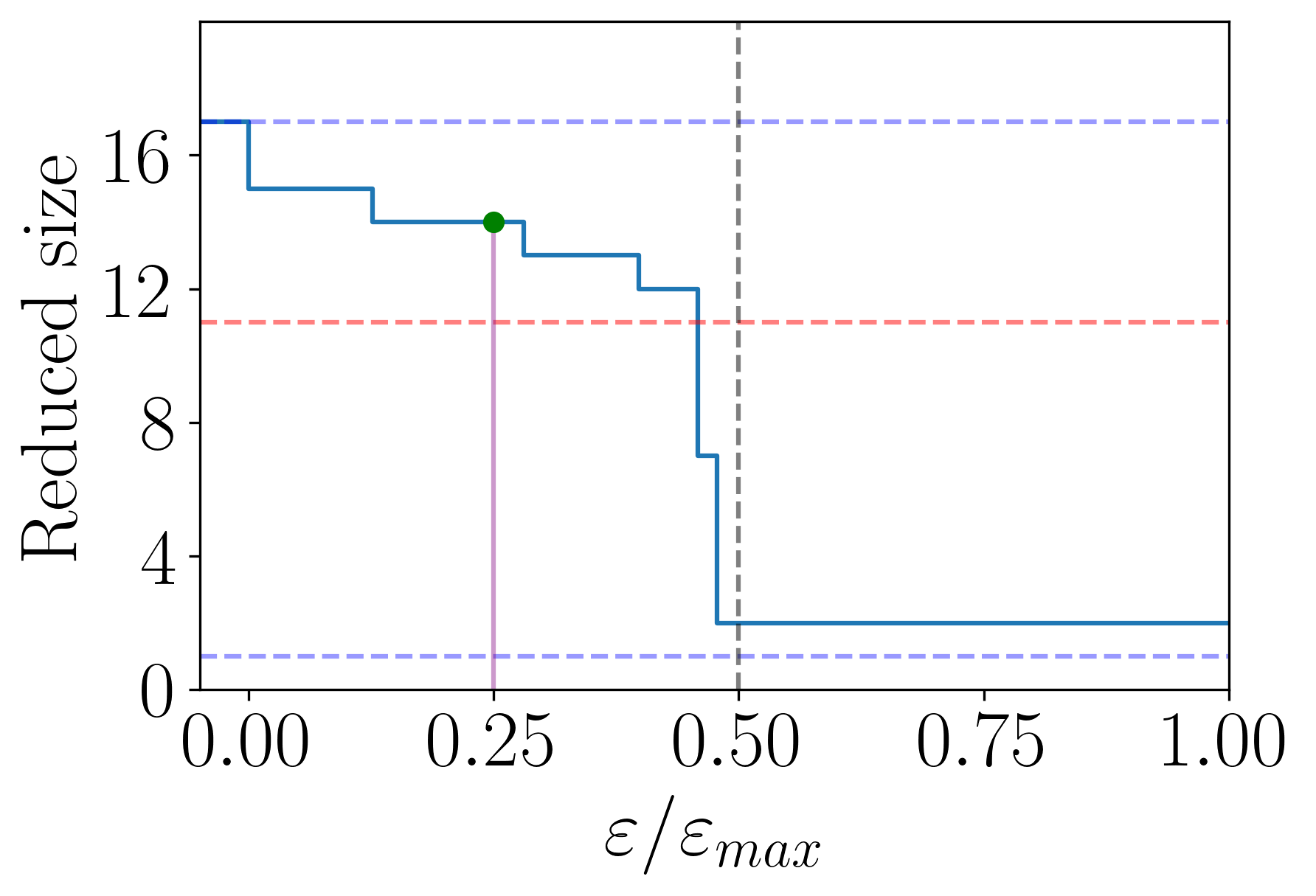}
  \vspace{-0.5cm}		
  \caption{Second iteration} \label{fig:findeps:2}
 \end{subfigure}
 \begin{subfigure}[b]{0.32\textwidth}
  \includegraphics[width=\textwidth]{./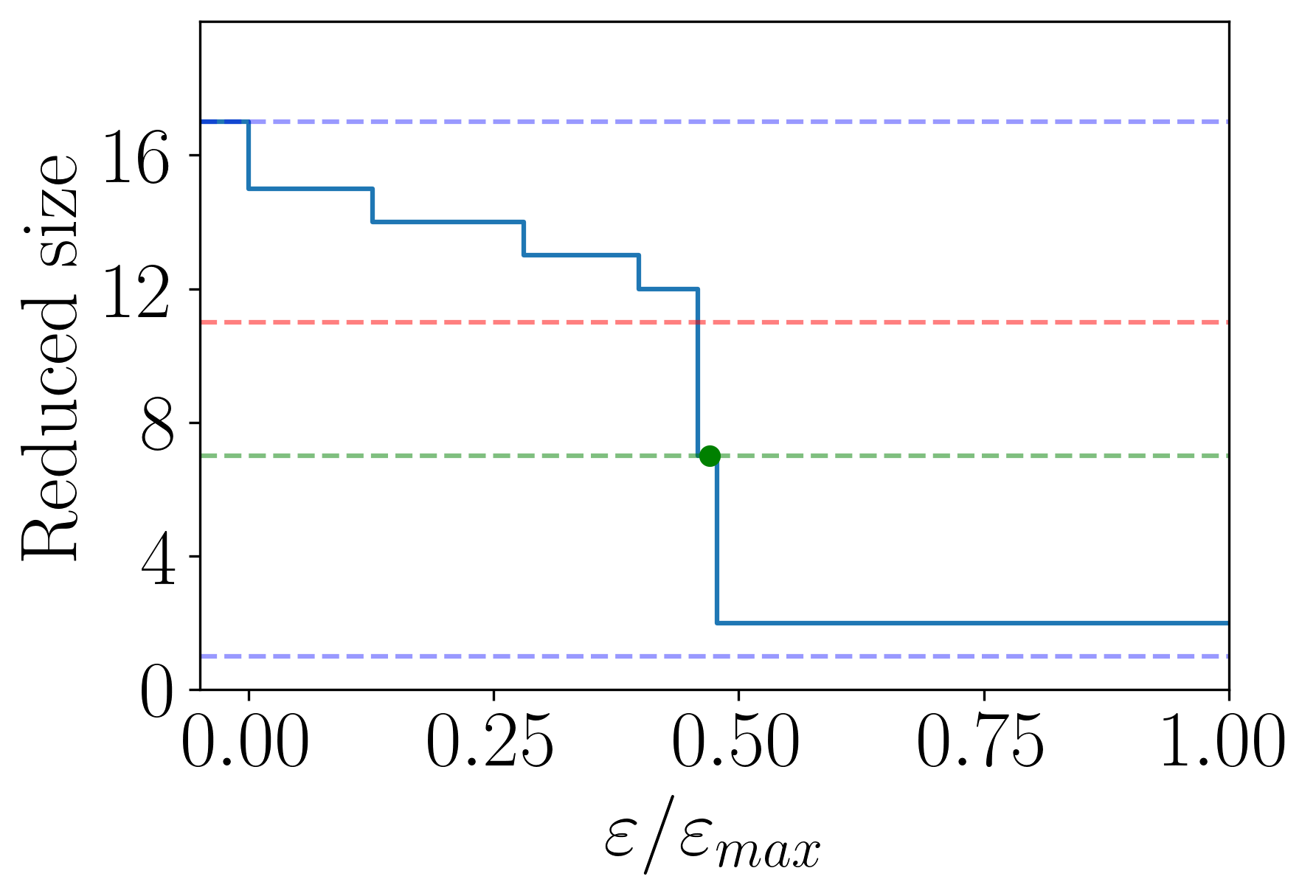}
  \vspace{-0.5cm}		
  \caption{Last iteration} \label{fig:findeps:3}
 \end{subfigure}
  \vspace{-0.3cm}	
 \caption{Algorithm \ref{alg:acleps} run
  on model $2$ to reduce it to size at most $11$ (red line).}
\end{figure}

\section{Evaluation}\label{sec_eval}

In this section, we evaluate our method.
All experiments were performed using an extension of the tool for CLUE~\cite{ovchinnikov_clue_2021,jimenez_clue_2022, cmsb2024_clue}, an open-source Python library.
The experiments are available in the following repository.
 \begin{center}
         \texttt{\url{https://github.com/clue-developers/CLUE/releases/tag/v1.7.3}}
    \end{center}
    In Section~\ref{sec_eval:red_pow}, we use a selection of models from the literature to show how our approach can lead to small reductions while not incurring a large error.
    In Section~\ref{sec_eval:scal}, we use a perturbed multisite protein phosphorylation model to show the scalability of our approach.

\subsection{Reduction power}\label{sec_eval:red_pow}
 To evaluate the reduction power of our approach, we selected several models from the literature.
 These models are chemical reaction networks with $5$ or more species.
 We used two sources to select models.	
 The first is a collection of polynomial models from the literature written in the input format of the tool BioNetGen~\cite{Blinov22112004}. 
    We selected two such models.
 The second is the so-called ODEbase project~\cite{LuedersSturmRadulescu22} which offers importing/exporting capabilities for polynomial and rational models from the BiomodelsDB repository~\cite{BioModels2010}. 
    We selected six such models. 
 Models with prefix \texttt{BIOMD} were taken from ODEbase, while the remaining models were taken directly from the supplementary material of their original paper.
 For each model, the observable was chosen according to the observables studied in the original paper describing the model.
 These typically consist of one variable or linear combinations of them.
 For models where more than one observable was studied in the original paper, we arbitrarily selected one of them.
 Models written in the BioNetGen format were first imported by ERODE~\cite{cardelli_erode_2017}, our tool that collects many techniques for the reduction of biological systems, and then exported in ERODE's \texttt{.ode}  format supported by CLUE. 

To import models from the ODEbase project~\cite{LuedersSturmRadulescu22}, instead, we directly implemented in CLUE an importer to convert them in \texttt{.ode} format. 

 When available, the set of initial conditions necessary for simulations was taken from the original paper or the model files in BiomodelsDB~\cite{BioModels2010}.
 The time horizon was either taken from the original reference or chosen experimentally by noticing when steady state was reached.
 All this additional information was manually added to the \texttt{.ode} files to the corresponding model.

 When choosing the models, we aimed for a selection that could equally display our approach for both rational and polynomial models.
 The final selection of models is displayed in Table~\ref{tab:chosenmod}.
 The column \emph{Type} reports whether the model is polynomial (P) or rational (R).

 The source paper presenting the model is displayed in the \emph{Ref.} column.
 The column \emph{Description} contains a summary of the biological system studied in the model.
 In column \emph{Observable}, we can see the species (or linear combinations of them) chosen as a constraint for the lumping.
 Finally, column \emph{Size} displays the number of model variables added in the observable over the total number of variables in the system (e.g., in Model 5, S2P sums 10 out of the 24 variables of the model).

\begin{table}[t]%[h!]
	\centering
	\resizebox{0.9\textwidth}{!}{
		\begin{tabular}{ccccp{3cm}p{2cm}r}
			\hline
			Nr.      & Name           & Type                             & Ref.                                                       & Description                                                            & Observable                               & Size  \\
			\hline
			1        & BIOMD102       & P                                & \cite{legewie_mathematical_2006}                           & Signaling pathways involved in the initiation of apoptosis (wild-type model)            & Activated capsase3  (C3)                 & 1/12  \\
			2        &
			BIOMD103 & P              & \cite{legewie_mathematical_2006} & Signaling pathways involved in the initiation of apoptosis (competitive model) & Activated capsase3  (C3)                                               & 1/17                                             \\
			3        &
			BIOMD447 & P              & \cite{venkatraman_plasmin_2012}  & Thrombospondin-Dependent Activation of TGF-$\beta$1        & Transforming growth factor (TGF)-$\beta$1  (TGFb1)                     & 1/13                                             \\
			4        & BioNetGen\_CPP & P                                & \cite{mu_carbon-fate_2007}                                 & Central carbon pathway of E. coli                                      & 1,3-diphosphoglycerate (D13PG)           & 2/87  \\
			5        & NIHMS80246\_S6 & P                                & \cite{borisov_domain-oriented_2008}                        & FceRI-like network of a cell-surface receptor                          & Phosphorylation at Y2 (S2P)              & 10/24 \\
			6        & BIOMD437       & R                                & \cite{tseng_comprehensive_2012}                            & Circadian clock of the fungus Neurospora crassa                        & Frequency of $mRNA$   (frq\_RNA)         & 1/39  \\
			7        & BIOMD448       & R                                & \cite{brannmark_insulin_2013}                              & Insulin Signaling in Type 2 Diabetes                                   & Phosphorylation of site S6    (S6P)      & 1/23  \\
			8        & BIOMD488       & R                                & \cite{proctor_investigating_2013}                          & Effect of A$\beta$-immunization in Alzheimer's disease & Microtubular binding protein tau   (Tau) & 1/68  \\
			%9        & BIOMD634       & R                                & \cite{proctor_investigating_2013}                          & Effect of A$\beta$-immunization in Alzheimer's disease (stochastic)    & Microtubular binding protein tau   (Tau) & 1/68  \\
			\hline
		\end{tabular}
	}
	\caption{
			List of selected models used for evaluation.
	}
	\label{tab:chosenmod}
\end{table}

For each model, we used Algorithm~\ref{alg:acleps} with $d_{min} = 10^{-6}$ to find the largest lumping tolerance $\varepsilon$ such that the size of the reduced model is larger than a given bound. 
The bound was given as a ratio of the size of the original system. We used ratios from  $12.5\%$ to $87.5\%$ using a step size of $12.5\%$. 
This means that, in the case of Model 2, the initial size was 17 and so the cutoff sizes were $[2.125, 4.25, 6.375, 8.5, 10.625, 12.75, 14.875]$. 
Notably, different ratios can lead to the same reduction. Consider Model~5. 
Here, the exact reduction is $34.5\%$ of the original size, so the output of Algorithm~\ref{alg:acleps} is $0$ for all percentages above $34.5\%$ as they recover the exact reduction.
In such cases, only one result is shown.

Table~\ref{tab:greaterredpower} presents the results for all models from Table~\ref{tab:chosenmod}.
There is a row for each model with information constant across experiments on it, including the value for $\varepsilon_{\max}$.
The first column $|$\textit{Red.}$|$ shows the size of the reduced model obtained by Algorithm~\ref{alg:acl}, while column \textit{Red. ratio} shows the ratio between the size of the reduced model and that of the original model as a percentage. 
We display the absolute error and the relative error at the time horizon, respectively, in the columns $e(T)$ and $e(T)_{Rel}$.
Here, the relative error is given as the absolute error divided by the value of the observable of the exact reduction.
Column $e_{max}$ shows the maximum value for $e(t)$ within the time horizon.
The obtained value of $\varepsilon$ and its ratio w.r.t. $\varepsilon_{\max}$  are shown in the columns $\varepsilon$ and $\varepsilon/\varepsilon_{\max}$, respectively.
The column $Iter.$  shows the number of iterations and the average computation time of Algorithm \ref{alg:acleps} over 5 runs. 
We used a 4.7 GHz Intel Core i7 computer with 32 GB of RAM to carry out all computations. 
For each model, we plot in Figure~\ref{fig:greaterredpower} the corresponding simulations of the observables for the considered values of $\varepsilon$. 
We used \texttt{Scipy} with \texttt{RK45} and \texttt{LSODA} solvers to simulate polynomial and rational models respectively.
Table~\ref{tab:greaterredpower} and Figure~\ref{fig:greaterredpower} show the successful use of Algorithm~\ref{alg:acleps} on both polynomial and rational models.

\begin{table}[ht!]
    \centering
    \resizebox{\textwidth}{!}{
        \begin{tabular}{HrrrrrrHr rr}
            \hline
            modelName & \multicolumn{1}{c}{\emph{$|$Red.$|$}}  %\multicolumn{1}{c}{\emph{Red. size}} 
            & \multicolumn{1}{c}{\emph{Red. ratio(\%)}} & \multicolumn{1}{c}{\emph{$e_{Rel}(T)$ }} & \multicolumn{1}{c}{\emph{$e(T)$}} & \multicolumn{1}{c}{\emph{$e_{max}$}} & \multicolumn{1}{c}{\emph{$\varepsilon$}} & $\varepsilon_{\max}$ & \multicolumn{1}{c}{\emph{$\varepsilon/\varepsilon_{\max}$ (\%)}} & \multicolumn{1}{c}{\emph{Iter.}} & \multicolumn{1}{c}{\emph{Time (s)}} \\
            \hline
			\multicolumn{11}{c}{1) Model: BIOMD102, size: 13, type: P, exact lumping: 13, $\varepsilon_{\max}:$ 3.52E-03}
			\\
			\hline
            BIOMD102 & 11 & 84.6 & 1.16E+01 & 1.44E+02 & 1.44E+02 & 4.24E-04 & 3.52E-03 & 1.21E+01 & 13 & 0.284 \\
            BIOMD102 & 8 & 61.5 & 4.86E+00 & 6.04E+01 & 6.78E+01 & 9.95E-04 & 3.52E-03 & 2.83E+01 & 13 & 0.099 \\
            BIOMD102 & 6 & 46.2 & 1.26E+00 & 1.56E+01 & 7.39E+01 & 1.73E-03 & 3.52E-03 & 4.91E+01 & 13 & 0.138 \\
            BIOMD102 & 2 & 15.4 & 1.50E+01 & 1.86E+02 & 1.86E+02 & 1.97E-03 & 3.52E-03 & 5.60E+01 & 13 & 0.046 \\
            BIOMD102 & 1 & 7.7 & 1.50E+01 & 1.86E+02 & 1.86E+02 & 3.52E-03 & 3.52E-03 & 1.00E+02 & 13 & 0.052 \\
			\hline
			\multicolumn{11}{c}{2) Model: BIOMD103, size: 17, type: P, exact lumping: 17, $\varepsilon_{\max}:$ 3.52E-03}
			\\
			\hline
            BIOMD103 & 14 & 82.4 & 1.02E-02 & 1.72E+00 & 1.72E+00 & 9.89E-04 & 3.52E-03 & 2.81E+01 & 13 & 0.598 \\
            BIOMD103 & 12 & 70.6 & 9.25E-01 & 1.55E+02 & 1.57E+02 & 1.60E-03 & 3.52E-03 & 4.56E+01 & 13 & 0.288 \\
            BIOMD103 & 7 & 41.2 & 8.52E-01 & 1.43E+02 & 1.44E+02 & 1.64E-03 & 3.52E-03 & 4.67E+01 & 13 & 0.218 \\
            BIOMD103 & 2 & 11.8 & 1.82E-01 & 3.06E+01 & 3.06E+01 & 2.00E-03 & 3.52E-03 & 5.67E+01 & 13 & 0.070 \\
			\hline
			\multicolumn{11}{c}{3) Model: BIOMD447, size: 13, type: P, exact lumping: 13, $\varepsilon_{\max}:$ 2.45E+01}
			\\
			\hline
            BIOMD447 & 11 & 84.6 & 1.33E-03 & 2.02E-05 & 2.02E-05 & 4.95E-03 & 2.45E+01 & 2.02E-02 & 26 & 0.173 \\
            BIOMD447 & 9 & 69.2 & 1.35E-02 & 2.04E-04 & 2.04E-04 & 3.41E-01 & 2.45E+01 & 1.39E+00 & 26 & 0.122 \\
            BIOMD447 & 7 & 53.8 & 8.46E-03 & 1.28E-04 & 1.29E-04 & 3.75E-01 & 2.45E+01 & 1.53E+00 & 26 & 0.123 \\
            BIOMD447 & 6 & 46.2 & 6.99E-03 & 1.06E-04 & 7.02E-04 & 1.40E+00 & 2.45E+01 & 5.71E+00 & 26 & 0.105 \\
            BIOMD447 & 3 & 23.1 & 1.22E-01 & 1.86E-03 & 7.31E-03 & 1.98E+01 & 2.45E+01 & 8.10E+01 & 26 & 0.056 \\
			\hline
			\multicolumn{11}{c}{4) Model: BioNetGen\_CCP, size: 87, type: P, exact lumping: 30, $\varepsilon_{\max}:$ 2.83E+00}
			\\
			\hline
            BioNetGen & 30 & 34.5 & 5.86E-06 & 6.11E-05 & 5.22E-04 & 6.74E-07 & 2.83E+00 & 2.38E-05 & 23 & 1.807 \\
            BioNetGen & 10 & 11.5 & 3.14E-01 & 3.27E+00 & 3.27E+00 & 2.67E-01 & 2.83E+00 & 9.44E+00 & 23 & 2.959 \\
			\hline
			\multicolumn{11}{c}{5) Model: NIHMS80246\_S6, size: 24, type: P, exact lumping: 19, $\varepsilon_{\max}:$ 3.13E-01}
			\\
			\hline
            NIHMS80246 & 19 & 79.2 & 3.16E-11 & 3.16E-09 & 2.14E-04 & 5.97E-07 & 3.13E-01 & 1.91E-04 & 20 & 1.370 \\
            NIHMS80246 & 17 & 70.8 & 8.09E-05 & 8.09E-03 & 4.86E+01 & 2.23E-02 & 3.13E-01 & 7.13E+00 & 20 & 1.948 \\
            NIHMS80246 & 14 & 58.3 & 8.09E-05 & 8.09E-03 & 4.89E+01 & 2.34E-02 & 3.13E-01 & 7.49E+00 & 20 & 1.478 \\
            NIHMS80246 & 9 & 37.5 & 1.00E+00 & 1.00E+02 & 1.00E+02 & 3.00E-02 & 3.13E-01 & 9.58E+00 & 20 & 0.809 \\
            NIHMS80246 & 8 & 33.3 & 1.00E+00 & 1.00E+02 & 1.00E+02 & 1.29E-01 & 3.13E-01 & 4.10E+01 & 20 & 0.586 \\
            NIHMS80246 & 1 & 4.2 & 1.00E+00 & 1.00E+02 & 1.00E+02 & 3.13E-01 & 3.13E-01 & 1.00E+02 & 20 & 0.454 \\
			\hline
			\multicolumn{11}{c}{6) Model: BIOMD437, size: 39, type: R, exact lumping: 21, $\varepsilon_{\max}:$ 7.11E+05}
			\\
			\hline
            BIOMD437 & 21 & 53.8 & 0.00E+00 & 0.00E+00 & 0.00E+00 & 0.00E+00 & 7.11E+05 & 0.00E+00 & 1 & 0.044 \\
            BIOMD437 & 12 & 30.8 & 1.58E-04 & 2.00E-05 & 6.66E-05 & 6.47E-07 & 7.11E+05 & 9.09E-11 & 41 & 3.021 \\
            BIOMD437 & 5 & 12.8 & 1.58E-04 & 2.00E-05 & 6.66E-05 & 1.07E+02 & 7.11E+05 & 1.50E-02 & 41 & 3.645 \\
            BIOMD437 & 4 & 10.3 & 3.31E-01 & 4.19E-02 & 4.29E-02 & 3.33E+02 & 7.11E+05 & 4.69E-02 & 41 & 3.237 \\
			\hline
			\multicolumn{11}{c}{7) Model: BIOMD448, size: 27, type: R, exact lumping: 23, $\varepsilon_{\max}:$ 1.00E+00}
			\\
			\hline
            BIOMD448 & 23 & 85.2 & 0.00E+00 & 0.00E+00 & 0.00E+00 & 0.00E+00 & 1.00E+00 & 0.00E+00 & 1 & 0.002 \\
            BIOMD448 & 19 & 70.4 & 7.45E-08 & 6.79E-06 & 2.90E-02 & 1.60E-02 & 1.00E+00 & 1.60E+00 & 21 & 0.912 \\
            BIOMD448 & 1 & 3.7 & 9.21E-01 & 8.39E+01 & 8.39E+01 & 1.00E+00 & 1.00E+00 & 1.00E+02 & 21 & 0.291 \\
			\hline
			\multicolumn{11}{c}{8) Model: BIOMD488, size: 68, type: R, exact lumping: 64, $\varepsilon_{\max}:$ 7.50E+10}
			\\
			\hline
            BIOMD488 & 59 & 86.8 & 2.88E-02 & 1.05E-02 & 2.85E-02 & 1.50E+02 & 7.50E+10 & 2.00E-07 & 58 & 47.800 \\
            BIOMD488 & 50 & 73.5 & 1.97E-01 & 7.20E-02 & 7.20E-02 & 7.50E+04 & 7.50E+10 & 1.00E-04 & 58 & 46.542 \\
            BIOMD488 & 9 & 13.2 & 9.06E+00 & 3.30E+00 & 3.61E+00 & 1.50E+06 & 7.50E+10 & 2.00E-03 & 58 & 7.276 \\
            BIOMD488 & 8 & 11.8 & 9.06E+00 & 3.30E+00 & 3.61E+00 & 6.00E+07 & 7.50E+10 & 8.00E-02 & 58 & 9.505 \\
            \hline
        \end{tabular}
    }
    \caption{Approximate constrained lumping results for the models from Table~\ref{tab:chosenmod}.}
    \label{tab:greaterredpower}
\end{table}

As expected, Algorithm~\ref{alg:acleps} can recover exact reductions, as seen in Models 6 and 7. 
Similarly, having a small cutoff size ($12.5\%$ of the original model size or lower) results in aggressive reductions that collapse the dynamics as seen in the reduction to 8 species of Model 5 (Figure~\ref{fig:greaterredpower:5}). Taken together, these experiments display the following behaviours:

\begin{enumerate}
 \item \emph{Models that do not admit exact reduction may admit approximate reduction.}
       Models 1, 2, and 3 were not exactly lumpable but were approximately lumped.
       For models 1 and 2, the reductions that incurred acceptable errors were modest, at around $80\%$ of the original model size.
          In the case of Model 3, all reductions incurred in low error, with the smallest having only $3$ species.
 \item \emph{Models that admit limited exact reduction, may admit smaller approximate reductions.}
       Models 5, 7, and 8 had modest exact reductions, while they could be further reduced using approximate lumping.
       In this case, most of the approximate reductions have a small error. 
          The improvement compared to the exact case goes from modest, at around $90\%$ of the size of the exact reduction, to as low as $47\%$ of the exactly reduced size (Model 5).

 \item \emph{Models that admit notable exact reduction, may admit even smaller approximate reductions.}
       Models 4 and 6 had already significant exact reductions; nevertheless, they could be further reduced via approximate lumping.
       In this case, Model 4 shows a significant steady-state error for the approximate reduction with $10$ species. 
          For Model 6, it was possible to find a reduction with less than $15\%$ of the original model size while still obtaining simulation results close to the original simulation.
\end{enumerate}

 As expected, for all models the error starts at $0$ and then increases with $t$.
 This increment is barely noticeable when $t$ is close to $0$ for most models.
 While the error increases monotonically in most cases, this does not hold for all reductions.
 e.g., in Figures~\ref{fig:greaterredpower:1} and~\ref{fig:greaterredpower:5}, the reductions with 7 and 14 species get closer to the original simulations as time increases.

 Furthermore, we see that the steady-state error increases with $\varepsilon$ for most models.
 Nonetheless, Figures~\ref{fig:greaterredpower:2} and~\ref{fig:greaterredpower:3} show that it is possible for aggressive reductions to incur lower errors.
    This can be explained by the influence of sampling on the representation of $\mathcal{V}_{J}$, as the check in Line~\ref{alg:acl:check} depends on the norms of each $rJ_i$.
    Overall, for given models, significant reductions without large errors were obtained by Algorithm~\ref{alg:acleps} with a cutoff size of $80\%$ of the original model size. 
    In principle, smaller cutoff sizes may be used to find meaningful reductions; however, this requires a case-by-case analysis. 

    We see that rational models tend to have larger values of $\varepsilon_{\max}$ compared to polynomial ones (see Remark~\ref{rmk:epsmaxrat}). This leads to more iterations of Algorithm~\ref{alg:acleps}.
    The values of $\varepsilon/\varepsilon_{\max}$ that lead to low errors %reductions 
    vary widely depending on the model.
    For polynomial models, most reductions happen under the $10\%$ threshold.
    Instead, for rationals the values change significantly, having reductions for $\varepsilon$ lower than $0.01\%$ of $\varepsilon_{\max}$.
    Following Remark~\ref{rmk:dtradeoff}, choosing a larger value for $d$ in Model 8 would have led to fewer iterations and faster times at the cost of potentially missing some low-error reductions as these happened for low values of $\varepsilon$.
    Finally, we note that assumptions of Theorem~\ref{thm:errbound} were satisfied by the experiments since the obtained approximate lumpings $L$ led to positive $L^TLx$  for all positive vectors $x$, while the denominators of rational $f(x)$ vanished only for negative values of $x$.

\begin{figure}[hp!]
    \centering
    \begin{subfigure}[b]{0.45\textwidth}
        \includegraphics[width=\textwidth]{./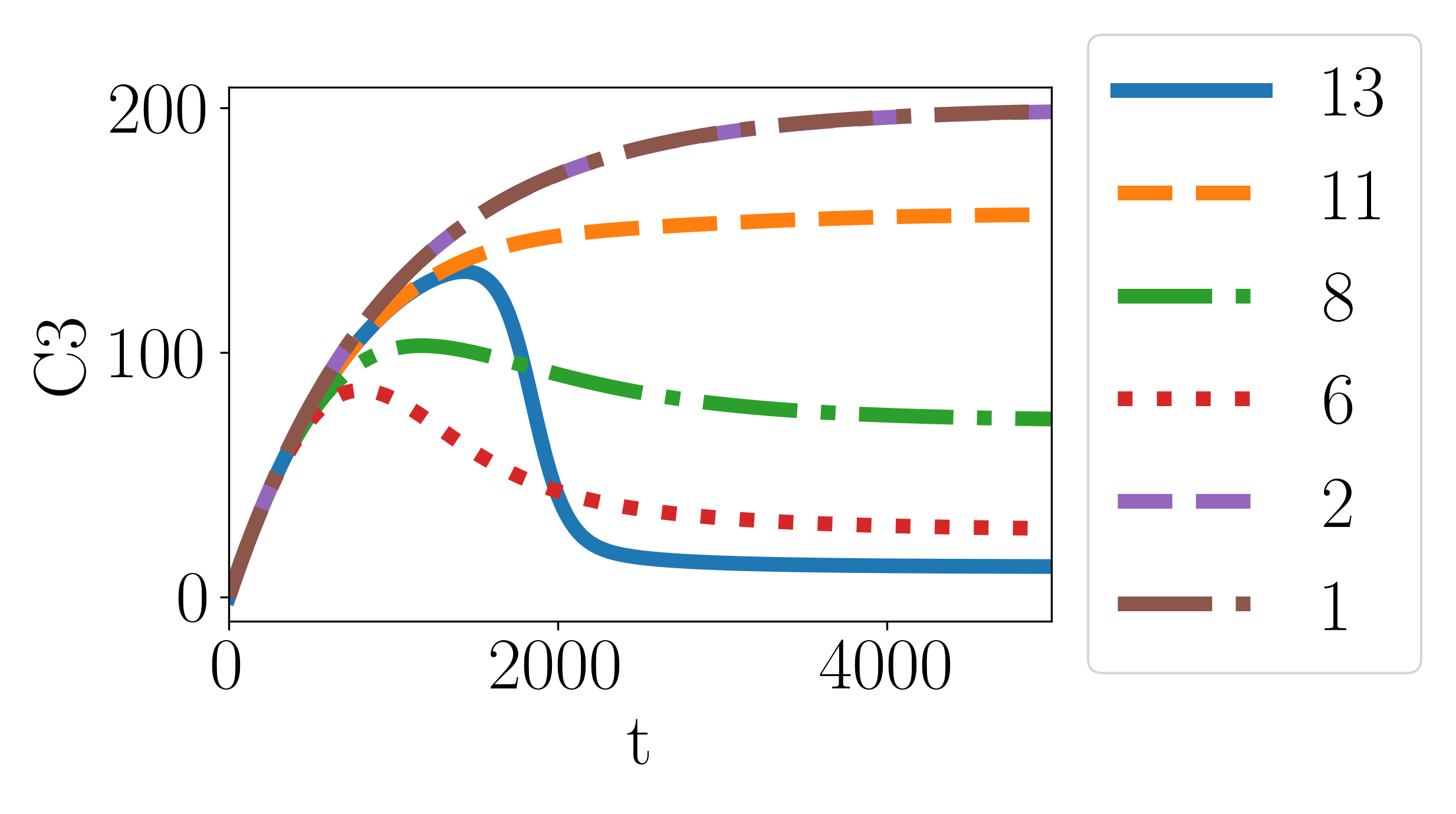}
        \vspace{-0.75cm}
        \caption{Experiment on Model 1} \label{fig:greaterredpower:1}
    \end{subfigure}
    \qquad\quad
    \begin{subfigure}[b]{0.45\textwidth}
        \includegraphics[width=\textwidth]{./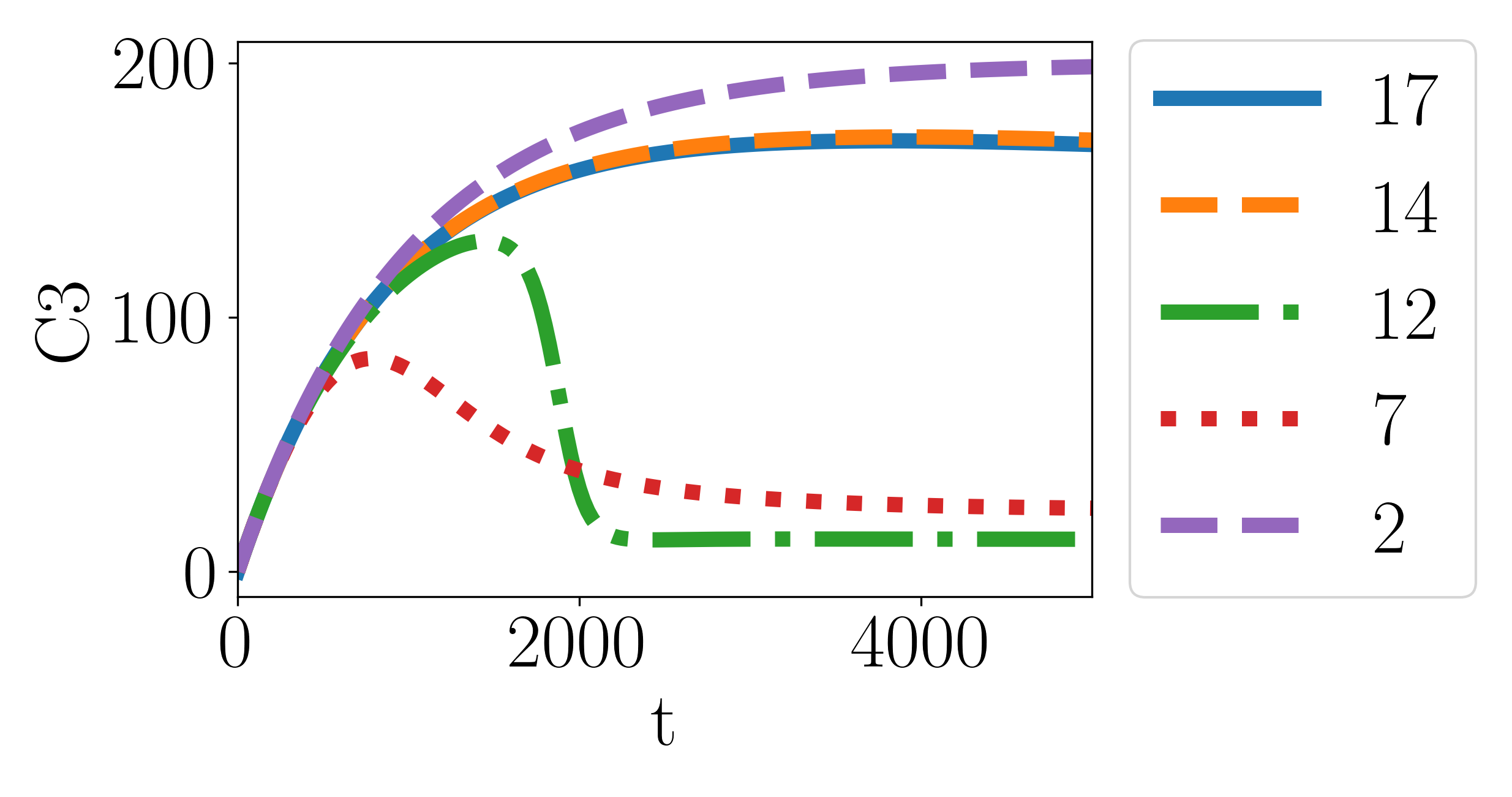}
                \vspace{-0.75cm}
        \caption{Experiment on Model 2}\label{fig:greaterredpower:2}
    \end{subfigure}
    \qquad\quad
    \begin{subfigure}[b]{0.45\textwidth}
        \includegraphics[width=\textwidth]{./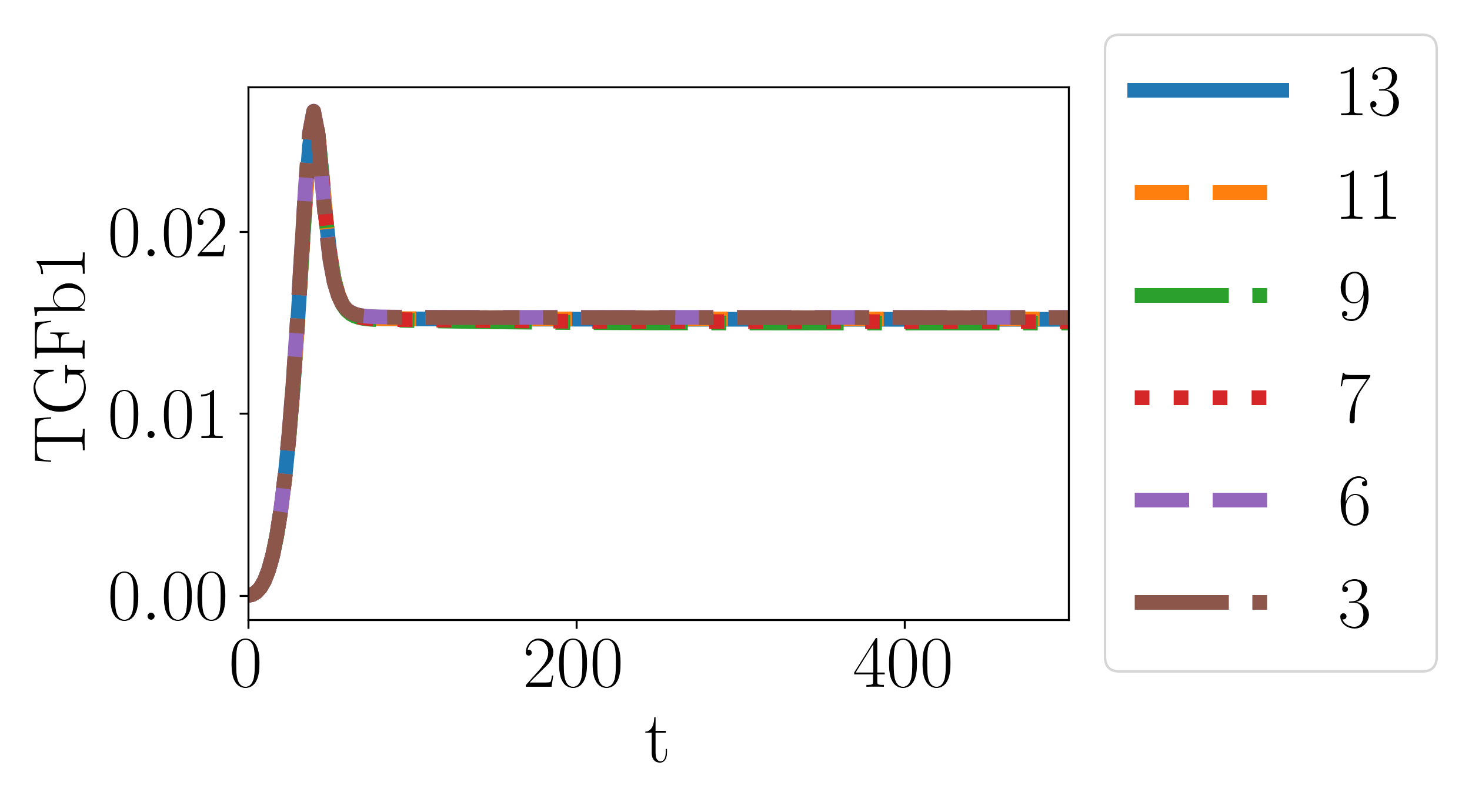}
                \vspace{-0.75cm}
        \caption{Experiment on Model 3}\label{fig:greaterredpower:3}
    \end{subfigure}
    \qquad\quad
    \begin{subfigure}[b]{0.45\textwidth}
        \includegraphics[width=\textwidth]{./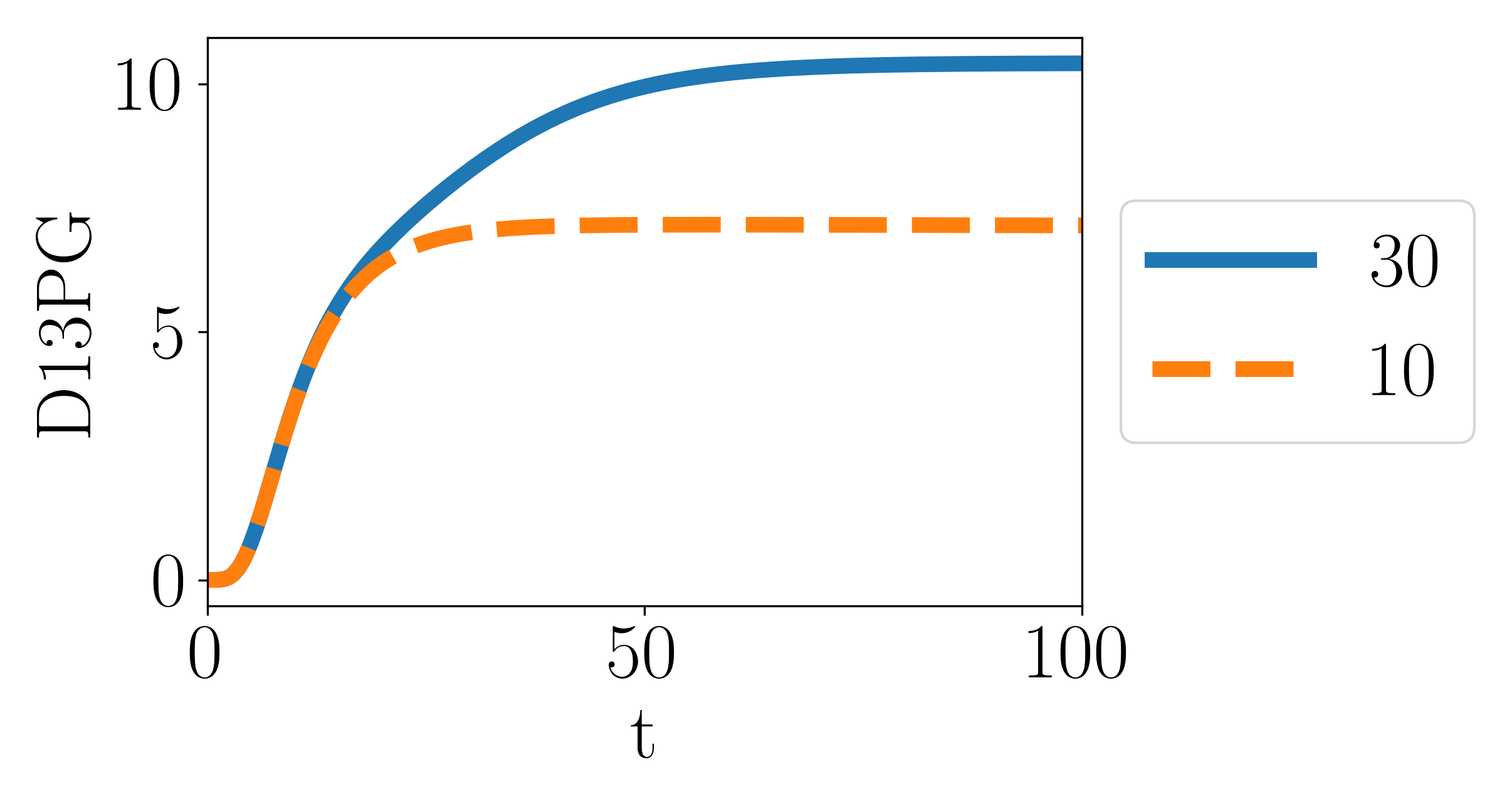}
                \vspace{-0.75cm}
        \caption{Experiment on Model 4}\label{fig:greaterredpower:4}
    \end{subfigure}
    \qquad\quad
    \begin{subfigure}[b]{0.45\textwidth}
        \includegraphics[width=\textwidth]{./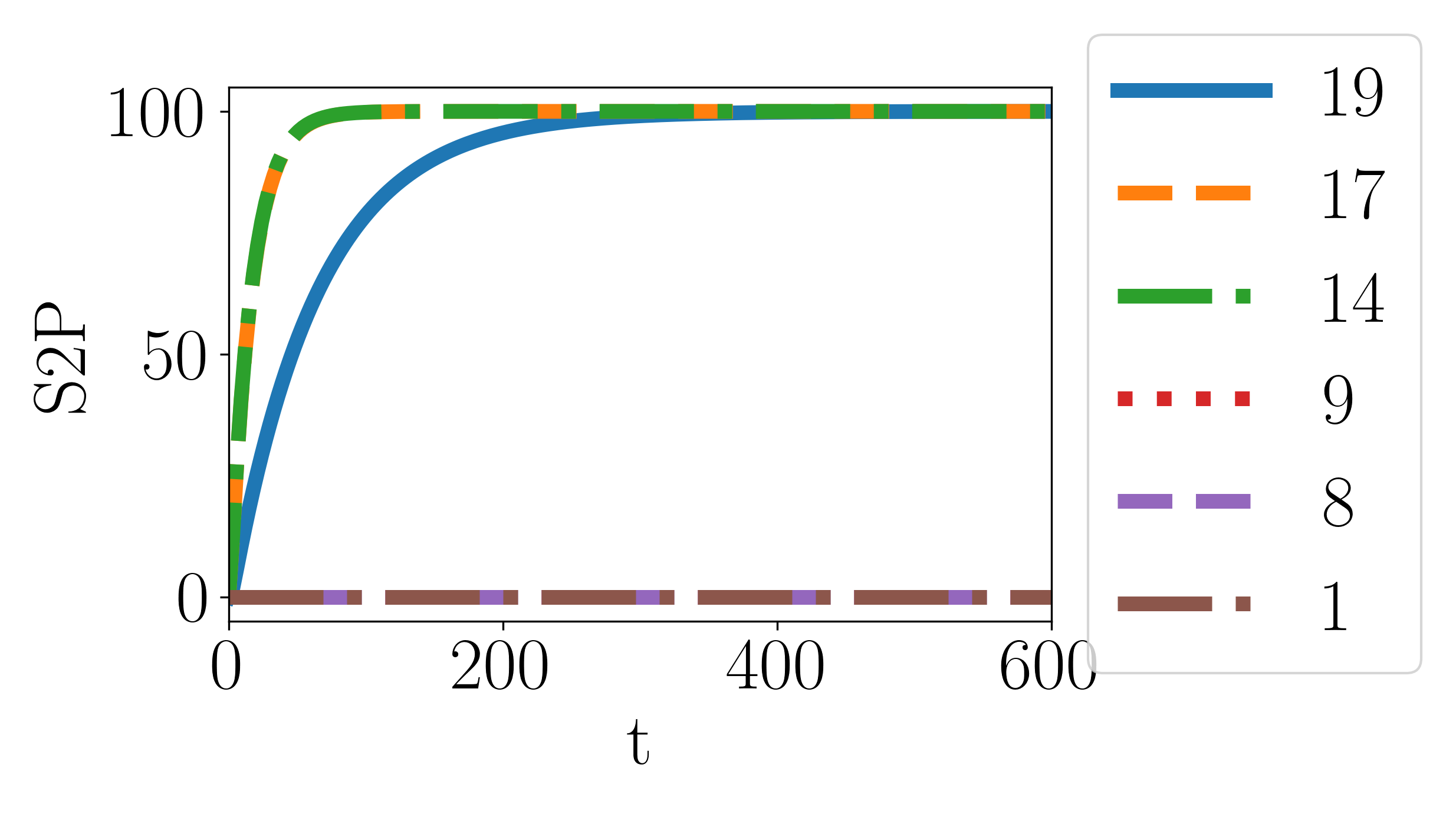}
                \vspace{-0.75cm}
        \caption{Experiment on Model 5}\label{fig:greaterredpower:5}
    \end{subfigure}\qquad\quad
    \begin{subfigure}[b]{0.45\textwidth}
        \includegraphics[width=\textwidth]{./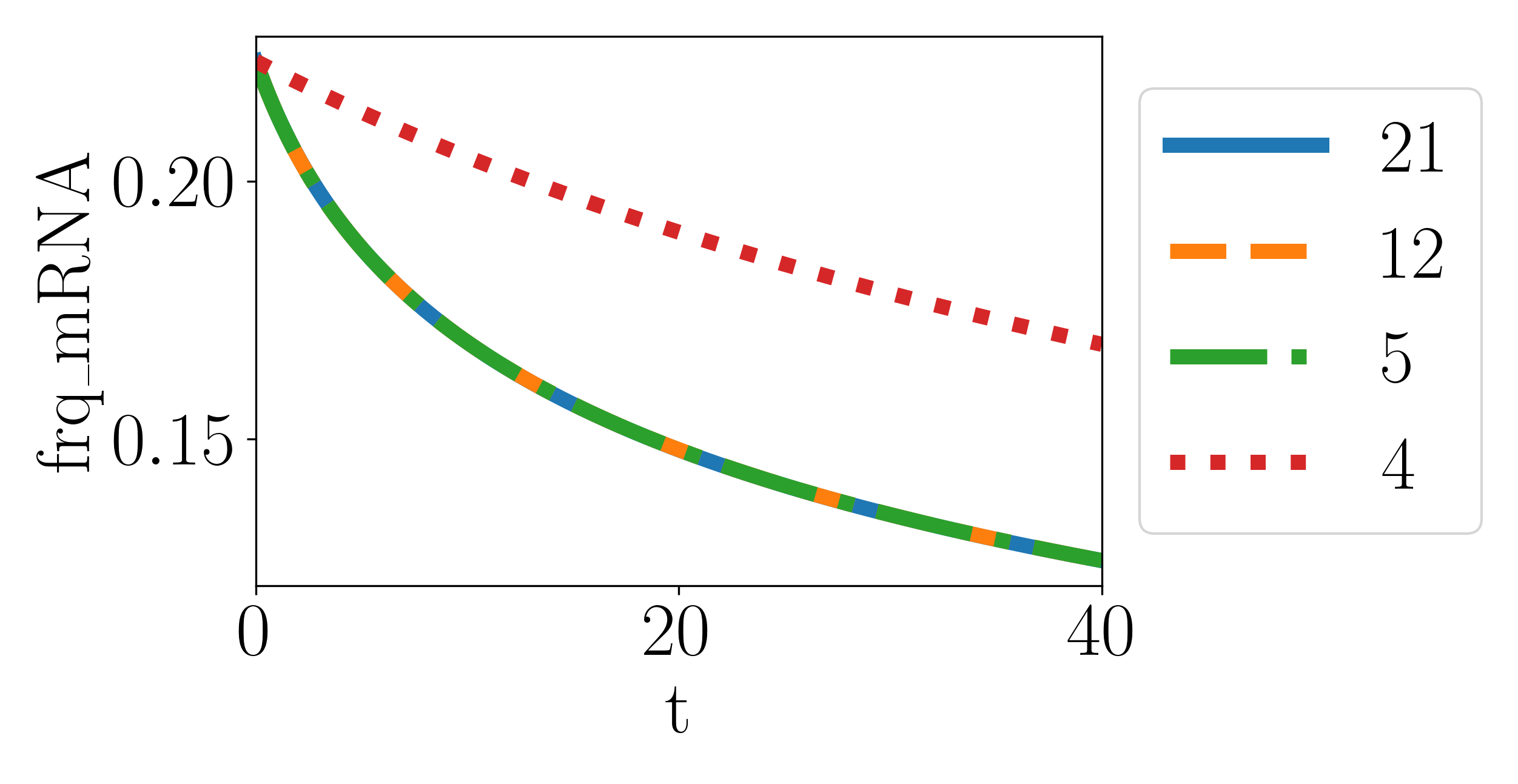}
                \vspace{-0.75cm}
        \caption{Experiment on Model 6}\label{fig:greaterredpower:6}
    \end{subfigure}\qquad\quad
    \begin{subfigure}[b]{0.45\textwidth}
        \includegraphics[width=\textwidth]{./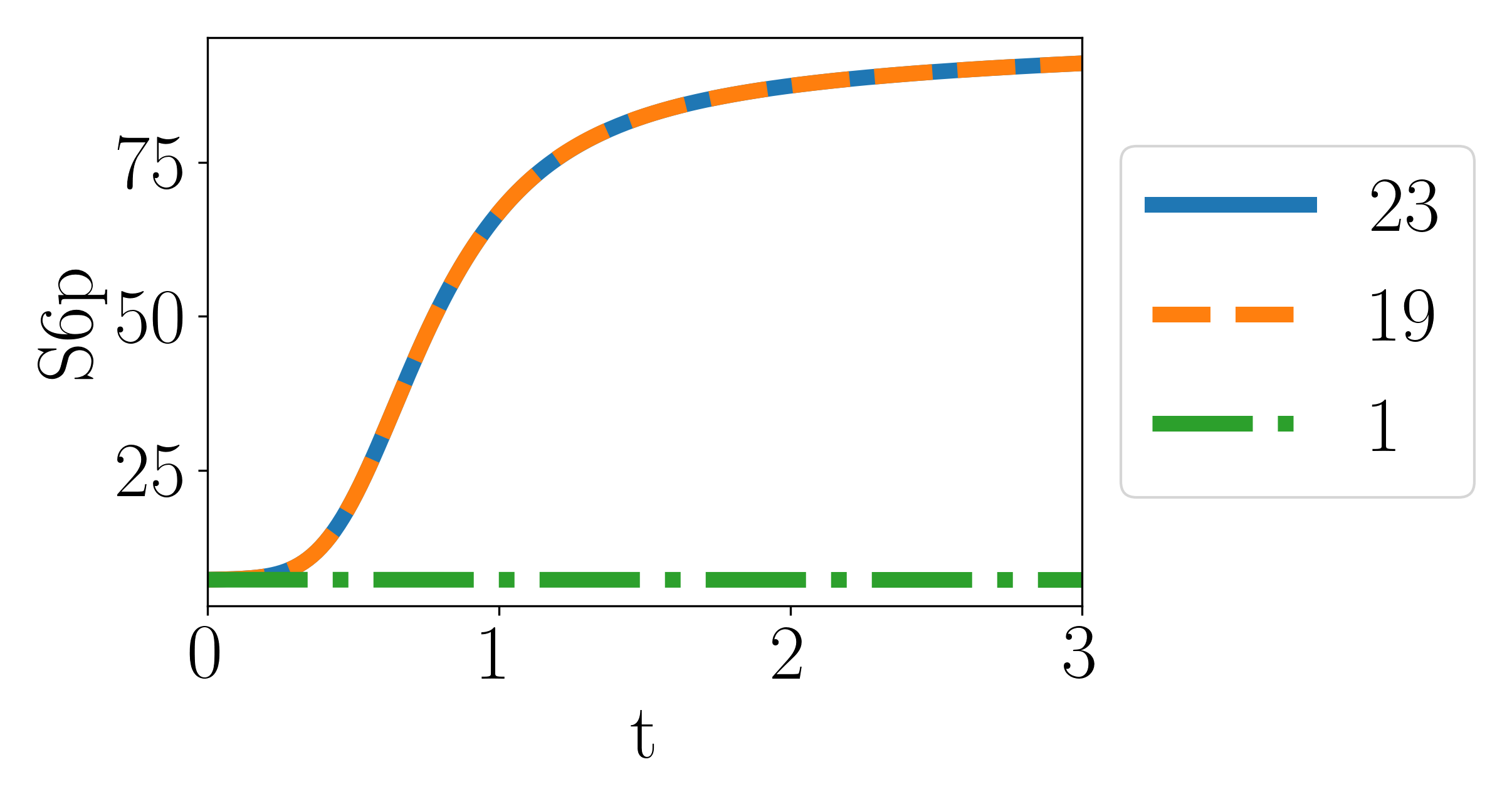}
                \vspace{-0.75cm}
        \caption{Experiment on Model 7}\label{fig:greaterredpower:7}
    \end{subfigure}\qquad\quad
    \begin{subfigure}[b]{0.45\textwidth}
        \includegraphics[width=\textwidth]{./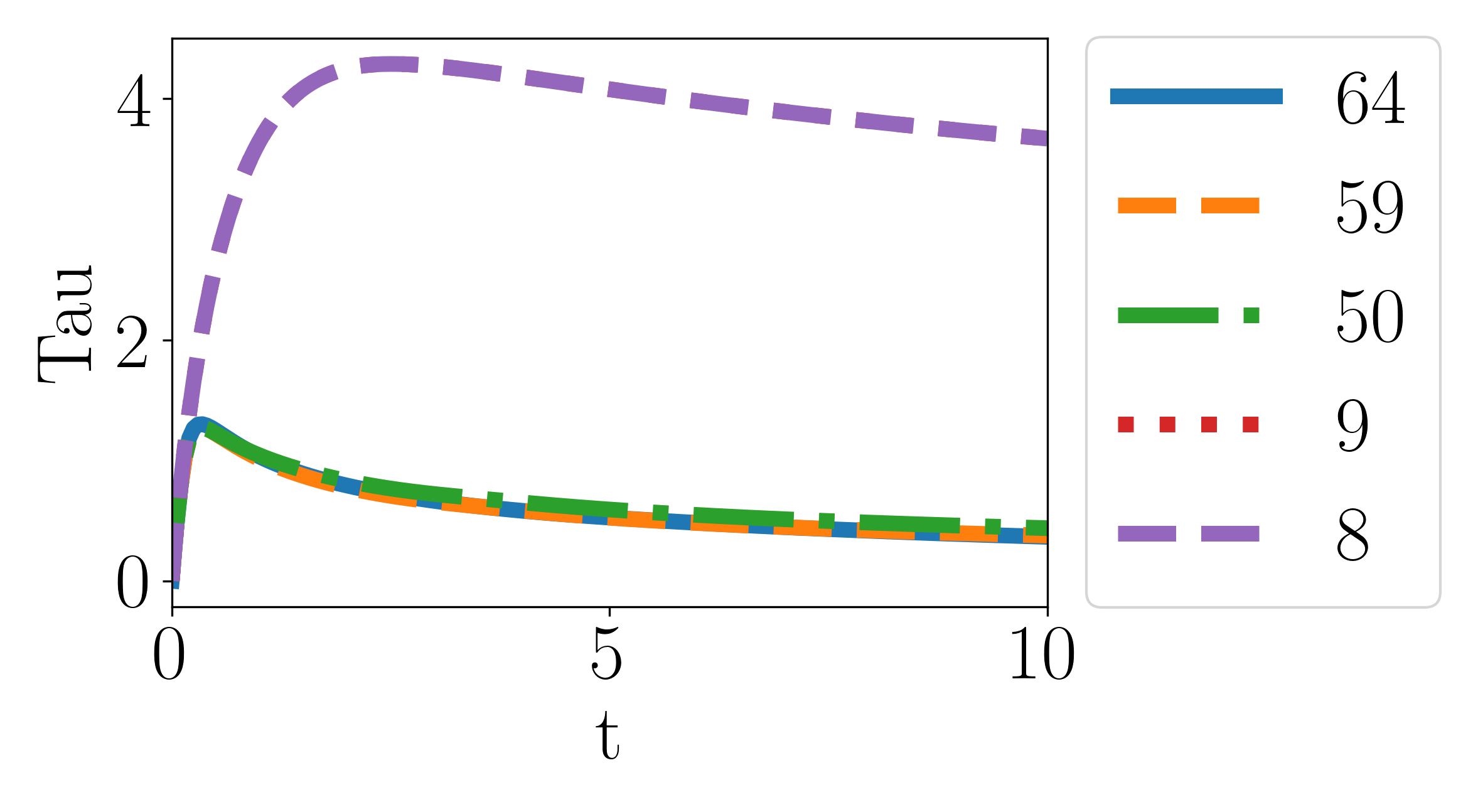}
                \vspace{-0.75cm}
        \caption{Experiment on Model 8}\label{fig:greaterredpower:8}
    \end{subfigure}
    %\begin{subfigure}[b]{0.45\textwidth}
        %\includegraphics[width=\textwidth]{./}
        %\caption{Experiment on Model 9}\label{fig:greaterredpower:9}
    %\end{subfigure}
    \vspace{-0.3cm}
    \caption{
        Simulation of exact and approximately lumped models from Table~\ref{tab:greaterredpower}, for varying cutoff sizes.
        The continuous blue line refers to exact constrained lumping.
    }
    \label{fig:greaterredpower}
\end{figure}

\subsection{Scalability analysis}\label{sec_eval:scal}
We evaluate the scalability of our approach using a parametric model of multisite protein phosphorylation which can be made combinatorially larger by increasing its number of \emph{binding sites}~\cite{FEBS:FEBS7027}.
This mechanism is key in the study of cellular processes~\cite{Gunawardena11102005}. 
The model describes the phosphorylation and dephosphorylation of a substrate with $n$ different binding sites, each capable of being in 4 different states~\cite{sneddon2011efficient}.
The resulting model requires $4^n+2$ variables to track the evolution of all protein configurations as well as the kinase and phosphatase concentrations.

The exact constrained lumping of this model was studied in~\cite{ovchinnikov_clue_2021}.
We choose the concentration of free kinase as the observable of interest.
For our setup, we added random noise to each reaction, taken from  $-5\%$ to $+5\%$ of the parameter value appearing in each reaction (the same parameter could have different values for different reactions). 
This ensured that the perturbed model was not exactly reducible.
Next, we ran Algorithm~\ref{alg:acleps} with a maximum reduction size $m^*= 10$ and $d_{\min}=10^{-4}$ for models with 2 to 5 binding sites.
To compute the error, the time horizon was set to $2$, and the initial conditions were taken from the original paper.
All simulations were computed using the \texttt{LSODA} stiff solver.

\begin{table}[t]
\resizebox{\textwidth}{!}{
        \begin{tabular}{rrrrrrrrrr}
            \hline
            \multicolumn{1}{c}{\emph{Sites}} & \multicolumn{1}{c}{\emph{Size}} & \multicolumn{1}{c}{\emph{Red. Size}} & \multicolumn{1}{c}{\emph{Iter.}} & \multicolumn{1}{c}{\emph{Time (min)}} & \multicolumn{1}{c}{\emph{Original (s)}} &  \multicolumn{1}{c}{\emph{Red. (s)}} & \multicolumn{1}{c}{\emph{$e(T)$}} & \multicolumn{1}{c}{\emph{$e_{Rel}(T)$ (\%)}} & \multicolumn{1}{c}{\emph{$e_{\max}$}} \\
            %name & size & reduction_size & iterations & lumping_time & exact_sim_time & lumped_sim_time & error_T & error_relative_T & max_error \\
            \hline
            2 & 24 & 5 & 8 & 0.003 & 0.116 & 0.022 & 0.018 & 2.60 & 0.274 \\
            3 & 72 & 4 & 8 & 0.214 & 0.186 & 0.020 & 0.004 & 1.00 & 0.236 \\
            4 & 264 & 4 & 8 & 8.024 & 1.429 & 0.032 & 0.001 & 0.30 & 0.208 \\
            5 & 1032 & 4 & 8 & 371.683 & 13.303 & 0.025 & 0.001 & 0.60 & 0.199 \\
            \hline
        \end{tabular}
    }
    \caption{Approximate lumpings on multisite phosphorylation.}
    \label{tab:large_exp}
\end{table}

Table~\ref{tab:large_exp} presents the results for multisite phosphorylation models.
Column \emph{Sites} displays the number of binding sites in the model.
The size of the original model and that of the reduced model are shown in the columns \emph{Size} and \emph{Red. Size}, respectively.
The number of iterations and the total computation time of Algorithm~\ref{alg:acleps} are displayed in the columns \emph{Iter.} and \emph{Time}, respectively.
The columns \emph{Original} and \emph{Red.} contain the time used to compute the simulation using the non-reduced model and the reduced model, respectively.
The absolute and relative error at steady-state are shown in the columns $e(T)$ and $e_{Rel}(T)$.
While the maximum absolute error is shown in $e_{\max}$.

As seen in Table~\ref{tab:large_exp}, our approach can find a reduction with only 4 species for a model without exact reductions. 
These reductions have low errors (less than $3\%$ for all models).
This is a promising result for the study of more robust reductions.
The runtime depends on the size of the model. 
Larger models take considerably longer as there is a combinatorial increment in the number of monomials needed to represent the Jacobian. 
This, in turn, gets amplified by the number of iterations necessary to find a lumping with the given size restriction, which depends on the tolerance $d$ for Algorithm~\ref{alg:acleps}.

\section{Conclusion}\label{sec_conc}

We have proposed approximate constrained lumping, a relaxation of the exact reduction technique constrained lumping. This allows for more aggressive reductions at the cost of introducing, in a controlled way, errors in the dynamics of the reduced model.
The technique can be applied to dynamical systems with polynomial or rational derivatives, common in computational biology.
For a given dynamical system, our algorithm utilizes a numerical tolerance to efficiently compute a reduced system that preserves the evolution of a linear combination of specific variables of interest up to a bounded error.
We proved that the error bound is proportional to the numerical threshold.
Additionally, we introduced a heuristic to obtain appropriate numerical tolerances based on the model size. 
We demonstrated effectiveness and scalability by obtaining low-error reductions for a collection of published biological models.
Future work will explore robust reductions by considering models with uncertain kinetic parameters.

\bibliographystyle{plain}
\bibliography{cmsb}

\appendix

\subsection*{Appendix: Proofs}

%%%%%%%%%%%%%%%%%%%%%%%%%%%%%%%%%%%%%%%%%%%%%%%%%%%%%%%%%%%%%%%%%%%%%%%%%%%%%%%
% Proof of Theorem 3 error bounds
%\begin{theorem}[Error Bounds]\label{thm:errbound}
%Fix a bounded set of initial conditions $S$, a finite time horizon $T$ and assume that the respective reachable set of the $m$-dimensional rational ODE system $\dot{x} = f(x)$ is bounded on $[0;T]$.
%If there exists a compact set $\Omega$ such that $x(t), \bar{L}Lx(t) \in \Omega$ for all $t \in [0,T]$ and all $x(0)\in S$ and $f$ is analytic in $\Omega$, then there exists a constant $C \geq 0$ such that for any $\eta > 0$ for which $L$ is a $(S,T, \eta)-$ lumping, it holds that $\norm{e(t)} \leq \eta \cdot K_{C,L,T}$, where
%\[
%K_{C,L,T} = \frac{1}{C \norm{L} \norm{\bar{L}}} \left( e^{C \norm{L} \norm{\bar{L}}T}-1 \right)
%\]
%and $C$ is the Lipschitz constant of $f$ over the set of initial conditions $S$.
%%	\begin{equation*}
%%        \norm{e(t)} \leq \frac{\eta}{C \norm{L} \norm{\bar{L}}} \left( e^{C \norm{L} \norm{\bar{L}}t}-1 \right), ~ \forall t \in [0,T], ~\forall x(0)\in S.
%%    \end{equation*}
%\end{theorem}

\begin{proposition}\label{prop:errderbound}
	Let $\dot{x} = f(x)$ a system of ODEs such that $f$ is analytic on a compact set $\Omega\subseteq \mathbb{R}^{m}$.
	Given a $(S,T,\eta)-$lumping $L$, such that $x(t), \bar{L}Lx(t) \in \Omega$ for all $t \in [0,T]$, the norm of the error $ \norm{e(t)}$ is bounded as follows
	\begin{equation*}
		\norm{\dot{e}(t)} \leq \beta \norm{e(t)} + \eta, ~\forall t \in [0,T], ~\forall x(0) \in S. \label{eq:edotbound}
	\end{equation*}
\end{proposition}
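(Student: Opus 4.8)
\medskip
\noindent\textbf{Proof sketch.}
The plan is to derive the differential inequality directly from the definitions of the error and of the deviation, using Lipschitz continuity of $f$ on $\Omega$ as the only nontrivial analytic ingredient. By Definition~\ref{def:error} and Definition~\ref{def:redsys}, and since $L\dot{x}(t) = Lf(x(t))$, we have $\dot{e}(t) = \dot{y}(t) - L\dot{x}(t) = Lf(\bar{L}y(t)) - Lf(x(t))$. First I would insert the ``pivot'' point $\bar{L}Lx(t) = P_L x(t)$ and split via the triangle inequality:
\begin{equation*}
	\norm{\dot{e}(t)} \le \underbrace{\norm{Lf(\bar{L}y(t)) - Lf(\bar{L}Lx(t))}}_{(\mathrm{I})} + \underbrace{\norm{Lf(\bar{L}Lx(t)) - Lf(x(t))}}_{(\mathrm{II})}.
\end{equation*}

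Term $(\mathrm{II})$ is exactly $\dev_{L}(f,x(t))$ by Definition~\ref{def:deviation}; since $L$ is an $(S,T,\eta)$-lumping and $t \in [0,T]$, it is bounded by $\eta$. For term $(\mathrm{I})$ I would use submultiplicativity of the operator norm together with Lipschitz continuity of $f$: because $f$ is analytic on the compact set $\Omega$ it is $C^{1}$ there, hence (after passing, if necessary, to a convex compact set on which $f$ stays analytic, so that the mean value inequality applies along the segment joining the two arguments) Lipschitz with constant $C := \sup_{z\in\Omega}\norm{J(z)}$. Using $\bar{L}y(t) - \bar{L}Lx(t) = \bar{L}\bigl(y(t) - Lx(t)\bigr) = \bar{L}e(t)$, this gives
\begin{equation*}
	(\mathrm{I}) \le \norm{L}\,\norm{f(\bar{L}y(t)) - f(\bar{L}Lx(t))} \le C\,\norm{L}\,\norm{\bar{L}e(t)} \le C\,\norm{L}\,\norm{\bar{L}}\,\norm{e(t)}.
\end{equation*}
Setting $\beta := C\norm{L}\norm{\bar{L}}$ and combining the two bounds yields $\norm{\dot{e}(t)} \le \beta\norm{e(t)} + \eta$ for all $t\in[0,T]$ and all $x(0)\in S$, as claimed. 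This $\beta$ is exactly the constant appearing in the exponent of $K_{C,L,T}$ in Theorem~\ref{thm:errbound}, so the present statement feeds directly into the Gr\"onwall argument used there.

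The main obstacle is a domain issue rather than an analytic one: to apply the Lipschitz estimate in $(\mathrm{I})$ one needs $\bar{L}y(t)$ — not just $x(t)$ and $\bar{L}Lx(t)$ — to lie in a region where $f$ is $C^{1}$ with the stated constant, and that region must be convex so the segment from $\bar{L}Lx(t)$ to $\bar{L}y(t)$ stays inside it. I would handle this by enlarging $\Omega$ to a convex compact set on which $f$ remains analytic (for rational $f$ this means keeping the denominators bounded away from zero), which is admissible because $e(0)=0$ and the reachable set is bounded on $[0,T]$, so $\bar{L}y(t)$ stays in a bounded neighbourhood of $\bar{L}Lx(t)$; the constant $C$ is then taken as the Lipschitz constant of $f$ on this enlarged set. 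The remaining steps are routine.
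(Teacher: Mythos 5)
Your proof is correct and follows essentially the same route as the paper: the same splitting of $\dot{e}(t)=Lf(\bar{L}y)-Lf(x)$ via the pivot $\bar{L}Lx(t)$, bounding the second term by $\eta$ through the deviation-tolerance hypothesis and the first by Lipschitz continuity of $f$ on a compact set containing the relevant trajectories, giving $\beta = C\norm{L}\norm{\bar{L}}$. Your extra care about where $\bar{L}y(t)$ lives (enlarging to a convex compact set on which $f$ stays analytic) is a refinement the paper's proof passes over silently, but it does not change the argument.
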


\begin{proof}
	Note that for any initial condition $x(0)\in S$, we can compute
	\begin{align*}
		\norm{\dot{e}(t)} & =  \norm{\dot{y}(t)-L \dot{x}(t)}                                         \\
		                  & = \norm{ Lf(\bar{L}y) -Lf(x)}                                             \\
		                  & = \norm{ Lf(\bar{L}y) - Lf(\bar{L}Lx) + Lf(\bar{L}Lx) -Lf(x)}             \\
		                  & \leq \norm{ Lf(\bar{L}y) - Lf(\bar{L}Lx)}  + \norm{ Lf(\bar{L}Lx) -Lf(x)}
	\end{align*}
	
	To bound the first term, recall that $f$ is analytic on $\Omega$.
	Therefore $f$ is locally Lipschitz in $\Omega$.
	Then there is a constant $C$ such that
	\begin{equation*}
		\norm{ Lf(\bar{L}y) - Lf(\bar{L}Lx)} \leq C\norm{L} \norm{\bar{L}} \norm{e(t)},
	\end{equation*}
	for all $t \in [0,T]$ and all $x(0)\in S$.
	
	Using the fact that $L$ is an $(S,T,\eta)-$lumping, we get that
	\begin{equation*}
		\norm{\dot{e}(t)} \leq C\norm{L} \norm{\bar{L}}\norm{e(t)} + \eta,
	\end{equation*}
	for all $t \in [0,T]$ and all $x(0)\in S$. The result follows by setting $\beta = \norm{L}\norm{\bar{L}}$.
\end{proof}

\begin{lemma}\label{lem:groenwall}
	\cite[Lemma 2]{iacobelli_lumpability_2013}.
	Let $f$ be a continuous scalar function on $[0,T]$ which has a right derivative $D_{R}f(t)$ such that
	\begin{equation*}
		D_{R}f(t) \leq \beta f(t) + \gamma,
	\end{equation*}
	for all $t\in [0,T)$, where $\beta$ and $\gamma$ re constants and $f(0)=0$. Then
	\begin{equation*}
		f(t)\leq \frac{\gamma}{\beta}\left( e^{\beta t} -1 \right),
	\end{equation*}
	for all $t\in [0,T]$.
\end{lemma}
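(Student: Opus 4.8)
The plan is to reduce the differential inequality to a monotonicity statement via the classical integrating‑factor substitution. Here $f$ denotes the generic continuous scalar function of the lemma (not the ODE drift used elsewhere in the paper); assume first $\beta\neq 0$. I would set $u(t):=e^{-\beta t}f(t)$, which is continuous on $[0,T]$ with $u(0)=f(0)=0$. Since $t\mapsto e^{-\beta t}$ is differentiable, the right derivative obeys the usual product rule, so for every $t\in[0,T)$ we get $D_{R}u(t)=e^{-\beta t}\bigl(D_{R}f(t)-\beta f(t)\bigr)\le e^{-\beta t}\gamma$, using the hypothesis $D_{R}f(t)\le\beta f(t)+\gamma$.

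Next I would introduce the auxiliary function $h(t):=u(t)-\dfrac{\gamma}{\beta}\bigl(1-e^{-\beta t}\bigr)$. Then $h$ is continuous on $[0,T]$, $h(0)=0$, and $D_{R}h(t)=D_{R}u(t)-\gamma e^{-\beta t}\le 0$ for all $t\in[0,T)$. Invoking the standard real‑analysis fact that a continuous function on an interval whose right (Dini) derivative is nonpositive throughout is non‑increasing, I conclude $h(t)\le h(0)=0$ for all $t\in[0,T]$, i.e. $u(t)\le\frac{\gamma}{\beta}\bigl(1-e^{-\beta t}\bigr)$. Multiplying by $e^{\beta t}>0$ yields $f(t)=e^{\beta t}u(t)\le\frac{\gamma}{\beta}\bigl(e^{\beta t}-1\bigr)$, which is the claim. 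The degenerate case $\beta=0$ is handled in the same way with $h(t)=u(t)-\gamma t$, giving $f(t)\le\gamma t$.

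The only genuinely delicate point is the monotonicity lemma used above: deducing that $h$ is non‑increasing from $D_{R}h\le 0$ cannot rely on differentiability of $h$, so one must either quote this fact or prove it by the standard argument — fix $\epsilon>0$, let $\tau=\inf\{t:h(t)>h(0)+\epsilon t\}$, and use continuity together with $D_{R}h(\tau)\le 0<\epsilon$ to obtain a contradiction, then let $\epsilon\to 0$. A further minor subtlety is that the hypothesis only controls $D_{R}f$ on $[0,T)$ and not at the endpoint; continuity of the functions involved extends the conclusion to the closed interval $[0,T]$. For context, combining this lemma with Proposition~\ref{prop:errderbound} — taking $f(t)=\norm{e(t)}$, $\beta=C\norm{L}\norm{\bar L}$, $\gamma=\eta$, and noting $D_{R}\norm{e(t)}\le\norm{\dot e(t)}$ — gives exactly the bound $\norm{e(t)}\le\eta\,K_{C,L,T}$ of Theorem~\ref{thm:errbound}.
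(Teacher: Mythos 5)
Your proof is correct. Note, however, that the paper itself does not prove this statement at all: it is quoted verbatim as Lemma~2 of the cited reference \cite{iacobelli_lumpability_2013}, so there is no internal argument to compare against. Your route --- the integrating factor $u(t)=e^{-\beta t}f(t)$, the product rule for the right derivative against the differentiable factor $e^{-\beta t}$, the comparison function $h(t)=u(t)-\tfrac{\gamma}{\beta}\bigl(1-e^{-\beta t}\bigr)$, and the monotonicity principle for continuous functions with nonpositive right (Dini) derivative --- is the standard elementary proof of this Gronwall-type inequality, and you correctly flag the two delicate points: that the monotonicity step needs the Dini-derivative argument rather than ordinary differentiability, and that continuity extends the bound from $[0,T)$ to the endpoint $T$. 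Two small remarks: the displayed bound $\tfrac{\gamma}{\beta}(e^{\beta t}-1)$ is undefined at $\beta=0$, so your separate case $f(t)\le\gamma t$ is really a (harmless) extension beyond the stated lemma --- in the paper's application $\beta=C\norm{L}\norm{\bar{L}}>0$ anyway; and your closing observation about how the lemma combines with Proposition~\ref{prop:errderbound} via $D_{R}\norm{e(t)}\le\norm{\dot e(t)}$ matches exactly how the paper uses it in the proof of Theorem~\ref{thm:errbound}, which adds self-containedness the paper leaves implicit.
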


\begin{proof}[Theorem \ref{thm:errbound}]
	By hypothesis, there exists a compact set $\Omega \subseteq\mathbb{R}^{m}$ such that the $x(t), \bar{L}Lx \in \Omega$ for all $t \in [0,T]$ and for all $x(0) \in S$.
	Given that $f(x)$ is analytic in $\Omega$, it is possible to apply Lemma \ref{lem:groenwall} and Proposition \ref{prop:errderbound} to obtain the desired result.
\end{proof}
%%%%%%%%%%%%%%%%%%%%%%%%%%%%%%%%%%%%%%%%%%%%%%%%%%%%%%%%%%%%%%%%%%%%%%%%%%%%%%%

%%%%%%%%%%%%%%%%%%%%%%%%%%%%%%%%%%%%%%%%%%%%%%%%%%%%%%%%%%%%%%%%%%%%%%%%%%%%%%%
%% Theorem 4 Complexity of approximate constrained lumping

	\begin{proof}[Theorem~\ref{thm:complexity}]
        We first prove the polynomial complexity of Algorithm~\ref{alg:acl}.
		To this aim, we assume a naive implementation of the matrix operations.
		We begin by analyzing the complexity of the operations performed in the main loop (Line~\ref{alg:acl:mainloop}) of Algorithm~\ref{alg:acl}.
		To find the complexity of computing $\pi_{i} = rJ_{i}L^{T}L$, recall that $L$ is an $p\times m$ matrix.
		This means that computing $L^{T}L$ can be done in $\mathcal{O}(pm^2)$, while $rJ_{i}$ can be computed in $\mathcal{O}(m^2)$. Hence, $\pi_{i}$ can be computed in $\mathcal{O}((p+2)m^2)$. Since $rJ_i$ and $\pi_i$ are of dimensions $1\times m$, computing $d_i = rJ_i - \pi_i$ has complexity $\mathcal{O}(m)$. Instead, computing $d_i/\norm{d_i}$ has complexity $\mathcal{O}(4m)$. Overall, we have that the complexity of the operations inside the main loop (Line~\ref{alg:acl:mainloop}) of Algorithm~\ref{alg:acl} is $\mathcal{O}((p+2)m^2)$. As this loop is computed $Np$ times, we have that the total complexity of Algorithm~\ref{alg:acl} is $\mathcal{O}(Np(p+2)m^2)$. The result follows from the fact that Algorithm~\ref{alg:findJ} can be implemented with a complexity of $\mathcal{O}(m^2A)$, where $A$ is the arithmetic cost of computing an entry of $f(x)$.
	\end{proof}

%%%%%%%%%%%%%%%%%%%%%%%%%%%%%%%%%%%%%%%%%%%%%%%%%%%%%%%%%%%%%%%%%%%%%%%%%%%%%%%

%%%%%%%%%%%%%%%%%%%%%%%%%%%%%%%%%%%%%%%%%%%%%%%%%%%%%%%%%%%%%%%%%%%%%%%%%%%%%%%
%% Theorem 5 Correctness of approximate constrained lumping

\begin{theorem}\label{thm:schs}
	%\footnote{Theorem 9.19 - PMA Rudin}
	\cite[Theorem 9.19]{rudin_principles_1976}.
	Let $\Omega \in \mathbb{R}^{m}$ open and convex and let $f \in C^{1}(\Omega,\mathbb{R}^{n})$.
	If there is a constant $C<\infty$ such that
	\begin{equation*}
		\norm{ Df(x)}  \leq C ~\forall x \in \Omega
	\end{equation*}
	then
	\begin{equation*}
		\norm{ f(x_{1}) - f(x_{0})} \leq C  \norm{x_{0} - x_{1}}
	\end{equation*}
\end{theorem}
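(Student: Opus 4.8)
The statement is the classical mean value inequality for $C^1$ maps [Rudin, Thm.~9.19], so in the paper it is merely quoted; nevertheless, here is the argument I would give. The plan is to reduce the multivariate claim to a one-dimensional computation along the segment joining $x_0$ and $x_1$, which is legitimate precisely because $\Omega$ is convex. First I would fix $x_0, x_1 \in \Omega$ and introduce the path $\gamma\colon[0,1]\to\Omega$, $\gamma(t) = (1-t)x_0 + t x_1$; convexity of $\Omega$ guarantees $\gamma(t)\in\Omega$ for every $t$. Then I would set $\varphi := f\circ\gamma\colon[0,1]\to\mathbb{R}^{n}$, which is $C^1$ by the chain rule with $\varphi'(t) = Df(\gamma(t))(x_1-x_0)$.

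Next I would apply the fundamental theorem of calculus componentwise to obtain $f(x_1)-f(x_0) = \varphi(1)-\varphi(0) = \int_0^1 \varphi'(t)\,dt = \int_0^1 Df(\gamma(t))(x_1-x_0)\,dt$. Taking norms and using the standard estimate $\norm{\int_0^1 v(t)\,dt}\le \int_0^1\norm{v(t)}\,dt$ for continuous vector-valued integrands, followed by submultiplicativity of the operator norm, yields $\norm{f(x_1)-f(x_0)} \le \int_0^1 \norm{Df(\gamma(t))}\,\norm{x_1-x_0}\,dt \le C\,\norm{x_1-x_0}$, where the last step uses the hypothesis $\norm{Df(x)}\le C$ evaluated at $x=\gamma(t)\in\Omega$.

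An alternative that sidesteps vector-valued integration is to choose a unit vector $u\in\mathbb{R}^{n}$ with $\norm{f(x_1)-f(x_0)} = \langle u, f(x_1)-f(x_0)\rangle$ and apply the ordinary scalar mean value theorem to $t\mapsto\langle u,\varphi(t)\rangle$, giving $\langle u,\varphi(1)-\varphi(0)\rangle = \langle u, Df(\gamma(\xi))(x_1-x_0)\rangle$ for some $\xi\in(0,1)$, which one then bounds by Cauchy--Schwarz and the operator-norm hypothesis. The one point to be careful about --- and essentially the only conceptual obstacle --- is that the naive ``equality'' form of the mean value theorem fails for vector-valued maps, so the proof must route through either the integral form of the fundamental theorem or the scalar reduction above; once that is done the required estimate is immediate and no further difficulty arises.
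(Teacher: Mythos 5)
Your proposal is correct; the paper offers no proof of this statement, quoting it directly from Rudin, and your second route (reducing to the scalar mean value theorem via the auxiliary function $t\mapsto\langle u, f(\gamma(t))\rangle$ and Cauchy--Schwarz) is precisely the argument of \cite[Theorem 9.19]{rudin_principles_1976}. Your first route via the fundamental theorem of calculus and the estimate $\norm{\int_0^1 v(t)\,dt}\le\int_0^1\norm{v(t)}\,dt$ is an equally valid standard alternative, and you correctly identify the one genuine subtlety, namely that the equality form of the mean value theorem fails for vector-valued maps.
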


\begin{proposition} \label{prop:bound}
	Let $L$ be the matrix computed via Algorithm \ref{alg:acl} with tolerance $\varepsilon$ and let $\Omega \subset \mathbb{R}^{m}$ be an open, bounded and convex set.
	If $x \in \Omega$ and $\bar{L}Lx \in  \Omega $, $f$ is analytic in $\bar{\Omega}$, and $J(x)$ can be written in the form given by Equation~\ref{eq:jacrep}, then it follows that
	\begin{equation}
		\norm{ L f(\bar{L}L x) - L f(x)} \leq C\varepsilon \norm{ \bar{L}L x - x},
	\end{equation}
	where $C$ is a constant.
\end{proposition}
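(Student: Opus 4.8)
The plan is to apply the mean value inequality of Theorem~\ref{thm:schs} to the map $z\mapsto Lf(z)$ along the line segment joining $x$ to $\bar{L}Lx$, exploiting the fact that this segment points in a direction along which $Lf$ varies slowly \emph{by construction of $L$}.

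First I would record two structural facts about the output $L\in\RE^{l\times m}$ of Algorithm~\ref{alg:acl}. Every row it appends is the normalization of $rJ_i-rJ_iP_L=rJ_i(I-P_L)$, which is orthogonal to the current row space and of unit length; hence, by induction on the iterations, $L$ has orthonormal rows, so $\bar{L}=L^{T}$ and $P_L=\bar{L}L=L^{T}L$ is the orthogonal projection onto $\rsp(L)$. In particular $\bar{L}Lx-x=(P_L-I)x\in\rsp(L)^{\perp}$. Second, the termination condition of Algorithm~\ref{alg:acl} is exactly that the check of Line~\ref{alg:acl:check} fails for every row and every matrix, that is, $\norm{rJ_i(I-P_L)}\le\varepsilon$ for each row $r$ of $L$ and each matrix $J_i$ given to the algorithm.

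Next I would obtain a uniform bound on $\norm{LJ(z)(I-P_L)}$ for $z\in\bar\Omega$. Using the representation $J(z)=\sum_i J_i\mu_i(z)$ from~\eqref{eq:jacrep}, the $k$-th row of $LJ(z)(I-P_L)$ is $\sum_i\mu_i(z)\,r_kJ_i(I-P_L)$, where $r_k$ is the $k$-th row of $L$; by the triangle inequality and the previous step its norm is at most $\big(\sum_i|\mu_i(z)|\big)\varepsilon$. Since the $\mu_i$ are analytic, hence continuous, on the compact set $\bar\Omega$, the number $C':=\sup_{z\in\bar\Omega}\sum_i|\mu_i(z)|$ is finite, so every row of $LJ(z)(I-P_L)$ has norm at most $C'\varepsilon$ and therefore $\norm{LJ(z)(I-P_L)}\le C\varepsilon$ for a constant $C$ absorbing $C'$ and a $\sqrt{l}$ factor. (If the matrices of the chosen representation are not literally among those fed to Algorithm~\ref{alg:acl} but only lie in their span $\mathcal{V}_J$, rewriting each $J_i$ in that basis multiplies $C$ by a further model-dependent factor; this is the one bookkeeping point that needs care.)

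Finally I would apply Theorem~\ref{thm:schs} to the $C^{1}$ map $h:=Lf$ restricted to the open, convex set $\Omega\cap\big(x+\rsp(L)^{\perp}\big)$, viewed as an open convex subset of the $(m-l)$-dimensional affine space $x+\rsp(L)^{\perp}$. On this slice the differential of $h$ acts only on directions $\xi\in\rsp(L)^{\perp}$, for which $(I-P_L)\xi=\xi$, so $Dh(w)\xi=LJ(w)(I-P_L)\xi$ and $\norm{Dh(w)}\le\norm{LJ(w)(I-P_L)}\le C\varepsilon$ throughout; moreover convexity of $\Omega$ together with $x,\bar{L}Lx\in\Omega$ ensures that the segment $[x,\bar{L}Lx]$, which lies in $x+\rsp(L)^{\perp}$ by the first step, stays inside $\Omega$. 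Theorem~\ref{thm:schs} then yields $\norm{Lf(\bar{L}Lx)-Lf(x)}=\norm{h(\bar{L}Lx)-h(x)}\le C\varepsilon\norm{\bar{L}Lx-x}$, the asserted inequality; equivalently, one may write $Lf(\bar{L}Lx)-Lf(x)=\int_{0}^{1}LJ\big(x+t(\bar{L}Lx-x)\big)(\bar{L}Lx-x)\,dt$ and bound the integrand pointwise by the estimate above. I expect the genuinely delicate part to be this directional mean value step --- equivalently, justifying the integral representation and the uniform bound on its integrand --- with the rest being routine tracking of dimension-dependent constants.
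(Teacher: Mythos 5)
Your proposal is correct and follows essentially the same route as the paper's proof: both restrict $Lf$ to the slice through $x$ in the direction $\ker L=\rsp(L)^{\perp}$ (the paper parametrizes it as $v\mapsto f(v+\bar{L}Lx)$ on $\pi(\Omega)$), bound the directional derivative $LJ(\cdot)(I-P_L)$ row-by-row via the representation~\eqref{eq:jacrep}, the algorithm's acceptance test, and boundedness of the $\mu_i$ on the compactum, and then conclude with the mean value inequality of Theorem~\ref{thm:schs} applied between $x$ and $\bar{L}Lx$. The bookkeeping caveat you flag (sampled matrices spanning $\mathcal{V}_J$ versus the $J_i$ of the representation) is exactly what the paper defers to Proposition~\ref{prop:analytic} and Lemma~\ref{lmm:boundJ}, so it does not affect this statement.
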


\begin{proof}
	Fix $x \in \Omega$, by hypothesis $x^{R} := \bar{L}Lx \in \Omega $.
	%This proof involves two steps
	Let $g:\ker L \to \mathbb{R}^{m}, ~v\mapsto f(v +x^{R}) $.
	
	%Let $V := \ker L$ and $R:= span_{row}L$.
	%We denote by $\pi_{i}: \mathbb{R}^{m}\to i$ for $i= V, R$ the projections onto the respective subspaces.
	%We define the map $\rho_{x}: V \to \mathbb{R}^{m}, v \mapsto v+ \bar{L}Lx$.
	%Let $g:= f \circ \rho_{x}$.
	
	We claim that
	\begin{equation}
		\norm{DLg(v)} \leq C \varepsilon,~ \forall v \in \pi(\Omega),
	\end{equation}
	where $C$ is a constant, and $\pi$ is the projection onto $\ker L$.
	
	Note that
	%\begin{equation*}
		$DLg(v) = LJ(v + x^{R}))\iota$,
	%\end{equation*}
	where $J$ is the Jacobian of $f$ and $\iota: \mathbb{R}^{l} \to \mathbb{R}^{m}$ is the inclusion given by $v\mapsto (v,0,\dots, 0)$.
	
	We compute
	\begin{equation} \label{eq:norm}
		\norm{DLg(v)} = \norm{LJ(v+x^{R})\iota} = \norm{LJ(v+x^{R})}^{\ker L },
	\end{equation} \label{eq:normdef}
	where $\norm{\cdot}^{\ker L}$ is the norm restricted to $\ker L$.
	
	Following Equation~\eqref{eq:norm}, we have to estimate
	\begin{equation}
		\norm{LJ(v+ x^{R})u}, \forall v \in \pi(\Omega), u \in \ker L \text{ s.t. } \norm{u} =1 .
	\end{equation}
	
	To this aim, we will first estimate each entry of the vector $LJ(v+x^{R})u$.
	Denote by $e_{i} \in \mathbb{R}^{m}$, the vector of zeroes with one in the $i-$th entry, and denote by $r_{i}$ the $i-$th row of $L$ .
	Let $p = v+ x^{R}$.
	%By the fact
	Given  that $u \in \ker L$, we have that
	\begin{align*}
		\norm{e_{j}LJ(p)u} & = \norm{r_{j}J(p)u}                                  
		                    = \norm{r_{j} \left( J(p) - J(p)\bar{L}L \right) u}.
	\end{align*}
	Using the fact that $J(x) = \sum_{i = 1}^k  J_i \mu_i(x)$, we have
	\begin{align*}
		\norm{e_{j}LJ(p)u} & = \norm{r_{j}J(v+ \bar{L}Lx)u}                                                                    \\
		                   & = \norm{r_{j} \left( \sum_{i = 1}^k  \left( J_i \mu_i(p) - J_i \mu_i(p)\bar{L}L\right) \right) u} = \norm{ \sum_{i = 1}^k \mu_i(p) \left( r_{j}J_i  - r_{j}J_i \bar{L}L\right)u  }.
		                   %\\
		                   %& = \norm{ \sum_{i = 1}^k \mu_i(p) \left( r_{j}J_i  - r_{j}J_i \bar{L}L\right)u  }.
	\end{align*}
	Using the triangle inequality, and the Cauchy-Schwartz inequality we have that
	\begin{align}\label{eq:boundineq}
		\nonumber
		\norm{e_{j}LJ(p)u} & \leq \sum_{i = 1}^k\norm{  \mu_i(p) \left( r_{j}J_i  - r_{j}J_i \bar{L}L\right)u  }                    %\\ \nonumber
		                   \leq \sum_{i = 1}^k\left| \mu_i(p)\right| \norm{ \left( r_{j}J_i  - r_{j}J_i \bar{L}L\right)} \norm{u} \\
		                   & \leq \sum_{i = 1}^k\left| \mu_i(p)\right|\norm{ \left( r_{j}J_i  - r_{j}J_i \bar{L}L\right)}           %\\
		                   \leq \sum_{i = 1}^k\left| \mu_i(p)\right|  \varepsilon,
	\end{align}
	where $\varepsilon$ is the tolerance used in the computation of Algorithm \ref{alg:acl}.
	
	Since $f$ is analytic in $\bar{\Omega}$, we have that its derivative is bounded in $\Omega$.
	Therefore each $\mu_{i}$ is bounded in $\Omega$ and so, by the extreme value theorem,
	%\overline{As each $\mu_{i}$ is a polynomial and $\Omega$ is bounded} Then, by the extreme value theorem,
	there exists a constant $C$  such that
	\begin{equation}\label{eq:boundmu}
		\sup_{i,v \in \pi(\Omega)} \left| \mu_{i}(v+x^{R})\right| = C.
	\end{equation}
	Combining Equation \eqref{eq:boundmu} with the previous reasoning we have that
	\begin{equation}\label{eq:entrybnd}
		\norm{e_{j}LJ(p)u} \leq C\varepsilon.
	\end{equation}
	Using Equations \eqref{eq:norm} and \eqref{eq:entrybnd}, we get that
	\begin{equation}\label{eq:bounddg}
		\norm{DLg(v)} \leq \sqrt{m} C \varepsilon, \forall v \in \pi(\Omega).
	\end{equation}
	
	As $\pi$ is a linear map, the set $\pi(\Omega)$ is open and convex.
	Therefore, we can use Equation \eqref{eq:bounddg} and Theorem \ref{thm:schs} to show that
	\begin{equation}\label{eq:boundg}
		\norm{g(v_{0})-g(v_{1})}\leq \sqrt{m}C\varepsilon \norm{v_{0} - v_{1}}, \forall v_{0},v_{1} \in \pi(\Omega).
	\end{equation}
	
	Notice that  $x^{R} = \bar{L}Lx$ is the projection of $x$ onto the rowspace of $L$.
	It follows that $\pi(\bar{L}Lx)=0$, and so $0 \in \pi(\Omega)$.
	Moreover since $x\in \Omega$, it follows that $\pi(x)= x -\bar{L}Lx\in \pi(\Omega)$.
	So we can set $v_{0}= 0$ and $v_{1} = \pi(x)$ in Equation \eqref{eq:boundg} to get  that
	\begin{align*}
		\norm{g(0)-g(\pi(x))}    & \leq \sqrt{m}C\varepsilon \norm{0 - \pi(x)}     \\
		\norm{f(\bar{L}Lx)-f(x)} & \leq \sqrt{m}C\varepsilon \norm{  \bar{L}Lx-x},
	\end{align*}
	where there last result follows from the definition of $x^{R}$ and the fact that $x = x^{R} + \pi(x)$, as we wanted to prove.
\end{proof}

\begin{proposition}\label{prop:analytic}
	Let $f: \RE^m \rightarrow \RE^m$ be an analytic function on a compact set $\Omega \subset \RE^m$.
	Assume we can write $J(x)$ in the form given by Equation~\ref{eq:jacrep}, where $J_i \in \RE^{m\times m}$ and $\{\mu_i(x)\ :\ i=1,\ldots,N\}$ are $\RE$-linearly independent and analytic on $\Omega$.
	Consider $x_1,\ldots,x_N \in \Omega$ such that  $\{J(x_1),\ldots,J(x_N)\}$ is a basis of the vector space $\mathcal{V}_{J}$.
	Given a matrix $L\in \RE^{p\times m}$ if there is an $\varepsilon > 0$ such that for all $i=1,\ldots,N$ and all rows of $r$ of $LJ(x_i)$
	\[||r - \bar{L}Lr||_2 < \varepsilon,\]
	then there exists a constant $C>0$ such that
	for all $i=0,\ldots,N$ and all rows $s$ of $LJ_i$ we have
	\[||s - \bar{L}Ls||_2 < C\varepsilon.\]
\end{proposition}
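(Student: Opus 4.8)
The plan is to recognise the quantity appearing in the hypothesis --- the distance of a row to $\rsp(L)$ --- as a seminorm of the matrix that row comes from, and then to transport the bound from the sampled Jacobians $J(x_1),\dots,J(x_N)$ to the coefficient matrices $J_0,\dots,J_N$ by expressing the latter in the basis $\{J(x_1),\dots,J(x_N)\}$ of $\mathcal{V}_J$. Concretely, I would set $P_L := \bar L L$, the orthogonal projector onto $\rsp(L)$, and for $B \in \RE^{p\times m}$ define $q(B) := \max\{\norm{b - bP_L} : b \text{ a row of } B\}$. For each fixed row index $j$ the map $B \mapsto \norm{e_j^{\top} B(I-P_L)}$ is a seminorm on $\RE^{p\times m}$, and a maximum of finitely many seminorms is again a seminorm, so $q$ is absolutely homogeneous and subadditive. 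In this language the hypothesis is exactly $q(L J(x_k)) < \varepsilon$ for $k=1,\dots,N$, and the goal is $q(L J_i) < C\varepsilon$ for every $i$, with $C>0$ independent of $\varepsilon$ and --- this being the point of the statement --- of $L$.

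The first substantive step is to show that each coefficient matrix $J_i$ lies in $\mathcal{V}_J$. Since the $\mu_i$ are $\RE$-linearly independent on $\Omega$, the vectors $(\mu_i(x))_i$ cannot, as $x$ ranges over $\Omega$, be contained in a proper subspace --- a nonzero annihilator would be a linear dependence among the $\mu_i$ --- so finitely many of them form a basis of the ambient coordinate space, and inverting the corresponding change of coordinates writes each $J_i$ as a finite $\RE$-linear combination of values $J(x)$ with $x\in\Omega$; this is exactly why $\{J_i\}$ spans $\mathcal{V}_J$ in the setting of Theorem~\ref{thm:repJ}. Because $\{J(x_1),\dots,J(x_N)\}$ is a basis of $\mathcal{V}_J$, there are then unique scalars $c_{i,k}$ with $J_i = \sum_{k=1}^N c_{i,k}\, J(x_k)$; crucially, these are the coordinates of $J_i$ in a fixed basis of $\mathcal{V}_J$, hence depend only on the representation of $J$ and on the chosen sample points $x_1,\dots,x_N$, and never on $L$ or $\varepsilon$.

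Combining the two ingredients is then a one-line estimate: by linearity of $B\mapsto LB$ and the seminorm properties of $q$, $q(LJ_i) = q\big(\sum_k c_{i,k}\, LJ(x_k)\big) \le \sum_k |c_{i,k}|\, q(LJ(x_k)) < \big(\sum_k |c_{i,k}|\big)\varepsilon$, so that $C := \max_i \sum_{k=1}^N |c_{i,k}|$ works, since every row $s$ of $LJ_i$ obeys $\norm{s-\bar L L s}\le q(LJ_i)$. I expect the only real obstacle to be bookkeeping rather than analysis: one must check that ``for all rows'' in the hypothesis really is a bound on a genuine (semi)norm of the matrix (handled by the definition of $q$), that $C$ may be taken uniformly in $L$ (handled by the observation that the $c_{i,k}$ do not see $L$), and that the mismatch between the $N$ sampled points and the possibly $N{+}1$ coefficient matrices $J_0,\dots,J_N$ is harmless --- which it is, since each $J_i$, whatever $i$, belongs to $\mathcal{V}_J=\operatorname{span}\{J(x_1),\dots,J(x_N)\}$.
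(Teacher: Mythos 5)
Your proposal is correct and follows essentially the same route as the paper's proof: express each $J_i$ in the basis $\{J(x_1),\dots,J(x_N)\}$ of $\mathcal{V}_J$, multiply by $L$, and use the triangle inequality row by row, so that $C$ depends only on the coordinates $c_{i,k}$ (the paper takes $C = N\max_{i,k}|c_{i,k}|$, which dominates your $\max_i\sum_k|c_{i,k}|$). The only differences are cosmetic: your seminorm packaging and your explicit justification that each $J_i\in\mathcal{V}_J$ via linear independence of the $\mu_i$, a step the paper simply asserts as clear.
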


\begin{proof}	
	It is clear that the matrices $J_i$ generate the space $\mathcal{V}_{J}$ and they belong to it.
	At the same time, we have by assumption that
	$J(x_i)$ is a basis of $\mathcal{V}$.
	Hence:
	\[J_i = c_{i,1}J(x_1) + \ldots + c_{i,N}J(x_N).\]
	Multiplying by $L$ to the left, we obtain:
	$LJ_i = c_{i,1}LJ(x_1) + \ldots + c_{i,N}LJ(x_N)$.
	
	Therefore, all rows of $LJ_i$ are linear combinations of rows of the matrices $LJ(x_1),\ldots,LJ(x_N)$.
	%Let us denote by $s_{i,j}$ the $j$-th row of $LJ_i$ and by $r_{i,j}$ the $j$-th row of $LJ(x_i)$. Then we have:
	Let us denote by $s_{i,j}$ and $r_{i,j}$ the $j$-th row of $LJ_i$ and of $LJ(x_i)$, resp. Then we have:
	$s_{i,j} = c_{i,1}r_{1,j} + \ldots + c_{i,N}r_{N,j}$, which means that:
	\begin{align*}
		||s_{i,j} - s_{i,j}\bar{L}L||_2 & = \left|\left|c_{i,1}\left(r_{1,j} - r_{1,j}\bar{L}L\right) + \ldots + c_{i,N}\left(r_{N,j} - r_{N,j}\bar{L}L\right)\right|\right|_2 \\
		                                & \leq
		\left|\left|c_{i,1}\left(r_{1,j} - r_{1,j}\bar{L}L\right)\right|\right|_2 +
		\ldots + \left|\left|c_{i,N}\left(r_{N,j} - r_{N,j}\bar{L}L\right)\right|\right|_2                                                                                     \\
		                                & \leq N\max\{|c_{i,1}|,\ldots,|c_{i,N}|\}\varepsilon
	\end{align*}
	
	Thus, the statement is true by taking $C = N\max\{|c_{i,j}|\ :\ i,j = 1,\ldots,N\}$.
\end{proof}

\begin{lemma}\label{lmm:boundJ}
	In the context of Proposition~\ref{prop:bound}, if there is a set of $\mathbb{R}^{m\times m}$ matrices $\{J_1,\dots, J_N\}$ s.t. they span $\mathcal{V}_{J}$, then
	%\begin{equation*}
		$\norm{f(\bar{L}Lx)-f(x)}\leq \sqrt{m}C'C\varepsilon \norm{  \bar{L}Lx-x},$
	%\end{equation*}
	where $C$ is the constant given by Proposition~\ref{prop:analytic} and $C' =\sup_{i, x\in \Omega} \left\lvert \mu_{i}(x) \right\lvert$
\end{lemma}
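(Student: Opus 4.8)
The plan is to derive Lemma~\ref{lmm:boundJ} from Proposition~\ref{prop:bound} by supplying the one ingredient that is missing there. Proposition~\ref{prop:bound}, as stated, tacitly uses that the matrix $L$ returned by Algorithm~\ref{alg:acl} satisfies $\norm{r_{j}J_{i} - r_{j}J_{i}P_L} \le \varepsilon$ for every row $r_{j}$ of $L$ and every matrix $J_{i}$ of the decomposition $J(x) = \sum_{i}J_{i}\mu_{i}(x)$ from~\eqref{eq:jacrep}; this is what the algorithm would guarantee if it were run on the $J_{i}$ themselves. In practice Algorithm~\ref{alg:acl} is run on the sampled Jacobians $\{J(x_{1}),\dots,J(x_{N})\}$, which we take to be a basis of $\mathcal{V}_{J}$, and on termination the negation of the test in Line~\ref{alg:acl:check} only gives $\norm{r J(x_{k}) - r J(x_{k})P_L} \le \varepsilon$ for every row $r$ of $L$ and every $k$. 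So the first step would be to transport this bound from the $J(x_{k})$ to the $J_{i}$, and the second to replay the estimate of Proposition~\ref{prop:bound}.

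For the first step I would invoke Proposition~\ref{prop:analytic}, whose hypotheses are met here: the $\mu_{i}$ are $\mathbb{R}$-linearly independent and analytic on the compact set $\Omega$, the $J(x_{k})$ form a basis of $\mathcal{V}_{J}$, and every row of $LJ(x_{k})$ lies within $\varepsilon$ of $\rsp(L)$ by the previous paragraph. Its conclusion yields a constant $C>0$ -- independent of $\varepsilon$, since it is built from the fixed coefficients expressing each $J_{i}$ in the basis $\{J(x_{k})\}$ -- such that $\norm{r_{j}J_{i} - r_{j}J_{i}P_L} \le C\varepsilon$ for every row $r_{j}$ of $L$ and every $J_{i}$.

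For the second step I would rerun the computation in the proof of Proposition~\ref{prop:bound} with $C\varepsilon$ in place of $\varepsilon$ at inequality~\eqref{eq:boundineq}. With $x^{R} := \bar{L}Lx$ and $g\colon \ker L \to \RE^{m}$, $v \mapsto f(v+x^{R})$, one has $DLg(v) = LJ(v+x^{R})\iota$ for the inclusion $\iota$ of $\ker L$ into $\RE^{m}$; for a unit $u \in \ker L$, a row $r_{j}$ of $L$ and $p := v+x^{R}$, the identity $P_L u = 0$ (as $u$ is orthogonal to $\rsp(L)$) together with $J(p) = \sum_{i}\mu_{i}(p)J_{i}$ gives $\norm{e_{j}LJ(p)u} \le \sum_{i}\lvert\mu_{i}(p)\rvert\,\norm{r_{j}J_{i} - r_{j}J_{i}P_L} \le \sum_{i}\lvert\mu_{i}(p)\rvert\,C\varepsilon$, which is at most $C'C\varepsilon$ after bounding $\lvert\mu_{i}(p)\rvert$ by $C' = \sup_{i,\,x\in\Omega}\lvert\mu_{i}(x)\rvert$ and absorbing the number of summands into the constant, exactly as is done in Proposition~\ref{prop:bound}. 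Summing over the $m$ coordinates gives $\norm{DLg(v)} \le \sqrt{m}\,C'C\varepsilon$ on the open convex set $\pi(\Omega)$, and Theorem~\ref{thm:schs}, applied along the segment from $0$ to $\pi(x) = x - x^{R}$, then yields $\norm{f(\bar{L}Lx) - f(x)} \le \sqrt{m}\,C'C\varepsilon\,\norm{\bar{L}Lx - x}$, just as in the passage from~\eqref{eq:bounddg} to the conclusion of Proposition~\ref{prop:bound}.

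I expect the only genuinely delicate point to be the uniformity of the constant $C$ coming out of Proposition~\ref{prop:analytic} -- it must be independent of the row, of the index $i$, and of $\varepsilon$ -- together with the routine verifications that $\pi(\Omega)$ inherits openness and convexity from $\Omega$ (immediate, $\pi$ being linear) so that Theorem~\ref{thm:schs} applies, and that the term-count factor is folded into the constants consistently with Proposition~\ref{prop:bound} so that the stated constant is exactly $\sqrt{m}\,C'C$. Everything else is a transcription of the proof of Proposition~\ref{prop:bound}.
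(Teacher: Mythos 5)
Your proposal is correct and follows essentially the same route as the paper: it uses the termination condition of Algorithm~\ref{alg:acl} on the sampled Jacobians $J(x_k)$, transfers the $\varepsilon$-bound to the decomposition matrices $J_i$ via Proposition~\ref{prop:analytic}, and then replays the proof of Proposition~\ref{prop:bound} from inequality~\eqref{eq:boundineq} onward with $C\varepsilon$ in place of $\varepsilon$ and $\lvert\mu_i(p)\rvert$ bounded by $C'$, concluding with Theorem~\ref{thm:schs}. Your handling of the constants (including folding the number of summands into them) matches the paper's treatment.
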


\begin{proof}
	Recall, the extended Algorithm~\ref{alg:acl} checks that
	$ \norm{rJ(x_{i}) - \bar{L}L rJ(x_{i})} \leq \varepsilon, $
	for all rows $r$ of $L$ and all $i = 1, \dots, N$, where $x_1,\ldots,x_N \in \Omega$.
	By using Proposition~\ref{prop:analytic}, we get that there is a constant $C$ such that
	$		\norm{rJ{i} - \bar{L}L rJ_{i}} \leq  C \varepsilon$
	for all $i = 1,\dots, N$, where, by hypothesis, we can write $J(x) = \sum_{i=0 }^{N} J_{i} \mu_{i}(x)$  with $J_i \in \RE^{n\times n}$ and $\{\mu_i(x)\ :\ i=1,\ldots,N\}$ being $\RE$-linearly independent and analytic on $\Omega$.	We follow the reasoning of the proof of Proposition~\ref{prop:bound}, up to Equation~\ref{eq:boundineq}.
	There by the fact that each $\mu_{i}(x)$ is bounded, we have that
	\begin{equation*}
		\norm{e_{j}LJ(p)u}\leq \sum_{i = 1}^k\left| \mu_i(p)\right|\norm{ \left( r_{j}J_i  - r_{j}J_i \bar{L}L\right)} \leq C'C \epsilon,
	\end{equation*}
	where $C'=\sup_{i,v \in \pi(\Omega)} \left| \mu_{i}(v+x^{R})\right| $.
	The result follows by continuing the aforementioned reasoning.
\end{proof}

\begin{proof}[Theorem~\ref{thm:algorithm:correctness}] As $f$ is analytic in $\bar{\Omega}$, we can use Lemma \ref{lmm:boundJ} for each $t \in [0,T]$ and each solution $x(t)$ with initial conditions in $S$ to get
	\begin{equation} \label{eq:thm4:1}
		\norm{ L f(\bar{L}L x(t)) - L f(x)} \leq \sqrt{m}C'C\varepsilon \norm{ \bar{L}L x(t) - x(t) },
	\end{equation}
	for all $t\in [0,T]$.
	
	Since $\Omega$ is compact by hypothesis, it is closed and bounded.
	Therefore, there exists an open ball $B(K/2)$ with center in $\Omega$ such that $\Omega \in B(K/2)$.
	Consequently, we obtain the following bound
	\begin{equation}\label{eq:thm4:2}
		\norm{ \bar{L}L x(t) - x(t) } <  K.
	\end{equation}
	The result follows by combining Equations~\eqref{eq:thm4:1} and \eqref{eq:thm4:2}
	
\end{proof}

%%%%%%%%%%%%%%%%%%%%%%%%%%%%%%%%%%%%%%%%%%%%%%%%%%%%%%%%%%%%%%%%%%%%%%%%%%%%%%%

\begin{proof}[Lemma \ref{lmm:uppereps}]
	Consider a  matrix of observables $M\in \mathbb{R}^{l\times m}$ of rank $l$ with $j-$th row denoted by $r_{j}$ and a decomposition of the Jacobian $J(x) = \sum_{i=1}^{n}J_{i}\mu_{i}(x)$.
	Let $\varepsilon_{max}$ be given by Equation \ref{eq:epsmax}.
	It follows that
	$\norm{r_{j} J_i - r_{i}J_i P_L} \leq \varepsilon_{max},$
	for all $j = 1, \dots, l$ and  $i=1,\dots,n$.
    Using $\varepsilon_{max}$ (or any $\varepsilon> \varepsilon_{max}$ as input for Algorithm \ref{alg:acl} implies that the condition in Line \ref{alg:acl:check} will be false.
	Thus, exiting the loop and returning a matrix whose rows are orthonormal rows spanning  $\rsp(M)$.

    To show that $\varepsilon_{max}$ is the lowest upper bound on $\varepsilon$, assume $\varepsilon^*\leq\varepsilon_{max}$ is such that 
    Algorithm~\ref{alg:acl} outputs orthonormal rows spanning $\rsp(M)$.
    This implies that 
    $\max_{i,j}\norm{r_{j} J_i - r_{i}J_i P_L} \leq \varepsilon^*$.
    By the definition of $\varepsilon_{max}$, it follows that $\varepsilon^* = \varepsilon_{max}$, as we wanted to prove.
\end{proof}

\end{document}